\renewcommand{\thefootnote}{\fnsymbol{footnote}}
\title{Can Users Fix Algorithms? A Game-Theoretic Analysis of Collective Content Amplification in Recommender Systems}
\author{Ekaterina Fedorova\footnotemark[1] \and Madeline Kitch\footnotemark[2] \and Chara Podimata\footnotemark[3]}
\date{Feb 2026}
\begin{document}
\maketitle
\footnotetext[1]{Massachusetts Institute of Technology, \texttt{fedorova@mit.edu}}
\footnotetext[2]{Carnegie Mellon University, \texttt{mckitch@andrew.cmu.edu}}
\footnotetext[3]{Massachusetts Institute of Technology, \texttt{podimata@mit.edu}}
\renewcommand{\thefootnote}{\arabic{footnote}}

\begin{abstract}
Users of social media platforms based on recommendation systems ({RecSys}) (e.g. TikTok, X, YouTube) \emph{strategically} interact with platform content to influence future recommendations. On some such platforms, users have been documented to form large-scale grassroots movements encouraging others to purposefully interact with algorithmically suppressed content in order to \emph{counteractively} ``boost'' its recommendation. However, despite widespread documentation of this phenomenon, there is little theoretical work analyzing its impact on the platform or users themselves. We study a game between users and a RecSys, where users (potentially strategically) interact with the content available to them, and the RecSys---limited by preference learning ability---provides each user her approximately most-preferred item. We compare recommendations and social welfare when users interact with content according to their personal interests and when a collective of users intentionally interacts with an otherwise suppressed item. We provide sufficient conditions to ensure a \emph{pareto} improvement in recommendations and \emph{strict} increases in user social welfare under collective interaction, and provide a robust algorithm to find an effective collective strategy. Interestingly, despite the intended algorithmic protest of these movements, we show that for commonly assumed recommender utility functions, effective collective strategies also improve the utility of the RecSys. Our theoretical analysis is complemented by empirical results of effective collective interaction strategies on the GoodReads dataset and an online survey on how real-world users attempt to influence others' recommendations on RecSys platforms. Our findings examine how and when platforms' recommendation algorithms may incentivize users to collectivize and interact with content in algorithmic protest as well as what this collectivization means for the platform.
\end{abstract}
\section{Introduction}\label{sec:intro}
Social networking and media platforms such as TikTok, Instagram, and YouTube learn about users by tracking their interactions with content via likes, comments, views, etc. Using these interactions, they can infer user preferences and provide new content recommendations. From the user perspective, while the actual algorithm is a black box, the basic idea of how it functions is well-known \cite{devito_folk, eslami_facebook_folk_theory}. Given their heuristic knowledge, some users strategically interact with content to purposefully tailor the recommendations they receive \cite{cen2024measuringstrategizationrecommendationusers, karizat_algo_resistance, lgbtq_tiktok,haupt_recommending_2023}. Consider, for example, a user who enjoys a type of niche content. Though this content is not generally popular, the user would like it to feature very prominently in their recommended feed. Knowing that engaging with content of this type will likely cause it to be recommended again in the future, the user purposefully likes, comments, and watches this type of content \emph{more often} than they personally would like in the moment. From a theory point of view and to analyze the dynamics that emerge from this user adaptation, there are several game theoretic models of this self-interested strategic behavior \cite{haupt_recommending_2023,cen_user_2024}.

However, existing modeling of strategic user interactions do not acknowledge that recommendations are informed by both a user's own interactions with platform content and the interactions of \emph{other} users with said content. \emph{Collaborative filtering} (CF), a widely-adopted recommendation methodology \citep{cf_popular_evaluate, schafer_collaborative_2007}, allows the recommender to infer the preferences of one user from another because users and contents (aka \emph{items}) are assumed to be representable as vectors in respective user or item latent spaces. This enables the following association: when one user interacts with a piece of content, the recommender may predict how much interaction the item would elicit from other users based on the similarity of users' vectors in latent space. Users also have a heuristic understanding of how their interactions impact others' recommendations and some leverage this information to interact strategically to impact \emph{other users'} platform experiences~\cite{karizat_algo_resistance, eslami_facebook_folk_theory}. While the exact prevalence of such \emph{interpersonal} content interaction strategies is still unknown, in our survey of 100 recommendation platform users on Prolific, 32 had intentionally interacted or avoided interaction with content to influence other users' feeds (see Appendix \ref{app:survey} for more details). Interpersonal content interaction strategies are thus likely widespread, but at the same time, the constitute an under-studied component to understanding the outcome and impact of algorithmic content recommendation.

While there are many reasons a user may want to impact others' recommended feeds, interpersonal content interaction strategies have been particularly relevant to users who want to combat algorithmic suppression and injustice. Indeed,~\citet{karizat_algo_resistance} document that users believe certain types of content are suppressed by the algorithm and thus activist 
users \emph{intentionally} interact with such content to ``boost'' it via algorithmic recommendation to others. On an individual level, it is unlikely that \emph{one} user attempting to boost the popularity of suppressed content would impact algorithmic recommendations on a large scale. However, as many marginalized user and creator groups report recommendation inequity or suppression \cite{melo_booktok, camille_tiktok_vendetta, biddle_tiktok_2020}, users have organized \emph{large-scale grassroots movements} encouraging other users to purposefully engage with the content of those who are suppressed by the algorithm \cite{tiktok_blm,mccall_booktoks_2022}. Such movements have the potential to have a significant impact on the outcome of algorithmic content recommendation and the platform as a whole. Yet, existing game-theoretic recommendation models of self-interested strategic user behavior cannot account for the relevant inter-user effects, making it difficult to understand if/when users' collective efforts are actually theoretically justified and what guarantees the movements provide for user or platform welfare. 

To analyze both the incentives users have to participate in these movements and the resulting recommendations from the system, we present a novel game-theoretic model of \emph{collaborative filtering-based} algorithmic recommendation, which we use to answer two questions:
\begin{center}
\begin{enumerate}
    \item[{\bf (Q1)}] \emph{Effectiveness and sufficient conditions}: Are the documented collective interaction strategies of real-world movements theoretically effective for users seeking to improve the system and under what conditions are these movements successful?
    \item[{\bf (Q2)}] \emph{Platform/mechanism design implications}: Given that some collection of users take part in these movements, how is the platform's welfare impacted? Is it poor design to incentivize collective user strategies on the part of the recommender?
\end{enumerate}
\end{center}

Within our model of a collaborative-filtering recommendation system (RecSys), we compare the recommendations users receive when they interact with content according to their \emph{personal} interest in the item to those they receive when some users form a \emph{collective movement} to purposefully interact with a chosen suppressed type of content. We find that when there is a minority group of users whose preferences are \emph{sufficiently disparate} from those of a mainstream group, a collective can improve social welfare and recommendations. These results answer {\bf Q1} in the affirmative and provide formal conditions under which the movements are effective. 

The implication of movement effectiveness is that as long as relevant users are even minimally activist, 
the recommender implicitly incentivizes a collective to form and interact with content in a way that does not perfectly match their \emph{personal} preferences. And though this means users would \emph{not} interact with the algorithm \emph{truthfully} (according to personal preference), we highlight that under certain platform utility functions, since recommendations are improved and users are forced to increase their platform engagement (often allowing increased ad revenue for the platform), such collective movements---despite their algorithmic protest nature---actually also benefit the platform.

\subsection{Our Contributions}

\xhdr{Model.} In Section~\ref{sec:model}, we formally present our model of a CF RecSys collective strategic interactions. The model is an abstraction of an \emph{online matrix completion} RecSys protocol. In an online matrix completion (MC) setting, at each round, a recommender (equivalently, the platform, principal, or learner) randomly queries a selection of ratings, which are the numerical formalization of interactions (e.g., likes, comments, views, etc) $m$ users may have with $n$ items. A standard representation of this process is an $m \times n$ ratings matrix whose entries are slowly filled-in over rounds of recommendation exploration. Upon sufficient exploration, the recommender estimates a complete ratings matrix using a best low-rank approximation of the partially-known ratings matrix and uses this to provide each user their top item(s). However, as we prove formally, this process can pose issues for recommendations to a minority group whose preferences are significantly different the the majority. In particular, if just a few key ratings are not queried, then minority group preferences might not be learned at all. Unfortunately, rank-minimization is an NP-hard problem and generally not tractable to analyze. Thus, we present our model, which condenses the \emph{online} MC protocol into an abstracted \emph{one-round low-rank approximation} multi-agent game between users and the RecSys.

The model is built around a transformation across three matrices: $\bR^\star \rightarrow \tbR \rightarrow \widehat{\bR}$. $\bR^\star \in \R^{m\times n}_{\geq 0}$ is an unknown true \emph{personal} preference matrix, whose entries, $r^\star_{u,i}$ represent how much personal interest each of $m$ users have for $n$ items. However, users can choose to strategically interact with items in a way that does not align with their personal interest, which is represented as $\tbR \in \R^{m \times n}_{\geq 0}$, the revealed preference matrix. This is the information that gets passed onto the learner (it), but it, as in online MC, cannot actually know/learn every rating; we encompass these issues under a variable called the \emph{exploration limit}. The learner's item recommendation to each user must instead be made using a particular low-rank approximation of $\tbR$, called $\widehat{\bR}$, which is a function of $\tbR$ and the exploration limit. %
We use the \emph{social welfare} to capture how well the RecSys serves users.

To our knowledge, our model is the first tractable game-theoretic model created to study strategic users leveraging the inter-user rating effects of real-world RecSys; as such, it is an important contribution to the perspective of recommender systems as multi-agent games more broadly. While we use our model for the specific purpose of analyzing users' attempts at coordinated amplification of marginalized content, it is of independent interest for the analysis of many other cases in which users strategically interact with content to affect others' feeds. For example, the multi-agent game perspective is necessary to understand how bots or otherwise coordinated mass negative reviewing may impact other users' recommendations (see discussion in Section \ref{sec:discussion}).

\xhdr{Informal results: Users.} In Section \ref{sec:users} we formally study {\bf Q1} and show that when there is \textbf{(1)} a ``majority'' class of users who like \emph{``popular''} items; \textbf{(2)} a ``minority'' class who like \emph{``niche''} items, and \textbf{(3)} limited RecSys exploration, it may result in too little information learned by the RecSys about minority preferences. This results in good item recommendations for the majority, but popular item recommendations to the minority, who would prefer the niche items. However, if a collective of majority users purposefully interact with niche content, they may force the learner to be sensitive to minority preferences and strictly improve recommendations, thus answering {\bf Q1} in the affirmative.

To show that under conditions \textbf{(1)}-\textbf{(3)} recommendations and user welfare is poor, we represent \textbf{(1)} and \textbf{(2)} as a setting within our model where the true personal preference matrix is what we call a \emph{majority-minority matrix}. This is a matrix such that the majority and minority users prefer completely different items, meaning we can re-order items and users to form a block matrix with a majority user- and a minority user-block. The majority is called such because it holds greater weight in the system: its block has greater singular values than those of the minority block. In our model, when users submit ratings for items truthfully to the majority-minority personal preferences, the RecSys provides a top item to each user using the \emph{rank-reduced version} of the majority-minority matrix. Hence, for a whole range of exploration limits, the ratings information that is reduced out will be exactly the minority users' preferences because their preferences are linearly independent from the majority and they make up the smallest singular values of the matrix. Thus, the platform will be unable to provide good recommendations to the minority. From the online MC setting perspective, the minority users are ``harder to learn'' because they form a relatively small (information-wise) part of the matrix and so if too few minority user-item pairs are queried, the estimated complete matrix will zero-out minority user preferences for niche items.

Next, we provide sufficient conditions such that a collective movement of purposeful interaction with a minority item, improves this poor state of recommendation and welfare. Choosing an item and corresponding minority user subgroup whom we call ``picky'' that would be \emph{especially} hard for the RecSys to learn otherwise, a collective of majority users uprates the picky item by a collaboratively chosen value, $\eta$, in effort to improve recommendations to the particularly under-served users. As long as $\eta$ is enough to throw the largest singular value of the minority matrix above the exploration limit, but not so large that the participating majority users start getting mistaken for minorities, the majority collective is essentially able to artificially increase the power of the minority in the system such that recommendations improve and welfare increases. As we show in Appendix~\ref{app:model2}, these results hold beyond majority-minority matrices, even when the majority and minority groups are not disjoint in their preferences. Finally, we provide an algorithm that, with collaboration and information sharing, outputs an $\eta$ (or $0$ if the conditions are not satisfiable) for a given collective to use to guarantee improved recommendations.

\xhdr{Informal results: Platform.} In Section \ref{sec:learner}, we consider the perspective of the RecSys and analyze {\bf Q2} on the implications of user collective action for the platform. Notably, whenever the conditions we present in Section \ref{sec:users} can be satisfied, it means that forming a collective and strategically increasing interaction yields better results for users than simply interacting with items according to personal preferences. As such, in these cases, there is incentive for users to follow this collective rating strategy as long as the users of an effective collective care even slightly about the welfare of minority users. 
Naively, through a mechanism design lens, we may say that this is not a very good mechanism because there is a non-truthful strategy that dominates the truthful one. And through a social lens, recall that these movements are, for users in the real-world, instances of algorithmic resistance or protest to a system they perceive as suppressing content. While both of these intuitions suggest that the RecSys would be better off designing a different system that would not create such incentive for collective strategization, we take a different approach and prove that designing a new mechanism is not necessarily best for the learner.

First, we analyze a very natural learner utility function: one based in \emph{personalization accuracy}. That is, the learner is most accurate if it recommends the highest preferred item (according to personal interest) to each user and the utility function itself is simply the sum of the personal interest each user has for her respective recommendation. Notably, for such a learner, given all that matters is good recommendations, effective collective strategic rating improves the learner's welfare as it yields a Pareto improvement in recommendations. %

More interesting to consider is the ``engagement'' recommender. A common assumption in RecSys modeling and construction is that recommender maximizes \emph{engagement} (e.g., \cite{engagement_maxxing}). Engagement, like interaction, refers to the actions users have with items (e.g., likes, comments, views, etc). Intuitively, inducing greater engagement is good for a RecSys as it is analogous to having more space on which to sell ads. In our setting, a natural engagement-based utility is simply the sum of the absolute value of all \emph{realized} ratings. As such, an engagement-based recommender generally wants to inflate ratings (equivalently, the interactions users have) whether those interactions line up with user's true personal preferences or not. Thus, while the collective strategy is meant to be an act \emph{against} the RecSys in the eyes of the users, in this regime, it actually increases its welfare beyond the personal preference baseline.

\xhdr{Empirical validation.}
In \Cref{sec:experiment} we evaluate \textbf{Q1} and \textbf{Q2} on real data. Inspired by the real movements in book communities on TikTok where users have organized to increase their interactions with marginalized authors' content to fight algorithmic suppression~\cite{mccall_booktoks_2022}, we study collective action on the Goodreads user/book interaction dataset~\cite{goodreads1, goodreads2}. As a baseline, we do supervised fine-tuning on the Qwen2.5-1.5B-Instruct base model~\cite{qwen2025qwen25technicalreport} to train an LLM-based recommender that suggests which 2015 young-adult (YA) books users might be interested based on their true interactions with previous years' YA books. To simulate varying degrees of collective action, we train 3 additional models using the same method on strategically perturbed data. Choosing $80\%$ of users who are primarily interested in the $30$ most popular books to be our collective and $10$ niche book/author pairs to be the target of action, each collective user does the following: (1) adds one of the target books to their shelf, representing their boosting of the book directly; (2) adds an interaction with a pre-2015 book by $j$ target authors to represent boosting the authors. To create the three perturbed datasets on which we train the models representing collective action, we vary $j$ across $1, 3,$ and $7$. 

Our findings validate our theoretical results: despite not perfectly meeting the theoretical assumptions in the more realistic setting, the baseline algorithm primarily recommends popular books/authors to non-mainstream users who would instead prefer niche content. However, the the recommendations of the targeted niche users can be significantly improved by our simulated collective strategic interaction. Further, we find that, despite focusing on a very large collective, the model's overall accuracy is very minimally impacted.

\subsection{Related works} \label{sec:related_work}
Our work connects several diverse threads. Our focus on users' collective resistance to RecSys content suppression is inspired by the Human Computer Interaction (HCI) and Algorithmic Collective Action (ACA) literatures. Our model is informed by standard RecSys modeling via MC together with the broader paradigm of strategic learning. We outline each thread below.

\xhdr{HCI and Algorithmic Collective Action.} HCI researchers have long studied how people form mental models of algorithms---often described as \emph{folk theories}---to make sense of algorithmic systems~\citep{mental_models_chapter_2003, gelman_concepts_2011, eslami_facebook_folk_theory, self-present_folk}, and have conducted numerous interview-based studies about users' experiences with recommendation algorithms~\citep{salehi_youtube, lgbtq_tiktok, tfem_folk, queer_ad_targeting, omhc_folk, camille_tiktok_vendetta}. Across these studies and in a recent large-scale experiment \cite{cen2024measuringstrategizationrecommendationusers}, results indicate that people use folk theories to \emph{strategically} interact (e.g., via liking, commenting) with platform content to tailor their \emph{personal} recommendation feeds. Importantly, strategic interaction is not always purely self-interested. \citet{karizat_algo_resistance}, for example, document the folk theorization that the algorithm suppresses marginalized users' content, along with resulting user strategies intended to counteract such suppression. Similar behavior has also been reported in grassroots movements supporting BLM \cite{tiktok_blm} and in communities amplifying marginalized authors \cite{mccall_booktoks_2022}. Our theoretical model serves as a mathematical proof-of-concept for these behaviors and for the system-level dynamics that can arise from them.

An emerging, primarily theoretical, literature is also directly relevant to our interest in non-self-interested strategies: Algorithmic Collective Action (ACA). First introduced by \citet{hardt_aca}, ACA studies settings in which agents coordinate their actions to influence or manipulate a learning algorithm. Subsequent work has expanded the agent-side modeling assumptions, including multiple and heterogeneous collectives~\cite{karan_aca_two_groups} and richer (e.g., combinatorial) action spaces~\cite{sigg_decline_now}. Other work complicates the learner's problem by considering, e.g., differentially private learning~\cite{solanki2025crowdingnoisealgorithmiccollective} or distributionally robust optimization~\cite{bendov_aca_learning_algos}. Our contribution is to study ACA in the specific setting of RecSys, which to our knowledge has not yet been formalized. On the empirical side, \citet{baumann_aca_playlists} examine a closely related phenomenon by simulating collective playlist reordering to promote songs on platforms such as Spotify. Interpreting playlist position as an implicit rating signal, our model can be viewed as providing theoretical grounding for the dynamics observed in their simulations.

\xhdr{Matrix Completion and RecSys Modeling.} We abstract a class of collaborative filtering (CF) recommendation algorithms via online matrix completion MC. In CF algorithms, estimates of a user's preferences depend on their past ratings as well as the ratings of other users~\cite{schafer_collaborative_2007, koren_advances_2021, su_survey_2009}. Capturing this dependence is important for our setting, since we study how a collective's content-interaction strategies can affect \emph{others'} feeds. In MC, a matrix represents user-item preferences: after observing a subset of entries, the remaining unknown preferences are imputed using a low-rank approximation. For tractability, we model MC at a higher level of abstraction, but there is a large technical literature proving theoretical guarantees in this setting~\cite{candes_power_2010, candes_exact_2008, recht_simpler_2011, candes_matrix_2010, keshavan_matrix_2009, baby_online_2024, jain_online_2023}.

There is a substantial body of work modeling recommender systems through game-theoretic and economic lenses. Two papers in particular study strategic (i.e., self interested) users.~\citet{haupt_recommending_2023} and~\citet{cen_user_2024} model interactions between a single user and a RecSys as Stackelberg games, focusing respectively on unfair equilibria and on trustworthy algorithms. Our model differs by studying a multi-agent game and incentives beyond self-interest. Other work derives game-theoretic insights without explicitly modeling strategic users. \citet{reachability_dean} and \citet{guo_stereotyping_2021} study settings in which some items, despite being present in the system, would never be recommended to any user type. We obtain a similar outcome and show that it can be ameliorated by collective rating strategies. \citet{peng2023reconciling} analyze the accuracy-diversity trade-off as it relates to users' consumption constraints; in contrast, our results focus on top-1 recommendation under a fixed preference matrix and do not impose user cost constraints, so this trade-off does not arise for our learner. \citet{pca_fairness_liu} study an item-based fairness measure of recommender systems based in principal component analysis (PCA) as the limit of users and items goes to infinity, while we also study a dimensionality-reducing recommender, we focus on finite settings and user-based measures of welfare (more details in \Cref{app:connection_fair_pca}).

Another line of work focuses on welfare objectives or mechanism design from the learner's perspective. \citet{engagement_maxxing} study optimal information design to keep a user engaged with the platform; \citet{ben-porat_attrition_2022} model the learner as a multi-armed bandit seeking to avoid user attrition; and \citet{churn} maximize engagement to reduce churn. While we do not focus on the recommender's perspective as directly as these works, we briefly discuss engagement maximization as a candidate learner utility in Section~\ref{sec:learner}. Additionally, several papers analyze strategic \emph{content creators}, characterizing equilibria in content-creation marketplaces~\cite{supplyside, ben-porat_content_provider_2020}, modeling incentives for content quality~\cite{clickbait}, designing algorithms to incentivize high-quality content~\cite{hu2023incentivizing}, and designing fair and stable recommenders under creator strategization~\cite{ben-porat_game-theoretic_2018}. While content creators are a type of user, our focus is on interactions with \emph{existing} items.

Finally, our users' activist rating behavior in RecSys resembles public-spirited agents in voting settings~\cite{bailey_distort_ec, public_spirit3, public_sprit2}. As in that model, agents intentionally distort reports to benefit others; however, our setting differs fundamentally because each user receives an individualized recommendation (rather than a single collective outcome such as an election winner) and the strategy space is continuous. As a result, the public-spirited voting regime is not directly comparable to ours.

\xhdr{Strategic learning.}
Our work relates to the broad literature on strategic agents in learning and performative prediction~\cite{pp} (see \citet{podimata2025incentive} for a review). The classic strategic classification setting, first introduced by \citet{hardt_strategic_2016}, considers a learner who publishes a classifier to agents whose utilities depend on their resulting classifications. A large body of subsequent work takes a learner-centric perspective, developing algorithms that are robust to agent strategization, including incentive-aware learning~\cite{hardt_strategic_2016, levanon_strategic_2021, chen_learning_2020}, truthfulness~\cite{harris_strategic_2023}, and strategyproofness~\cite{ball_scoring_2024}.

Although we address learner welfare in Section~\ref{sec:learner}, the strand of strategic classification most relevant to our contribution adopts a user-centric perspective, particularly in its treatment of fairness. \citet{estornell_fair_strat_class} highlights that otherwise fair classifiers may not be fair in strategic settings. Several works study the unequal outcomes resulting from strategically robust algorithms: \citet{hu_disparate_2018} with misclassification rates, \citet{milli2019social} with manipulation effort, \citet{anticiptating_gaming} with true positive rates and acceptance rates, \citet{maggie_penn_strat_class} with ``aligned incentives'' to manipulate, and \cite{diana_minimax_fairness} with minimax group error. In general, all analyze learner-side interventions and their effects on outcomes across user groups. By contrast, we study a \emph{user}-side intervention: we show that activist users are incentivized to act in this way and that, under our model, no user is made worse off.

\section{Model \& Preliminaries}\label{sec:model}
In this section, we propose a tractable multi-agent game that captures the dynamics of the interplay between users and the real-world CF recommendation algorithms. We begin by providing some background for the reader on Online Matrix Completion (Section~\ref{sec:MC}), which is the classical paradigm at the heart of CF algorithms. Then, in Section~\ref{sec:abstraction}, we present a generalized abstraction of Online Matrix Completion, and leverage it to instantiate our model for the user-platform interactions.

\xhdr{Notation.} Matrices are capital and bolded (i.e., $\mathbf{X} \in \R^{m\times n}$), vectors are lower-case and bolded (i.e., $\mathbf{z}\in \R^{d}$), and one-dimensional variables are lower-case (i.e., $y \in \R$). Of a matrix, $\mathbf{X}$, the $i$th \emph{column} is $\mathbf{X}_i$ (an exception to lowercase vectors), the $j$th \emph{row} is $\mathbf{x}_j$, and the $j$th row, $i$th column element is $x_{j,i}$; $\sigma_k(\mathbf{X})$ is the $k$-th largest singular value. For matrix norms, $\|\mathbf{X}\|_1:= \max_{i \in [n]}\sum_{u \in [m]}r_{u,i}$ $\|\mathbf{X}\|_2:= \sigma_1(\mathbf{X})$ Sets are capital calligraphic letters (i.e., $\calU$), and their complement is $\calU^{C}$. To assist the reader, a notation table can be found in Appendix \ref{app:notation_table}.

\subsection{Background: Online Matrix Completion}\label{sec:MC}

Online Matrix Completion (MC) is the characterization of the standard MC setting as an online learning problem in which a recommender (equivalently, learner, or platform) learns $m$ users' preferences over $n$ items over a series of rounds~\cite{baby_online_2024, jain_online_2023}. 
Users' preferences are represented as an unknown $m\times n$ \emph{``ratings''} matrix, $\bR^\star\in \R^{m\times n}$, wherein each entry, $r^\star_{u,i}$ s.t. $u \in [m], i \in [n]$, is user $u$'s \emph{rating} of item $i$. We use ``rating'' to encompass the numerical formalization of all the interactions user $u$ has with item $i$ such as likes, comments, views, etc together with how much the user inherently likes an item. At each round $t$, the recommender must [randomly] select some collection of $(u,i)$ pairs whose matrix entries---the analogous collection of $r^\star_{u,i}$---will be revealed. In the language of RecSys, for every selected $(u,i)$, the recommender provides user $u$ item $i$ and sees user $u$'s rating of item $i$. Thus, as rounds continue, the recommender develops an increasingly completed ratings matrix whose completed entries match that of the unknown $\mathbf{R}^\star$. We elaborate on and illustrate this process using matrix notation in Appendix \ref{app:MC_illustration}.

The recommender cannot recommend \emph{every} item to \emph{every} user and complete the whole matrix $\bR^\star$; indeed, there are millions of users and billions of items available on modern platforms. Thus, the standard MC setting assumes that $\bR^\star$ is an unknown \emph{low-rank} matrix~\cite{nguyen_mc_survey, panish2025matrixcompletionsurveytheory, li2019surveymatrixcompletionperspective} and uses this low-rank property to estimate a $\widehat{\bR}$, such that $\widehat{\bR}$ retains all the known entries. Then  $\widehat{\bR}$ can be used to estimate what the top item(s) are for each user, without querying every single user-item rating. This low-rank estimation with known ratings is what gives online MC its collaborative filtering properties; a user's \emph{estimated} ratings are dependent not only on their own known ratings, but also on the ratings of everyone else. To formalize how matrix $\widehat{\bR}$ is calculated, we define the set of $(u,i)$ pairs that have been observed by the learner, $\Omega$:
\begin{definition}[Observed pairs, $\Omega$] $\Omega :=\{(u,i) \in [m] \times [n]: r^\star_{u,i}\text{ has been observed by the learner}\}.$
\end{definition}
The recommender finds the lowest rank matrix that constrains itself to the known entries, which is only a linear constraint. We say that $\widehat{\bR}$ is a solution to the following optimization problem\footnote{Computationally, there are various convex approximations of the objective that are used in the literature, such as nuclear norm~\cite{candes_exact_2008, recht_simpler_2011, candes_power_2010} or max-norm~\cite{fang_max-norm_2018, cai_max-norm_2013}, but we analyze the rank minimization version to avoid the specificity of an approximation.}:
\begin{equation} \label{eqn:rank_min}
    \begin{aligned}
        \text{minimize}_{\mathbf{X}\in \R^{m\times n}} &\quad \mathrm{rank}(\mathbf{X}) \\
        \text{subject to} &\quad x_{u,i} = r^\star_{u,i} \quad \forall (u,i) \in \Omega.
    \end{aligned}
\end{equation}
We have now established how, given some amount of exploration, an estimated preference matrix, $\widehat{\bR}$, is calculated, but we have purposefully eluded the specifics of how $(u,i)$ are randomly selected in each round (e.g., how many, with or without replacement, etc.), and the stopping condition at which exploration ends and $\widehat{\bR}$ is computed. All these aspects can vary across the literature and we want to avoid this level of specificity for our abstraction. Instead, we point out a general observation about rank-minimization that only relies on an assumption about the observed ratings, $\Omega$. When there exists a minority group whose preferences are significantly distinct from the mainstream, the group's preferences may be ``harder to learn'' and thus, due to randomness or poor exploration, these users may be poorly served. Formally, we prove this statement using \emph{majority-minority} preferences.

\begin{definition}[Majority-minority matrix]\label{def:majority-minority-matrix}
    A matrix $\bR \in \R_{\geq 0}^{m\times n}$ is a \emph{majority-minority matrix} if there exists a partition of users $\calUmaj \cup \calUmin = [m]$ (where $\calUmaj, \calUmin \subseteq [m]$) and a partition of items $\calImaj \cup \calImin = [n]$ (where $\calImaj, \calImin \subseteq [m]$) such that if $u \in \calUmaj$ and $i \in \calImin$ or  $u' \in \calUmin$ and $i' \in \calImaj$, then $r_{u,i}, r_{u'i'}=0$. Further, no user has ratings of all $0$s (i.e., $\forall u \in [m]$, $\sum_{i \in [n]}r_{u,i}>0$). 
\end{definition}

Put simply, a majority-minority preference structure is one in which it is possible, through re-ordering of users and items, to form the matrix into a 2-block-diagonal structure. Note that the items in $\calImin$ are not necessarily liked by \emph{all} minority users. Now under the condition that some key elements (of the minority group's preferences) have not been queried, minority preferences are potentially\footnote{We focus on the \emph{sparsest} solution to \eqref{eqn:rank_min}, though there may be many solutions (Appendix \ref{app:less_sparse}), all solutions can be reduced to the sparser one of ~\Cref{thm:rank_min_bad} and sparser matrices are desirable for storage and computational reasons.} not learned at all. This is similar to the concept of matrix \emph{coherence} (see~\Cref{app:MC_coherence} for discussion), though this is not necessary to understand for our work. %

\begin{restatable}[Estimated Minority Item Ratings are Zero]{proposition}{RankMinBad}
\label{thm:rank_min_bad}
Assume that $\bf{R}^\star$ satisfies Definition~\ref{def:majority-minority-matrix}, and
that for any $(u,i) \in \Omega$ s.t. $u \in \calUmin$ and $i \in \calImin$, $r^\star_{u,i} = 0$.
Then, the sparsest $\widehat{\bR}$ solving Equation~\eqref{eqn:rank_min} is such that
$\hat{r}_{u,i} = 0$ for all $u\in [m]$ and $i \in \calImin$.
\end{restatable}

The proof is simple and comes from the linear independence between majority and minority user groups; we defer its formal statement to Appendix \ref{app:MC_proofs}. The intuitive meaning of Proposition~\ref{thm:rank_min_bad} is much like what the collective movement of users we model attempts to combat: if the estimated preference matrix leaves as zero all the minority users' actual favorite items, then these will not be recommended to any users in the future, thus creating a problem of suppressed minority content. 

In this stated form, Proposition~\ref{thm:rank_min_bad} relies on a lack of exploration and a discrete minority group, both of which may seem overly strong at first glance. On the first assumption, notice that if the minority group is relatively small, in a randomly ordered version of the matrix, it is not unreasonable to imagine the key elements are unexplored. Examples of this in Appendix \ref{app:MC_ass} highlight that this could be as few as 2 of 100 elements. For the second assumption, we chose to state the Proposition in this strong assumption format for cleanliness and to avoid probabilistic statements that would require commitment to specific random selection processes of how entries are queried at each round. More generally, it is easy to see that---in even looser conditions---the same (or similar) outcomes can happen. For example, if the groups' preferences are not completely disjoint, but some users of each group only have sparse preferences for the other group's preferred items, then if those ratings are randomly missed, the same result arises. We also illustrate this in Appendix~\ref{app:MC_ass}.

To summarize, in an online MC recommendation setting, when a minority group's preferences differ from the mainstream preferences, minority preferences can be harder to learn, thus leading to a suppression of the minority content and poor recommendations. This is a direct byproduct of the minority group being small, but different from the majority in a way that the recommender ignores. The goal of our paper is to study how collective \emph{intentional} interaction with the minority items could alleviate this. Intuitively, in the MC setting, one could see this being effective as, if the collective of majority users is large enough, it breaks linear independence between the minority and majority with high probability and no longer makes setting all estimated minority item ratings to 0, an obviously rank-minimizing choice. Unfortunately, exact rank-minimization is NP-hard and the online nature of the setting introduces a repeated game dynamic that is difficult to analyze, especially with many agents. Thus, to answer the questions posed about the real-world collective algorithmic protests in Section \ref{sec:intro}, we present an abstracted model that collapses the sequential MC game into a one-round low-rank approximation. In the following section, we present this tractable model, highlighting the connection to the online MC version. The tractable model yields the same negative outcomes as Proposition~\ref{thm:rank_min_bad}, but enables us to study user strategy without the NP-hardness of an exact rank-minimization objective and choice of specific MC algorithm.

\subsection{Abstraction}\label{sec:abstraction}
\subsubsection{Model summary} To get around the intractabilities of NP-hardness and repeated game analysis in standard online MC, we instead analyze an analogous one-round low-rank-minimization-based multi-agent game. In particular, our model is built upon a transformation across three matrices: $\bR^\star \rightarrow \tbR \rightarrow \widehat{\bR}$. Like before, the unknown true personal preference matrix is $\bR^\star \in \R^{m\times n}_{\geq 0}$. To parallel the evolving partially complete matrices of online MC but in a tractable way, we use $\tbR \in \R^{m \times n}_{\geq 0}$, a single complete \emph{revealed preference matrix}, which is then transformed via low-rank approximation into $\widehat{\bR}$, which represents the information a learner could have reasonably learned from the full $\tbR$. %
 In the remainder of the section, we detail the full process of $\bR^\star \rightarrow \tbR \rightarrow \widehat{\bR}$.

\subsubsection{Learner's protocol (Protocol~\ref{prot:learner})}
The learner (it) wants to recommend each agent (she, when referred to individually) her top item\footnote{We focus on top-1 recommendation for simplicity of exposition. Our results go through for top-$k$ recommendations too (for a range of $k$ values); we present a longer discussion in Section \ref{sec:discussion}.} according to her unknown personal preferences. We construct the tractable abstracted protocol of online MC by collapsing multiple rounds of interaction into a \emph{low-rank approximation} on a complete revealed preference matrix $\tbR$, rather than \emph{rank-minimization} constrained to a randomly partially completed matrix formed over the multiple rounds. $\widehat{\bR}$ is this low-rank approximation of $\tbR$. The learner never directly sees $\tbR$, as $\tbR$ reflects all $m\times n$ reported ratings, which would be unrealistic to have access to. Instead, the learner uses the low-rank $\widehat{\bR}$, which represents the estimated preference matrix it could have learned from repeated rounds of agents' item interactions, to provide top-1 recommendations to each user. %

\begin{protocol}[t!]
\caption{Learner's protocol}\label{prot:learner}
\begin{algorithmic}
\State \textsc{Learning phase:}
\State \hspace{10pt} Learner has access to $\widehat{\mathbf{R}}$, the $k^\star$-truncated SVD of $\tbR$. \Comment{``learned'' preferences}
\State \textsc{Recommendation phase:}
\State \hspace{10pt} Learner shows agents $u \in [m]$ their top item $\Topk(u) \in [n]$ according to $\widehat{\mathbf{R}}$. 
\end{algorithmic}
\end{protocol}

\xhdr{Learning phase.}
The recommender gets $\widehat{\bR}$, a reduced information version of the complete (potentially strategically manipulated) ratings of $\tbR$. Specifically, $\widehat{\bR}$ is a $\kst$-truncated Singular Value Decomposition (SVD), which means it consists of the $\kst$ heaviest weight (in terms of singular value) principal components of $\tbR$. Formally, $\widehat{\mathbf{R}} = \sum_{j \in [\kst]}\sigma_j\mathbf{u_j}\mathbf{v_j}^\top$ where $\tbR = \sum_{j \in [\mathrm{rank}(\tbR)]}\sigma_j\mathbf{u_j}\mathbf{v_j}^\top$. As a result, $\widehat{\bR}$ is the best rank-$\kst$ approximation of the revealed preference matrix. 

\emph{What is $k^\star$?} In MC, $k^\star$ is the rank of the estimated complete preference matrix after a number of rating exploration rounds. Ideally, $k^\star$ should be such that $\widehat{\bR}$ represents the variation of preferences well, while not requiring too many queries. In practice, there are various exploration-stopping conditions an algorithm might use. We model this process as a learner that has a marginal amount of rating ``variation'' it can afford to lose. Formally, we define $\mathrm{TVR}$ as the measure of this variation:
\begin{definition}%
The total variation retained ($\mathrm{TVR}$) when doing a $k^\star$-truncated SVD on $\tbR$ is: 
\begin{equation*}
    \mathrm{TVR}(k^\star):= \frac{\sum_{j \in [k^\star]}\sigma_j(\tbR)}{\sum_{j \in [\mathrm{rank}(\bR)]}\sigma_j(\tbR)}
\end{equation*}
\end{definition}
$\mathrm{TVR}$ is essentially the ``explained variance ratio''~\cite{pedregosa2011scikit} or Cummulative Percentage of Total Variation~\cite{joliffe_principal_2002}, which is used when doing Principal Component Analysis (PCA). Intuitively, $\mathrm{TVR}$ is simply a measure of the proportion of $\tbR$'s information that $\widehat{\mathbf{R}}$ keeps in the low-rank approximation process. To represent the process of exploring up to a point of sufficient information learning, we define the \emph{$\alpha$-loss tolerant learner}.

\begin{definition}[$\alpha$-loss tolerant learner]\label{alpha_learner}
An $\alpha$-loss tolerant learner has access to $\widehat{\bR}$ of the minimum rank, $k^\star$, such that $\mathrm{TVR}(\kst+1) - \mathrm{TVR}(\kst) \leq \alpha \cdot (\sum_{j \in [\mathrm{rank}(\tbR)]} \sigma_j(\tbR))^{-1}$. Equivalently,
    \begin{center}
        $\min_{k\in [\mathrm{rank}(\tbR)]} k \quad 
        \text{s.t., } \quad \sigma_{k+1}(\tbR) \leq \alpha$.
    \end{center}
\end{definition}

The $\alpha$-loss tolerant learner is referred to as such because it is \emph{tolerant} of losing this much information in the rank reduction. An $\alpha$-loss tolerant learner has the lowest rank $\widehat{\bR}$, such that increasing rank by 1 will not improve the proportion of information, $\mathrm{TVR}$, enough such that it is worthwhile to increase the dimension of the data. Thus, we refer to $\alpha$ as the ``\emph{exploration limit}'' because it represents the stopping condition at which the learner cannot afford to keep exploring to better learn preferences; if a learner is tolerant of experiencing only a small amount of loss, it must be willing to sacrifice time and/or compute to achieve this. Since $\alpha$ represents learning ability/capacity, we generally assume there could exist learners across the full spectrum of extremely strong computation/time constraints (high $\alpha$) to weak computation/time constraints (low $\alpha$).

\xhdr{Recommendation phase.}
Using $\widehat{\bR}$, the learner estimates the top item(s) for each user, $\topitems(u, \alpha):= \mathrm{arg}\max_{i \in [n]} \hat{r}_{u,i}$. $\topitems(u, \alpha)$ may be a set. We assume that the learner breaks ties in favor of the \emph{most popular} of the top-rated items $\topitems^{\pop}(u, \alpha)$, i.e.,
\begin{center}
        $\Topk(u, \alpha) \sim \textrm{Unif}(\topitems^{\pop}(u, \alpha)), \quad \text{where} \quad \topitems^{\pop}(u, \alpha):= \arg\max_{i \in \topitems(u,\alpha)} \quad \|\hR_i\|_1$
\end{center}
Note that each of these top item(s)-related functions, depends on $\alpha$ via the use of $\widehat{\bR}$. When using top items-related notation, we omit the dependence on $\alpha$ whenever clear from context.

We use the \emph{Social Welfare} as the measure of how good recommendations are overall:
    \begin{center}
        $\SW(\tbR, \alpha) = \sum_{u \in [m]}{r^\star_{u,\Topk(u)}}$.
    \end{center}

\subsubsection{Users' agency}
As with online MC, we focus primarily on \emph{majority-minority} personal preference matrices. 
While the main body of this paper focuses on these discrete user groups, in Appendix~\ref{app:model2}, we present analogous results under a more complex class of non-exclusive preference matrices; see also~\Cref{sec:discussion} for further discussion. 
We refer to majority-minority matrices in the block-diagonal ordering; this is without loss of generality because:

\begin{restatable}[]{observation}{PermInv}
\label{obs:permutation_invariance}
The welfare of each user is invariant under any re-ordering of users and items.
\end{restatable}

The proof of this observation is in Appendix \ref{app:abstraction_proofs}. We order the rows and columns of the true preference matrices such that for some 
$\bar{m} \in [m]$ and  $\bar{n} \in [n]$, $\calUmaj = [\bar{m}]$ and $\calImaj = [\bar{n}]$. Therefore, $\bR^\star$ is a block-diagonal matrix where the blocks are:
$\bR_{\maj}^\star \in \R_{\geq 0}^{\bar{m}\times \bar{n}}$ and $\bR_{\minor}^\star \in \R_{\geq 0}^{(m-\bar{m}) \times (n-\bar{n})}$. Because we analyze the collective combating of minority suppression, we add mathematical formalization of what it means for a group to be ``majority'' or ``minority'':
\begin{assumption}[Singular Value Gap]\label{assumption:singular_value_gap} Let $\bR$ be a majority-minority matrix, $k_{\maj}=\mathrm{rank}(\bR_{\maj})$, and $\mathcal{G}(\bR):= (\sigma_{1}(\bR_{\minor}), \sigma_{k_{\maj}}(\bR_{\maj}))$.
If $\mathcal{G}(\bR) \neq \emptyset$, then we say that $\bR$ has a \emph{singular value gap}; we assume that our majority-minority matrices has a singular value gap. 
\end{assumption}
Roughly speaking, this assumption corresponds to the majority/minority sub-matrices (and thus preferences) being more/less dominant in the system, respectively; see Appendix~\ref{app:maj_min_ex} for an example. When the learner is $\alpha$-loss tolerant such that $\alpha \in \mathcal{G}(\tbR)$ (i.e., the exploration limit, $\alpha$, falls between the information values of the majority and minority groups), $k^\star$, the rank of the learned preference matrix is \emph{exactly equal} to the rank of the majority matrix (full proof in Appendix~\ref{app:rhat_nbar_proof}):

\begin{restatable}[]{proposition}{RHatNBar}
\label{rhat_equals_nbar1}
For any revealed majority-minority preference matrices $\tbR$ with a singular value gap and any $\alpha$-loss tolerant learner where $\alpha \in \mathcal{G}(\tbR)$, it must be the case that $k^\star = k_{\maj}$.
\end{restatable}

\xhdr{Strategic rating \& paper's approach.} In our model of the RecSys, while users have true personal preferences, $\bR^\star$, that determine how much they would rate (i.e., interact with) an item to personal taste in the moment, this does not mean they must actually rate the item this much. Our users have one primary action space which they may use to communicate with the recommender: rating items via $\tbR$. To analyze the phenomenon of users' protest movements against algorithmic suppression, in this work we compare the social welfare in two cases: (1) agents truthfully interact with content according to their [personal] interest each item: $\tilde{r}_{u,i} = r^\star_{u,i}$; and (2) a collective of agents purposefully increase interactions with minority items (despite not personally liking the content). 
\section{When are collectives effective?}\label{sec:users}
In this section, we use our model abstraction to study when strategic collective action effectively helps minority users. First, we show that if (1) majority users prefer popular items, (2) minority users prefer niche items, and (3) the RecSys has limited exploration, the system learns too little about minority preferences, yielding good majority recommendations but poor minority ones, and low social welfare. We then analyze collective uprating strategies where majority users uprate minority-preferred items, present sufficient conditions for Pareto improvement in recommendations and strict welfare improvement, and provide a robust algorithm to find such strategies.

\subsection{Poor minority recommendation without strategic interactions} \label{sec:poor_minority}

Intuitively, in majority-minority preference matrices (\Cref{def:majority-minority-matrix}) with a gap in singular values between the two groups (Assumption \ref{assumption:singular_value_gap}), the majority is more ``important'' because its higher singular values indicate that the majority matrix block is \emph{more informative}. Additionally, because the groups' preferences are disjoint, each majority user's preference vector is orthogonal to all minority users' preference vectors; i.e., majority and minority users' preferences live in different subspaces. Without any strategic collective action, an $\alpha$-loss tolerant learner may reduce out the subspace of the minority user preferences, while perfectly capturing the majority user preferences.
\begin{restatable}[Truthfulness is good for majority, bad for minority]{theorem}{GMajBMin}\label{g_maj_b_min2}
    Let the ground truth preference matrix $\bR^\star$ be a majority-minority matrix satisfying Assumption~\ref{assumption:singular_value_gap} and $\bR^{\star} = \tbR$ (i.e., simple personal interest rating)%
    . If $\alpha$ is in the singular value gap (i.e., $\alpha \in \mathcal{G}(\bR^\star)$), then all majority users are accurately given their top item, while minority users are given popular items they do not like. Formally:
   \begin{center} 
        $r_{u,\Topk(u,\alpha)}^\star = \begin{cases}\max_{i \in [n]}r_{u,i}^\star & u \in \calUmaj \\
        0 & u \in \calUmin
        \end{cases}
        ,\quad \Rightarrow \quad 
        \SW(\bR^\star,\alpha) = \sum_{u \in \calUmaj}\argmax_{i \in [n]}r_{u,i}^\star$
    \end{center}
\end{restatable}
The proof uses the fact that the learner has access only to some lower dimensional version of preferences, and because the minority preferences are both linearly independent from the majority and make up a smaller portion of the information in the complete \emph{revealed} preference matrix, they are the ``low-hanging fruit'' to rank-reduce out. We defer the full proof to Appendix \ref{app:users_proofs_gmajbmin}.

\subsection{Improving recommendations via simple collective rating strategies} \label{sec:effective_collective}

\Cref{g_maj_b_min2} highlights that in general for $\alpha$-loss tolerant learners against users that are nonstrategic and simply interact with content according to personal interest, information about minority preferences is not retained by the learner. Can collaborative rating distortion force the learner to retain minority information? We focus on aiding ``picky'' minority users, whose preferences are \emph{especially} hard to learn for the recommender; picky users only like a specific item, meaning a \emph{single} rating in the matrix must be queried to avoid the pitfalls of \Cref{thm:rank_min_bad}.

\begin{definition}[Picky Users]\label{defn:picky_items}
    We say that item $i^\star$ is a picky item with picky user group $\calU_{i^\star}$ if
    \begin{center}
    $r_{u,i^\star}>0 \iff u \in \calU_{i^\star}, \forall u \in [m]\quad$ and
    $\quad r_{u,i}=0$ $\forall u \in \calU_{i^\star}$, $\forall i \neq i^\star$.
    \end{center}
\end{definition}

Because picky users are especially hard to learn, they are particularly in danger of being poorly served by the algorithm, thus we suppose collectives of majority users seeking to help those in the minority may particularly target these groups. Our goal is to model collectives of majority users \emph{uprating} a picky minority item by a collaboratively selected amount, $\eta>0$. %

\subsubsection{Collective uprating} Let $\bR^\star$ be a majority-minority matrix ordered into block-diagonal form where the first $\barm<m$ users are the majority users and the first $\barn < n $ items are the majority items. $\calU_\COLL \subseteq \calUmaj$ is a nonempty subset of majority users who form the collective. All users $u \in \calU_\COLL$ report $\eta \in \R_{>0}$ (instead of $0$, which is their personal interest in the item) for a target picky item $i^\star > \barn$, while others continue to interact according to their personal interest. At a high level, this modeling captures some majority users explicitly choosing to interact more with minority content to ``boost'' it. This yields a strategically manipulated matrix, $\tbR$, instead of the true $\bfR^\star$; $\tbR$ is the same as $\bR^\star$ except for elements indexed $(u, i^\star), \forall u \in \calU_\COLL$. We present such a matrix below, with {\color{red}red} denoting the uprated items from the collective.
%
%
\begin{comment}
\begin{equation}
    \tbR = 
    \begin{pmatrix}
        r^{\maj^\star}_{11} & \hdots & r^{\maj^\star}_{1, \barn} & \eta & \hdots & 0\\
        \vdots & \vdots & \vdots & \vdots & \vdots & \vdots\\
        r^{\maj^\star}_{u,1} & \hdots & r^{\maj^\star}_{u, \barn} & 0 & \hdots & 0\\
        \vdots & \vdots & \vdots & \vdots & \vdots & \vdots\\
        r^{\maj^\star}_{\bar{m}, 1} & \hdots & r^{\maj^\star}_{\bar{m}, \barn} & \eta & \hdots & 0\\
        \mathbf{0} & \hdots & \mathbf{0} & &\bfR_{\minor}^\star
    \end{pmatrix}
\end{equation}
\end{comment}
%

\begin{equation}
\bR^\star = 
\begin{bmatrix}
& & & 0& 0 & \hdots & 0\\
& &  & 0& 0 & \hdots & 0\\
& & & \vdots & \vdots & \vdots & \vdots\\
& \bfR_{\maj}^\star &  & 0 &0& \hdots & 0\\
& & & \vdots & \vdots & \vdots& \vdots\\
&  & & 0 & 0 & \hdots & 0\\
&  &  & 0 & 0 & \hdots & 0\\
0 & \hdots & 0 & &&\\
\vdots & \vdots & \vdots & &&\bfR_{\minor}^\star\\
0 & \hdots & 0 & &&
\end{bmatrix}
\quad
\rightarrow
\quad
\tbR = 
\begin{bmatrix}
& & & {\color{red}\eta}& 0 & \hdots & 0\\
& &  & 0& 0 & \hdots & 0\\
& & & \vdots & \vdots & \vdots & \vdots\\
& \bfR_{\maj}^\star &  & 0 &0& \hdots & 0\\
& & & \vdots & \vdots & \vdots& \vdots\\
&  & & 0 & 0 & \hdots & 0\\
&  &  & {\color{red}\eta} & 0 & \hdots & 0\\
0 & \hdots & 0 & &&\\
\vdots & \vdots & \vdots & &&\bfR_{\minor}^\star\\
0 & \hdots & 0 & &&
\end{bmatrix}
\end{equation}

This poses a natural question: when are these collective strategies effective at \emph{actually} improving recommendations? We derive sufficient conditions on the collective of users, $\calU_\COLL$, and the amount of uprating, $\eta$, such that all majority users \emph{and} picky users receive correct recommendations from the $\alpha$-loss tolerant learner. To do so, we must define another useful singular value gap.%
\begin{definition}[$(\eta, \calU_\COLL)$-Sufficient Singular Value Gap]
    For a given $\eta \in \R_{>0}$, $\calU_\COLL\subseteq \calUmaj$, and majority-minority preference matrix, $\bR$, define the following space, $\mathcal{G}(\bR,\calU_\COLL, \eta)$:

\begin{center}
    $\mathcal{G}(\bR, \calU_\COLL, \eta):= \left(\sigma_1(\bR_{\minor}), \sqrt{\min \{\sigma_{k_{\maj}}(\bR_{\maj})^2, \eta^2|\calU_\COLL|+\|\bR_{i^\star}\|_{2}^2\}-\eta\sqrt{\bar{n}}\AV(\bR, \calU_{\COLL})}\right)$
\end{center}
where %
$\AV(\bR, \calU_{\COLL}) = \max_{i \in [\bar{n}]}\sum_{u \in \calU_\COLL}r_{u,i}$ is the largest aggregate value (i.e., rating) of a popular item for users in the collective. 
\end{definition}

Much like how the simpler singular value gap, $\mathcal{G}(\bR)$, is the gap between the largest singular value of the minority matrix and the smallest (nonzero) singular value of the majority matrix, $\mathcal{G}(\bR, \calU_\COLL, \eta)$ is just an approximation of the gap between the largest nonzero singular gap of the minority matrix and the smallest (nonzero) singular value of a new majority matrix defined by considering the picky user(s) and the strategically rated picky item as a part of the majority matrix. This (approximated) gap is important in defining sufficient conditions on said strategic ratings. Notice that the upper bound in the right-hand term of $\mathcal{G}(\bR, \calU_\COLL, \eta)$ is less than or equal to simply $\sigma_{k_{\maj}}$, which is the upper bound in the right-hand term of $\mathcal{G}(\bR)$. Thus we see that $\mathcal{G}(\bR, \eta, \calU_\COLL) \subseteq \mathcal{G}(\bR)$, and so given that $\alpha$ is already in the vanilla singular value gap (the assumption of \Cref{g_maj_b_min2}), as long as $\eta$ is collaboratively selected such that the right-hand term of $\mathcal{G}(\bR, \calU_\COLL, \eta)$ gets big enough, $\alpha$ is also in the singular value gap of a new majority-minority matrix where the picky minority users become majority and can benefit from good recommendations. The following Theorem states formally that, under these conditions, collective uprating strategies allow all majority users' \emph{and} a number of minority users' social welfare to be maximized.

\begin{restatable}[Improving Recommendations]{theorem}{SWImprove}\label{thm:SW_EA_Model1}
    Let $\bR^\star$ be a majority-minority matrix with a picky item $i^\star>\bar{n}$. Suppose there is a collective uprating to value $\eta$ with collective $\calU_\COLL$ s.t., $\bR^\star$ gets a $(\eta, \calU_\COLL)$-sufficient singular value gap. If $\eta < \min_{u \in \calUmaj}\max_{i \in [n]}r_{u,i}^\star$ and $\alpha \in \mathcal{G}(\bR^\star, \calU_\COLL, \eta)$, then
    \begin{center}$
    \Topk(u,\alpha) \in
    \begin{cases} 
    \arg\max_{i \in [n]} r_{u,i}^\star & u \in \calUmaj \cup \calU_{i^\star}\\
    [\barn +1] & u \in \calUmin\setminus \calU_{i^\star}
    \end{cases}
    ,\quad \Rightarrow \quad
    \SW(\tbR, \alpha) = \sum_{u \in (\calUmaj \cup \calU_{i^\star})}\max_{i \in [n]}r_{u,i}^\star$
    \end{center}
\end{restatable}

The intuition for this proof is exactly the singular value gap approximation described above. In Appendix \ref{app:users_proofs_sw_improve} we provide details on how to prove that $\mathcal{G}(\bR^\star, \calU_\COLL, \eta)$ is a new singular value gap formally using known lower bounds on the singular values of a matrix after appending a column. Notably, the outcome of \Cref{thm:SW_EA_Model1} is clearly a Pareto improvement in recommendation and a strict social welfare improvement.

Because \Cref{thm:SW_EA_Model1} may be difficult to use when checking whether an uprating strategy is guaranteed to be effective or not, it is useful to restate as more explicit sufficient conditions that $\eta$, the uprating value, and $\calU_{\COLL}$, the strategic user collective, must satisfy:
\begin{restatable}{corollary}{SuffCond}\label{cor:suff_cond} Let $\bR^{\star}$ be a majority-minority matrix with a picky item $i^\star>\bar{n}$, $\alpha > \sigma_{1}(\bR_{\minor}^\star)$, and $\kappa := \min_{u \in \calUmaj}\max_{i \in [n]}r_{u,i}^\star$. Then, a collective uprating improves social welfare if the following hold:
    \begin{center}
    $0<\eta < \kappa
    ,\quad
    \alpha < \sqrt{\min \{\sigma_{k_{\maj}}(\bR_{\maj}^\star)^2, \eta^2|\calU_\COLL|+\|\bR^\star_{i^\star}\|^2_2\}-\eta\sqrt{\bar{n}}\AV(\bR^\star, \calU_\COLL)}$.
    \end{center}
\end{restatable}

While this means that for any $\alpha$-loss tolerant learner one may check if a particular collective action will be genuinely effective, it is perhaps more enticing to consider how agents may find an effective collective. Fortunately, \Cref{cor:suff_cond} enables the construction of an effective $\eta$ finder.

\subsection{Algorithms to find effective collective strategies} \label{sec:coll_algo}
Using \Cref{cor:suff_cond}, we construct Algorithm \ref{alg:find_eta_aglo}, which returns an effective $\eta$ using only arithmetic operations, suggesting that, given sufficient information sharing, it is computationally reasonable that users find effective strategies. Algorithm 1 follows naturally from the sufficient conditions on $\eta$. In particular, the first condition $(0 < \eta < \kappa)$ is easy to handle since, given a $\kappa$, we may just choose any positive value less than it. Making sure the suggested $\eta$ satisfies the second condition is harder. Let us fix the collective, $\calU_\COLL$, and break down this condition into two. One of these (the easier to handle) is of the form $\eta\in \{\eta: \eta< h(\mathbf{z})\}$ (where $\bz$ is a vector of the algorithm's parameters). The other can be stated in the form: $\eta \in \{\eta: g(\eta; \bz)> 0\}$ where $g$ is quadratic in terms of $\eta$. This space can be computed using the discriminant of $g$, which we calculate in \Cref{alg:find_eta_aglo} line 2. The rest of the algorithm simply outputs an $\eta$ in the middle of all these intersecting spaces, or $0$ if such an intersection does not exist. We defer the formal proof of~\Cref{thm:find_eta_algo} to Appendix \ref{app:users_proofs_find_eta}.
\begin{restatable}[Algorithm \ref{alg:find_eta_aglo} returns an effective $\eta$]{theorem}{FindEta} \label{thm:find_eta_algo}
Let $\bfR^\star$ be a majority-minority matrix satisfying Assumption \ref{assumption:singular_value_gap} with a picky item at index $i^\star> \barn $ and $\alpha \in \mathcal{G}(\bR^\star)$. Then Algorithm \ref{alg:find_eta_aglo}, using $(\sigma_{k_{\maj}}(\bR_{\maj}^\star)$, $\alpha$, $\bar{n}$, $\|\bR^\star_{i^\star}\|^2_2$, $\AV(\bR^\star, \calU_\COLL)$, $\kappa$, $|\calU_\COLL|)$ as parameters, returns either:
\begin{enumerate}
    \item $\eta \in \R_{>0}$, such that social welfare is improved if all $u \in \calU_\COLL$ uprate $i^\star$ by $\eta$.
    \item $0$, if and only if there is no $\eta$ correlated strategy that satisfies our sufficient conditions.
\end{enumerate}
\end{restatable}
\begin{algorithm}[htbp]
\caption{Effective $\eta$ Finder}\label{alg:find_eta_aglo}
\begin{algorithmic}[1]
\Require $\sigma_{k_{\maj}}(\bR_{\maj}^\star)$, $\alpha$, $\bar{n}$, $\|\bR^\star_{i^\star}\|^2_2$, $\AV(\bR^\star, \calU_\COLL)$, $\kappa$, $|\calU_\COLL|$
\State $N_\up \gets \min\bigg\{ \left(\sigma_{k_{\maj}}(\bR_{\maj}^\star)^2-\alpha^2\right)\left(\sqrt{\bar{n}}\cdot\AV(\bR^\star, \calU_\COLL)\right)^{-1}, \kappa \bigg\}$ 
\Comment{upper bound on feasible $\eta$}
\State $d \gets \bar{n}[\AV(\bR^\star, \calU_\COLL)]^2+4|\calU_\COLL|(\alpha^2-\|\bR^\star_{i^\star}\|^2_2)$ \Comment{find discriminant}
\If{$d < 0$}
\State $N_\lo \gets N_{\up}/2$ \Comment{no real $\eta$ lower bound exists}
\ElsIf{$d \geq 0$}
\State $N_\lo\gets \left(\sqrt{\bar{n}} \cdot \AV(\bR^\star, \calU_\COLL)+\sqrt{d}\right)\left(2|\calU_\COLL|\right)^{-1}$ \Comment{lower bound on feasible $\eta$}
\EndIf
\If{$N_\lo < N_\up$}
\Return $(N_\lo+N_\up)/2$ \Comment{return if exists feasible $\eta$}
\EndIf
\State $N_\up \gets \min\bigg\{\left(\sqrt{\bar{n}}\AV(\bR^\star, \calU_\COLL)-\sqrt{d}\right)\left(2|\calU_\COLL|\right)^{-1},N_\up\bigg\}$ \Comment{new upper bound on feasible $\eta$}
\If{$N_\up > 0$}
    \Return ${N_\up}/{2}$ \Comment{return an $\eta$ if upper bound $> 0$}
\EndIf
\Return $0$ \Comment{sufficient conditions can't be satisfied}
\end{algorithmic}
\end{algorithm}

\begin{comment}
Requirements:
\begin{enumerate}
    \item $\sigma_{k_{\maj}}(\bR_{\maj}^\star)$
    \item $\alpha$
    \item $\bar{n}$
    \item $\|\bR^\star_{i^\star}\|^2_2$
    \item $\AV(\bR^\star, \calU_\COLL)$
    \item $\kappa$
    \item $|\calU_\COLL|$
\end{enumerate}
\end{comment}

%

In order for the collective to compute an effective $\eta$, the algorithm requires knowledge of several key quantities (e.g., characteristics of majority and minority preferences etc). While perfect information about $\bR^\star$ would suffice, this level of information may sometimes be unrealistic to have in real life. However, reasonable estimates of these parameters---obtained through partial preference knowledge and platform experience (as suggested by users' algorithmic folk theories in HCI literature)---may suffice. Since our primary focus is analyzing the effectiveness of collective strategies rather than the precise collaboration mechanisms, and our experiments in Section \ref{sec:experiment} show that simpler, less coordinated collaborative strategies can work in realistic systems, we defer extensive analysis of data-sharing mechanisms to future work. Instead, we provide robustness results showing that $\eta$ remains effective even with imperfect information and parameter misspecification.

\subsubsection{Algorithm \ref{alg:find_eta_aglo} robustness}\label{sec:robust}
How far off can the estimate of the required parameters be from the true setting parameters such that \Cref{alg:find_eta_aglo}'s suggested collaborative uprating is still guaranteed to improve social welfare and recommendations?

Recall from the sufficient conditions of \Cref{cor:suff_cond}, there is a \emph{set} of $\eta$ that satisfy the conditions under the true parameters. However, taking a closer look at the algorithm, given it is not zero, the $\eta$ outputted by a particular instantiation of parameters is simply a value within this sufficient space. If sufficient $\eta$ spaces of different parameterizations overlap, then the output of the algorithm under one set of parameters may also be effective for the other parametrization. We formalize this notion in the following Theorem, which provides conditions such that the output of Algorithm \ref{alg:find_eta_aglo} still satisfies the true sufficient conditions while using potentially incorrect estimates of \emph{many parameters at once}.
\begin{restatable} [Robustness of Algorithm \ref{alg:find_eta_aglo} to misspecifications]{theorem}{RobustAlgo}\label{thm:perturbed_params}
    Define the vector of required \Cref{alg:find_eta_aglo} parameters as $\bz:= (\sigma_{k_{maj}}(\bR_{\maj}), \alpha, \bar{n}, \|\bR_{i^\star}\|_2^2, \AV(\bR, \calU_\COLL), |\calU_\COLL|)$ and two functions:
    \begin{enumerate}
        \item A function, $f$, of the algorithm parameters and parameterized by $\eta$, 
        \begin{center}
        $f(\bz;\eta):= \min(\sigma^2, \eta^2 |\calU_\EA|+\|\bR_{i^\star}\|_2^2) - \eta\sqrt{\bar{n}}AV -\alpha^2$
        \end{center}
        \item A function, $L$, of preference matrix, $\bR$ and parameterized by $\eta$, 
        \begin{center}
        $L(\bR; \eta):= \sqrt{4\|\bR\|_2^2+(\eta \|\bR\|_1^2)/4+\eta^2n+\max \{4\|\bR\|_2^2, 1+\eta^4\}}$
        \end{center}
    \end{enumerate}
    Under the assumptions of Theorem \ref{thm:SW_EA_Model1}, further assume $\kappa$, $n$ and $\|\bR^\star\|_1, \|\bR^\star\|_2$ are known. Let $\mathbf{z}^{\star}$ be the vector of true parameters, $\hat{\bz}$ the vector of estimates, and $\hat{\eta}$ the value returned by the algorithm. When $\hat{\eta}>0 $, if $\|\hat{\bz}-\bz^{\star}\|_2 < \left(f(\hat{\bz};\hat{\eta})\right)\left(L(\bR^\star; \hat{\eta})\right)^{-1}$,
    then the conclusions of Theorem \ref{thm:SW_EA_Model1} still hold when $\hat{\eta}$ is selected as the uprating value by users of the collective. 
    \end{restatable}
    The proof ensures that $\hat{\eta}$ still falls within the true sufficient condition range, which is defined by functions very similar to $f$ and $L$, using Lipschitz continuity; see Appendix \ref{app:users_proofs_robust} for details.
\section{What do collectives mean for RecSys design?}\label{sec:learner}
We have thus far analyzed users' incentives for collective action and its effectiveness against an $\alpha$ loss tolerant learner, without explicitly modeling the learner's utility function. In this regime, we provide sufficient conditions under which simple collective movement strategies guarantee both a Pareto improvement in recommendations and a strict improvement in social welfare. When these conditions hold and a collective of (majority) users care about minority users, collective action dominates the baseline where users interact with content based solely on personal interest. This raises an important question: is it ``bad'' RecSys design to incentivize collective strategization over ``truthful'' personal interest-based interactions? After all, in practice, engaging in these collectives is often a form of algorithmic protest \emph{against} the RecSys. In this section, we analyze the learner's welfare under different (learner) utility functions that are similar to those in existing modeling approaches. Interestingly, we show that the collective strategies we study improve its utility.

\subsection{Personalization accuracy learner}\label{sec:learner_ben}
We first analyze a learner whose welfare is solely determined by how good (i.e., accurate) each user's recommendations are in terms of what that user is \emph{personally} interested in receiving. Because the definition of users' social welfare is also a measure of how good recommendations are, formally, we call this a \emph{personalization accuracy} $\alpha$-loss tolerant learner, but note that equivalently it has a utility that is equal to the social welfare of all users. Conceptually, such a learner utility captures both benevolence towards users and natural definitions of personalization accuracy.
\begin{definition}[Personalization Accuracy Learner]
    Given a true preference matrix, $\bR^\star$, a personalization accuracy learner is an $\alpha$-loss tolerant learner such that:
    \begin{center}
        $U_{\BEN}:=\sum_{u \in [m]}{r^\star_{u,\Topk(u,\alpha)}} = \SW(\bR^\star,\alpha)$
    \end{center}
\end{definition}
Given that this learner prioritizes agents' social welfare, then from the results of the previous sections, we have that the collective action of some users increases the learner's utility.

\begin{restatable}{corollary}{BenLearner}\label{cor:ben_learner}
    Under the assumptions for Theorem \ref{thm:SW_EA_Model1}, an $\alpha$-loss tolerant learner with a personalization accuracy-based utility function would achieve 
    \begin{center}
    $U_{\BEN}^{\TRUE} = \sum_{u \in \calUmaj}\max_{i \in [n]}r_{u,i}^\star
    ,\quad 
    U_{\BEN}^{\COLL} = \sum_{u \in (\calUmaj \cup \calU_{i^\star})}\max_{i \in [n]}r_{u,i}^\star$
    \end{center}
    when agents are truthful or follow an effective collective strategy, respectively, and $U_{\BEN}^{\COLL} > U_{\BEN}^{\TRUE}$.
\end{restatable}

\subsection{Engagement learner}\label{sec:learner_en}
Next, we consider a learner whose welfare is based on how much \emph{engagement} it elicits on the platform. Recall that while we frequently refer to $\widetilde{\bR}$ as a ``revealed preferences'' matrix, preferences/ratings are usually inferred over time via user interactions such as watch time, views, likes, comments, etc. Alternatively, these interactions can be characterized as ``engagement''~\cite{engagement_maxxing}). A large $\tilde{r}_{u,i}$ indicates that a user had significant engagement with the item and, importantly, a platform can sell ad space on users' content engagement. We call a learner an ``engagement learner'', if their utility is equal to the total reported ratings across all users.
\begin{definition}[Engagement Learner]
     Given a true preference matrix, $\bR^\star$, and a revealed preference matrix, $\tbR$, an engagement-based learner is an $\alpha$-loss tolerant learner s.t., $U_{\EN}:= \sum_{i \in [n]}\sum_{u \in [m]}|\tilde{r}_{u,i}|$.
\end{definition}
\begin{comment}
\begin{remark}
    It is useful to imagine the engagement-based learner as an online learning  or bandit setting in which the learner repeatedly queries the ratings of item/user pairs in order to learn the preference matrix.
\end{remark}
\end{comment}
Recall that the collective strategies we study are uprating schemes: a set of users submit rating $\eta$ instead of $0$. Thus, because engagement-based utility is simply the sum of ratings, a collective strategy yields higher utility than truthfulness.
\begin{restatable}[Collective increases utility for the engagement learner]{proposition}{EnLearner}\label{prop:en_learner}
Under assumptions for Theorem \ref{thm:SW_EA_Model1}, and $\alpha$-loss tolerant learner with engagement-based utility would achieve
\begin{center}
$U_{\EN}^{\TRUE} = \sum_{i \in [n]}\sum_{u \in [m]}|r_{u,i}^\star|
,\quad
U_{\EN}^{\COLL} = \sum_{i \in [n]}\sum_{u \in [m]}|r_{u,i}^\star| + \eta|\calU_\COLL|$
\end{center}
when agents are truthful or follow an effective collective strategy, respectively, and $U_{\EN}^{\COLL} > U_{\EN}^{\TRUE}$.
\end{restatable}

\xhdr{Protest against the algorithm helps the platform.}
Mathematically, both~\Cref{cor:ben_learner} and~\Cref{prop:en_learner} follow directly from \Cref{sec:users}. However, conceptually, they formalize a notable idea: even though this collective action is, from some perspectives \emph{untruthful} and in the real-world meant as a resistance \emph{against} the platform, depending on platform priorities, it may be the case that the platform benefits from it. As such, it is not inherently necessary that a ``good'' RecSys designs away such incentives. Naturally, there are some utilities that may be harmed by collective actions, which does suggest space for interesting future work; we discuss this further in Section \ref{sec:discussion}.
\section{Empirics}\label{sec:experiment}

We expect that our theoretical conditions to improve recommendations are not strictly \emph{necessary} and that more general strategies may work in realistic settings. To this end, we present an empirical simulation of collective action against an LLM-based book recommender. Our empirics validate the theory: even though the formal assumptions are not met, because the model still recommends popular items to users that would prefer something more niche, a collective of majority users can improve minority welfare and only very minimally hurt test accuracy.

\xhdr{Method overview.}
We briefly summarize the dataset and methods, leaving details to \Cref{app:experiment_details}. We use the Goodreads user/book interactions dataset~\cite{goodreads1, goodreads2}. Interactions include (1) adding to shelf (``shelving'') to indicate interest, (2) marking as read, (3) rating (out of 5), and (4) reviewing. We fine-tune four LLM-based book recommenders: given a user's profile of past interactions with young-adult (YA) books and four new YA book options, each recommends two (of the four) new YA books that the user should add to their shelf. That is, the model recommends two new books the user would be interested in. To simulate collective action, we train each model on different versions of $85,238$ users' data. The baseline model is trained on users' true history of pre-2015 interactions and uses two of each user's actually shelved 2015 books as the ``ground-truth'' recommendation, with two other randomly generated 2015 book options. We train the other three models using the same structure, but with perturbed user data that represents collective action to varying levels, which we discuss in the following section. 

\xhdr{Results.}
We first look at recommendations without collective action. On a hold-out test set of $21,310$ users, the baseline model is $98.8\%$ partially or fully correct in its recommendations of two books each user should add to their shelf (Table \ref{tab:accuracies}). However, this is driven by correct recommendations of popular authors/books (Figure \ref{fig:baseline_model_acc}). 
\begin{figure}[ht]
\centering

\begin{subfigure}{0.48\textwidth}
  \centering
  \includegraphics[width=\linewidth]{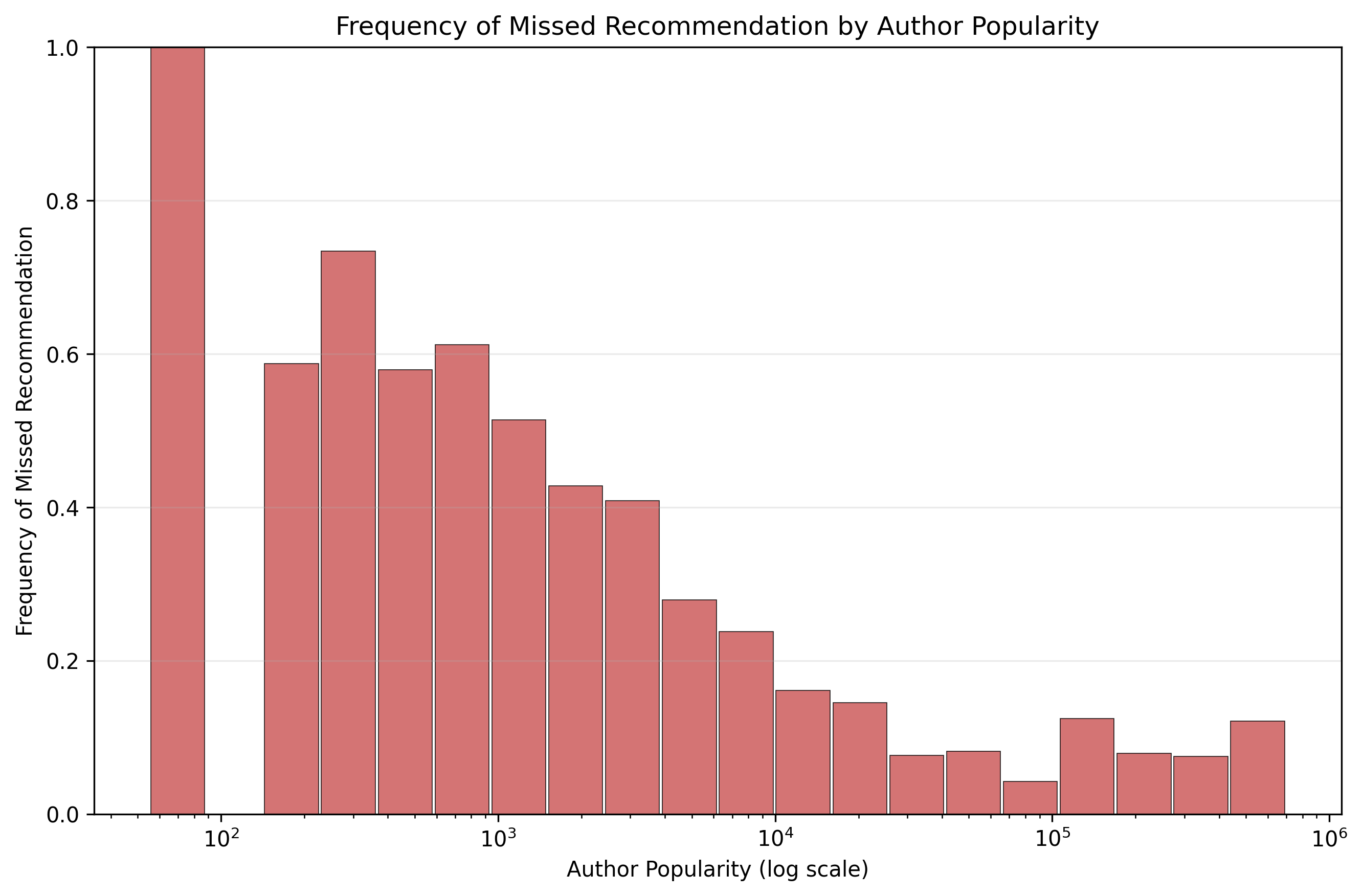}
  \caption{Avg frequency with which books are incorrectly not recommended by popularity of their author}
  \label{fig:sub1}
\end{subfigure}
\hfill
\begin{subfigure}{0.48\textwidth}
  \centering
  \includegraphics[width=\linewidth]{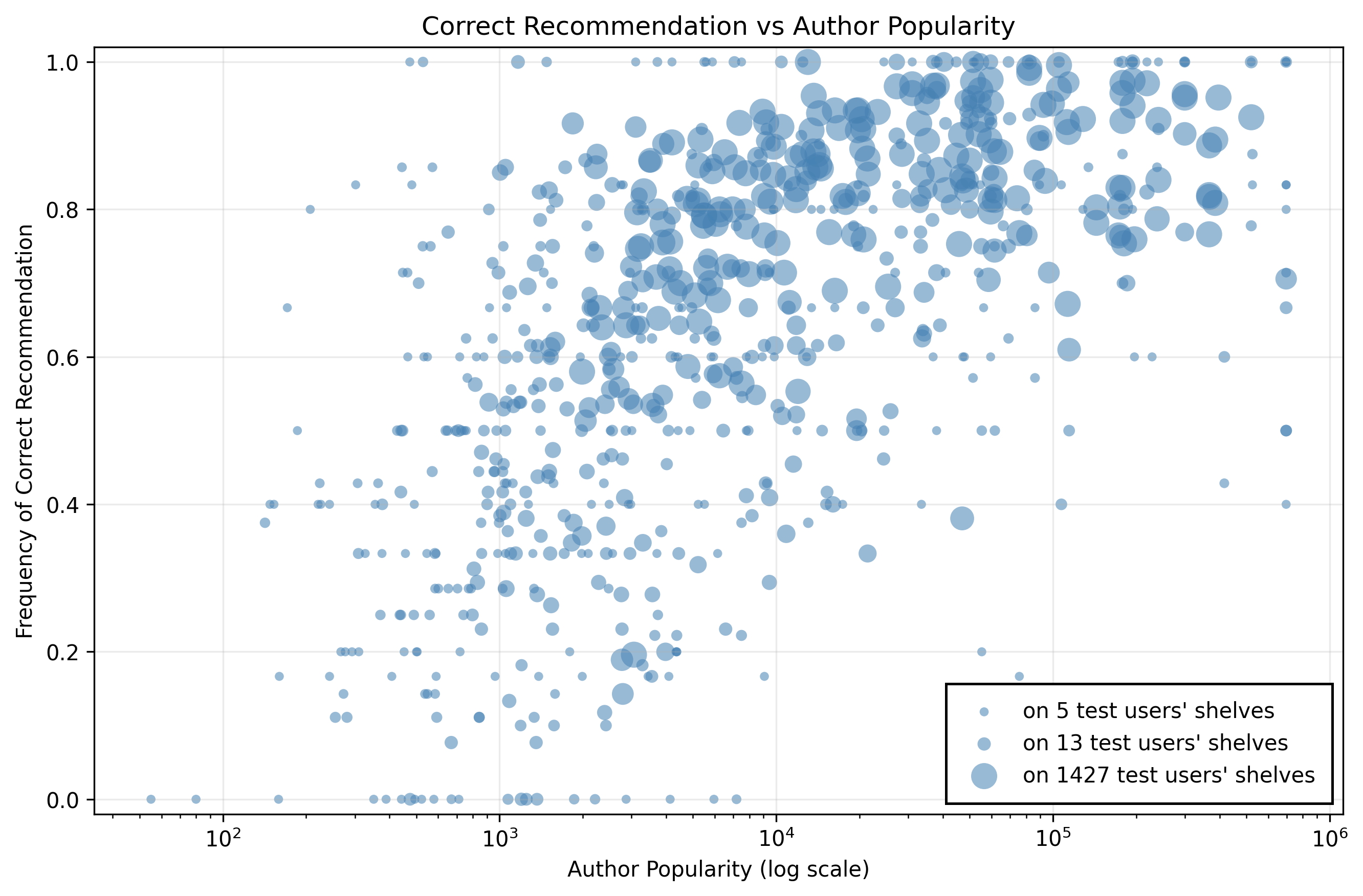}
  \caption{Avg frequency with which books are correctly recommended by popularity of their author}
  \label{fig:sub2}
\end{subfigure}

\caption{[In]correct recommendation of books by their author popularity, measured by $\#$ of total interactions an author gets. Books shelved by fewer than $5$ users are dropped. More popularity metrics in \Cref{app:experiment_additional_results}.}
\label{fig:baseline_model_acc}
\end{figure}
In fact, on test users who shelve less popular books, if given popular alternative options, the baseline model almost always recommends the user add the \emph{popular} books to their shelf (Column 1, Table \ref{tab:improve_rec}), matching the ideas of \Cref{g_maj_b_min2}. From a purely predictive perspective, this behavior is reasonable. A randomly drawn user is likely to be interested in and want to add a popular book to their shelf. However, this has downstream effects. First, non-mainstream users have poor welfare in this RecSys. Second, the niche authors/books are \emph{disproportionately} less promoted. On real platforms, these are not just unfortunate but accepted outcomes; users notice and resist. On TikTok, for example, users have noticed similar outcomes with books by marginalized authors~\cite{melo_booktok} and purposefully increased interaction with said authors' content to correct for algorithms' lack of promotion \cite{mccall_booktoks_2022}. Focusing on $10$ ``target'' YA authors in our dataset whose 2015 books are otherwise rarely accurately recommended for shelving against more popular books, we simulate collective action by perturbing the training data. As our collective, we choose $80\%$ of users whose correct shelf recommendation would include a top-30 most shelved book. As the strategic action, each user in the collective adds one of the target 2015 books to their shelf and increases pre-2015 interactions with $j$ of the target authors. Across $j \in \{1,3,7\}$, we train 3 new models and find major improvement in (correct) recommendation of target authors to test users who like their books (analogous to \Cref{thm:SW_EA_Model1}). Table \ref{tab:improve_rec} presents the frequency with which each model recommends shelving the targeted book to test users whose true 2015 book shelf includes said target book. We present results where the models recommended two books from a set of 2 truly shelved books and 2 random books from 100 authors with the greatest number of past interactions. In Appendix \ref{app:experiment_additional_results}, we have similar results for other thresholds of popularity.

\begin{table}[ht]
\centering
\footnotesize
\begin{tabular}{lrrrrr}
\toprule
Target Author & Baseline & $j=1$ & $j=3$ & $j=7$ & Test user \# \\
\midrule
K.A.\ Holt       
& 0.5769 
& 0.9615 \,(+66.7\%) 
& 0.9615 \,(+66.7\%) 
& 0.9615 \,(+66.7\%) 
& 26 \\

Allan Stratton   
& 0.0952 
& 1.0000 \,(+950.4\%) 
& 1.0000 \,(+950.4\%) 
& 0.9048 \,(+850.4\%) 
& 21 \\

Julie Mayhew     
& 0.3333 
& 0.9444 \,(+183.3\%) 
& 0.9444 \,(+183.3\%) 
& 0.9444 \,(+183.3\%) 
& 18 \\

Matthew Crow     
& 0.0667 
& 1.0000 \,(+1399.3\%) 
& 1.0000 \,(+1399.3\%) 
& 1.0000 \,(+1399.3\%) 
& 15 \\

Edwidge Danticat 
& 0.0667 
& 0.9333 \,(+1299.3\%) 
& 0.9333 \,(+1299.3\%) 
& 0.9333 \,(+1299.3\%) 
& 15 \\

S.\ Walden       
& 0.3077 
& 1.0000 \,(+225.0\%) 
& 1.0000 \,(+225.0\%) 
& 1.0000 \,(+225.0\%) 
& 13 \\

Addison Moore    
& 0.4545 
& 1.0000 \,(+120.0\%) 
& 1.0000 \,(+120.0\%) 
& 1.0000 \,(+120.0\%) 
& 11 \\

Alan Gratz       
& 0.4000 
& 1.0000 \,(+150.0\%) 
& 1.0000 \,(+150.0\%) 
& 0.9000 \,(+125.0\%) 
& 10 \\

Anna Martin      
& 0.5000 
& 1.0000 \,(+100.0\%) 
& 1.0000 \,(+100.0\%) 
& 1.0000 \,(+100.0\%) 
& 10 \\

Gabriella Lepore 
& 0.4000 
& 0.9000 \,(+125.0\%) 
& 0.9000 \,(+125.0\%) 
& 0.9000 \,(+125.0\%) 
& 10 \\
\bottomrule
\end{tabular}
\caption{Models' frequency of target author recommendation to interested test users. Percent change is computed relative to the baseline model.}
\label{tab:improve_rec}
\end{table}

\xhdr{Limitations.}
Because we cannot apply our guarantees of \emph{Pareto} improvement, more general collective action may come at a cost in accuracy. However, Table \ref{tab:accuracies} highlights that, for this dataset, test accuracy loss is minimal (no more than a $-.2\%$ change in combined partial and full correct percentage). Additionally, we assume a large amount, $80\%$, of popular-book users do collective action. Due to the computational cost of retraining many models, we are limited in the experimentation we present, however, as the improvement was so large, we expect smaller percentages to still work.
\section{Discussion} \label{sec:discussion}

Our work is the first to theoretically study and formalize the real-world phenomenon of collective algorithmic resistance movements on recommendation system-based online platforms. %
To conclude, we highlight some of our key assumptions and avenues for future work.

\xhdr{Assumptions and limitations.}
In the main body of our work, we make three somewhat strong assumptions: disjoint preferences between the majority and minority users, uprating strategies such that all users in a collective uprate a chosen item by a single shared value, $\eta$, and picky items exclusively liked by a disjoint subset of minority users. Though we validate our main body's theory with experimental results in Section \ref{sec:experiment} that confirm realistically that these assumptions are not required to see any benefits for minorities, in Appendix \ref{app:model2}, we also present analogous guarantees to our main body using alternative (and looser) assumptions. However, due to the complexity of a generalized collective strategy, we do not provide an algorithm to find it. Throughout the paper, we also assume a setting of top-1 recommendation rather than the more realistic top-$k$ recommendation setting. This is primarily for mathematical and conceptual simplicity. The essence of our results can be generalized to a broader range of $k$, which we detail in Appendix \ref{app:topk}. Finally, it is important to restate one limitation: the ``bad'' outcome we focus on in \Cref{thm:rank_min_bad} is just one solution to Equation \eqref{eqn:rank_min}. It is a very notable solution because it is the sparsest, which is a desirable property for storage/computation, and it represents a ``core'' solution to which all other solutions can be reduced (i.e., made sparser until they become that one), but given a solver that outputs \emph{any} solution with some probability, it is possible to \emph{not} get the bad one.

\xhdr{Future work.}
While our paper's contribution is primarily theoretical, we also study the phenomenon of collective movements via empirics (Section \ref{sec:experiment}) and survey (Appendix \ref{app:survey}). Likewise, we believe there are interesting avenues to further explore across all three of these subfields. On the theory side, using our online MC abstraction or others, there are many alternative multi-agent strategic interactions that can be studied. For example, what are the conditions under which bots--whose interactions, like those of the collective, are intended to impact other users feeds--can significantly impact social welfare? On the HCI side, while we cited many interview studies and our own survey, there are no (to our knowledge) large-scale behavioral experiments of users' interpersonally-motivated content interaction strategies. It would be useful to understand whether interpersonal content interaction strategies are more prevalent on some platforms over others, which actual interactions (e.g., comments, likes, etc) are most often used, and, as is relevant to the data-sharing and collaboration required for our algorithm, how users organize and coordinate. Finally, on the empirical side, collective strategic interactions could affect other algorithms. Our SFT post-training on an LLM inadvertently highlights a kind of collective ``alignment hacking''. This may be interesting to study more broadly for AI safety.
\section{Acknowledgments}
Research reported in this paper was supported by an Amazon Research Award Fall 2023. Any opinions, findings, and conclusions or recommendations expressed in this material are those of the author(s) and do not reflect the views of Amazon.

\bibliographystyle{plainnat}
\bibliography{bib}
\newpage
\appendix
\section*{\LARGE Supplementary Material}
\section{Use of AI Tools}
The authors used ChatGPT (GPT-5.2) for assistance with minor text editing and Latex formatting. Claude Code (Sonnet 4.5) and Codex (GPT-5.3) were used for coding assistance. All output has been reviewed and edited by the authors.

\section{Table of notation}\label{app:notation_table}

To assist the reader, we include a table summarizing our paper notation in \Cref{table:notation}.
\renewcommand{\arraystretch}{1.2}

\begin{table}[H]\caption{Paper notation in the main body}\label{table:notation}
\centering
\begin{tabular}{|c | c|} 
\hline 
Notation & Explanation \\ [0.5ex] 
\hline \hline
$\bR$ & generic preference matrix (elements $r_{u,i}$) \\ 
 \hline
$\bR^\star$ & personal interest matrix \\ 
 \hline
$\tbR$ & revealed preference matrix \\ 
 \hline
$\hR$ & low-rank approximation of preference matrix \\ 
 \hline
$\bR_{\maj}, \bR_{\minor}$ & submatrices for majority and minority groups \\ 
 \hline
$\alpha$ & learner's loss tolerance parameter \\ 
 \hline
$k^\star$ & learner's chosen rank \\ 
 \hline
$k_{\maj}$ & rank of the majority submatrix \\ 
 \hline
$\mathrm{TVR}(k^\star, \bR)$ & total variation retained after $k^\star$-truncated svd on $\bR$\\ 
 \hline
$\Topk(u)$ & top item recommended to user $u$ \\ 
 \hline
$\topitems(u)$ & set of user $u$'s top items under $\hR$ \\ 
 \hline
$\topitems^{\pop}(u)$ & most popular top item(s) for user $u$ \\ 
 \hline
$\calU$ & set of users \\ 
 \hline
$\calUmaj, \calUmin$ & majority and minority user sets \\ 
 \hline
$\calImaj, \calImin$ & majority and minority item sets \\ 
 \hline
$\mathcal{G}(\bR)$ & singular value gap range for matrix $\bR$ \\ 
 \hline
$i^\star$ & picky item index preferred by minority users \\ 
 \hline
$\calU_{i^\star}$ & group of users who like only item $i^\star$ \\ 
 \hline
$\calU_\COLL$ & subset of majority users who form collective \\ 
 \hline
$\eta$ & positive uprating amount used by collective users \\ 
 \hline
$\hat{\eta}$ & suggested strategy returned by Algorithm \ref{alg:find_eta_aglo} \\ 
 \hline
$\AV(\bR, \calU_{\COLL})$ & the largest aggregate value of a popular item for users in the collective \\ 
 \hline
$\mathcal{G}(\bR, \calU_\EA, \eta)$ & uprating-aware singular value gap \\ 
 \hline
$\kappa$ & smallest top rating among majority users \\ 
 \hline
$\SW(\bR, \alpha)$ & social welfare under matrix $\bR$ and parameter $\alpha$ \\ 
 \hline
$\hat{\bz}$, $\bz^\star$ & estimated and true parameter vectors for Algorithm \ref{alg:find_eta_aglo}\\ 
\hline
\end{tabular}
\end{table}

\section{Supplementary material for Section \ref{sec:intro}}\label{app:intro}

\subsection{Connection to the PCA item-fairness}\label{app:connection_fair_pca}
While connections between our main results and \citet{pca_fairness_liu} are mentioned in Section \ref{sec:related_work}, we include here a more in-depth discussion. In short, our work and theirs are highly complementary. While we reach similar initial negative conclusions about the unfairness of majority-minority matrices (our \Cref{thm:rank_min_bad}, \Cref{g_maj_b_min2}, and \Cref{g_maj_b_min}, their Theorem 1), both the specific mathematical formalization of these results (i.e., our shared focus on rank-reducing mechanisms) and the solutions put forth we study are significantly different.

Theoretically, \citet{pca_fairness_liu} establish a type of item-based unfairness in PCA-based recommenders. Their primary result (Theorem 1) pertains to (item recommendation) unfairness \emph{in the limit} as the system of users and items grows. That is, defining popular items using the growth of a relevant ratings matrix singular value and unpopular items using lack of growth in a relevant singular value, under some assumptions on growth rate of items and users in the system, in the limit, PCA will zero-out the ratings of the unpopular items. This result is conceptually similar to the results we present on the unfairness of recommendations for minority users \emph{prior} to users engaging in collective action (\Cref{thm:rank_min_bad}, \Cref{g_maj_b_min2}, and \Cref{g_maj_b_min}). However, importantly, our result is in the \emph{finite} user and item case, thus more closely capturing realistic recommendation systems. Additionally, because we more explicitly root our model as an abstraction of matrix completion recommendation, rather than discussing this as an \emph{item} unfairness, we directly present what the zeroing out means for \emph{users'} recommendation and social welfare in our model. Finally, we use our negative results to study very different solutions. To increase fairness, \citet{pca_fairness_liu} present a novel item-weighted PCA methodology. In the case of our paper, we analyze a real-world \emph{user} phenomenon: \emph{collective strategic interaction}. 

Overall, our results, which are initially similar, and distinct solutions create significantly different contributions that validate each other.

\section{Supplementary material for Section \ref{sec:model}}\label{app:model}

\subsection{Details on online matrix completion}\label{app:MC}
\subsubsection{An illustration of matrix completion}\label{app:MC_illustration}

The basic idea of matrix completion is very simple: the algorithm starts with an empty matrix and repeatedly queries random entries (or collections of entries) at each round; this is the ``exploration'' phase. At some stopping point, the algorithm ends exploration and estimates a completed matrix that minimizes the rank while retaining all known entries (formally, Equation \eqref{eqn:rank_min}). It is useful to visualize this process using matrix forms. Consider the following true, but unknown ratings matrix,
\[\bR^\star =
    \begin{pmatrix} 
    1 & 0 & 0 & 4\\
    0 & 1 & 1 & 2\\
    0 & 1 & 1 & 2\\
    5 & 0 & 0 & 20
\end{pmatrix}\]
This matrix is $ 4\times 4$, but rank 2, since all row vectors could be created via linear combinations of 2 row vectors (rows 1 and 2). Every row is a user and every column is an item. Each entry, $r^\star_{u,i}$, represents user $u$'s rating of item $i$. In round 1, the algorithm randomly queries some entries (e.g., one random item per user) and ends up with the following partially known matrix:
\[\widetilde{\bR}(1) =
    \begin{pmatrix} 
    - & 0 & - & -\\
    - & - & 1 & -\\
    - & - & 1 & -\\
    - & - & - & 20
\end{pmatrix}\]
At this point, the recommender still does not know very much about the users' preferences. Notably, it may be the case that they all have the same exact preference vector. One way to see this is to ask: what is the lowest possible rank that $\widehat{\bR}$ could be? Clearly, rank 1. Because it is still possible to write a completed version of this matrix where every row vector is the same one. Here is one such solution to Equation \eqref{eqn:rank_min}. {\color{red}Red} represents entries that were completed:
\[\widehat{\bR}(1) =
    \begin{pmatrix} 
    {\color{red}0} & 0 & {\color{red}1} & {\color{red}20}\\
    {\color{red}0} & {\color{red}0} & 1 & {\color{red}20}\\
    {\color{red}0} & {\color{red}0} & 1 & {\color{red}20}\\
    {\color{red}0} & {\color{red}0} & {\color{red}1} & 20
\end{pmatrix}\]
Of course, this is not unique, alternatively the 1st and the 4th row vectors would be multiples of the 2nd and 3rd, which would still ensure rank 1:
\[\widehat{\bR}(1) =
    \begin{pmatrix} 
    {\color{red}0} & 0 & {\color{red}20} & {\color{red}20}\\
    {\color{red}0} & {\color{red}0} & 1 & {\color{red}1}\\
    {\color{red}0} & {\color{red}0} & 1 & {\color{red}1}\\
    {\color{red}0} & {\color{red}0} & {\color{red}20} & 20
\end{pmatrix}\]
And we could change the 1st column zero to be any common number that follows the required row-wise linear combination rules:
\[\widehat{\bR}(1) =
    \begin{pmatrix} 
    {\color{red}77*20} & 0 & {\color{red}20} & {\color{red}20}\\
    {\color{red}77} & {\color{red}0} & 1 & {\color{red}1}\\
    {\color{red}77} & {\color{red}0} & 1 & {\color{red}1}\\
    {\color{red}77*20} & {\color{red}0} & {\color{red}20} & 20
\end{pmatrix}\]
As the recommender queries more, the matrix is easier to estimate. Suppose instead some round $t$ (note that the recommender does not have to query one random item per user at each round), the recommender has the following matrix.
\[\widetilde{\bR}(t) =
    \begin{pmatrix} 
    1 & 0 & - & -\\
    0 & 1 & - & 2\\
    - & 1 & 1 & -\\
    - & 0 & 0 & 20
\end{pmatrix}\]
At this point, it is no longer possible for any feasible completed matrix to reach as low as rank 1. One way to see this is that row 1 and row 2 cannot possibly be multiples (by a nonzero constant) of each other. Rank 2 is possible however, so any feasible and optimal completed matrix must achieve this rank. While Rank 2 is the true rank of the matrix, it does not mean the recommender will definitely complete the matrix perfectly. Consider the following completion, which is feasible and optimal according to~\eqref{eqn:rank_min}, but not exact according to $\bR^\star$:
\[\widehat{\bR}(t) =
    \begin{pmatrix} 
    1 & 0 & {\color{red}0} & {\color{red}20}\\
    0 & 1 & {\color{red}1} & 2\\
    {\color{red}0} & 1 & 1 & {\color{red}2}\\
    {\color{red}1} & 0 & 0 & 20
\end{pmatrix}\]
In summary, the results of matrix completion generally improve as more entries are queried and often, the optimal solution is not unique. Even further, rank minimization as an objective is actually not very easy to solve at a large scale and it is common to replace this objective with a (convex) approximation such as nuclear norm~\cite{candes_exact_2008, recht_simpler_2011, candes_power_2010}. This approximation is outside the scope this paper, we refer the reader to surveys from \citet{nguyen_mc_survey, li2019surveymatrixcompletionperspective, panish2025matrixcompletionsurveytheory} for further discussion.
\subsubsection{Connections to coherence} \label{app:MC_coherence}
\citet{candes_exact_2008} discuss the difficulty (i.e., how many entries should be queried) of perfectly recovering a matrix. Consider their example of a difficult-to-perfectly-recover matrix:
\[\bR^\star =
    \begin{pmatrix} 
    0 & 0 & \dots & 0 & 1\\
    0 & 0 & \dots & 0 & 0\\
    \vdots & \vdots & \vdots & \vdots & \vdots\\
    0 & 0 & \dots & 0 & 0
\end{pmatrix}\]
To perfectly recover $\bR^\star$ (i.e., to have reached an estimate $\widehat{\bR}$ s.t. $\widehat{\bR} = \bR^\star$) the learner must sample \emph{nearly all} entries. This stems from the issue that this matrix, even though it is low-rank, is incredibly sparse. In particular, the singular vectors of $\bR^\star$ (recall that these are the vectors that make up the singular value decomposition of a matrix, $M$, in the form $M = \sum_{i \in \textrm{rank}(M)}\sigma_i \mathbf{u}_i\mathbf{v}_i$ where $\sigma_i$ are singular values) are highly concentrated. \citet{candes_exact_2008} define a formal measurement of this singular value spread called \emph{``coherence''}. Matrices with a very high coherence (such as $\bR^\star$ in the example) are very hard to exactly complete.

While we do not explicitly focus on it as we primarily introduce MC to establish our own model, the majority-minority matrices we discuss clearly have a similar-structured sparsity given their block-diagonal structure. Additionally, the relevant failing of MC for the matrices we study is not its inexactness, but its potential zeroing out of minority preferences specifically. It may be of interest to future work that is entirely in the (online) MC perspective to analyze matrix completion for majority-minority (likely highly coherent) group structured matrices. There may be interesting guarantees on exact completion of the whole majority-minority matrix and other more group-disparate guarantees on the (higher) difficulty of exactly recovering the minority part of the matrix.

\subsubsection{Are the assumptions of Proposition~\ref{thm:rank_min_bad} strong?} \label{app:MC_ass}

We begin by discussing what Proposition~\ref{thm:rank_min_bad}'s assumption on $\Omega$ actually looks like in practice. Suppose we have a majority-minority ratings matrix made up of 10 users and 10 items. 2 users are the minority users and each of them only likes a specific item that no one else likes. Ordered such that it is directly in diagonal form, such a matrix may look like:
\begin{comment}
\[\bR^\star =
    \begin{pmatrix} 
    0.2 & 0.7 & 0.4 & 0.9 & 0.1 & 0.6 & 0.3 & 0.8 & 0.0 & 0.0\\
    0.5 & 0.1 & 0.9 & 0.2 & 0.7 & 0.4 & 0.8 & 0.3 & 0.0 & 0.0\\
    0.8 & 0.3 & 0.6 & 0.1 & 0.9 & 0.2 & 0.5 & 0.7 & 0.0 & 0.0\\
    0.1 & 0.9 & 0.2 & 0.5 & 0.8 & 0.3 & 0.7 & 0.4 & 0.0 & 0.0\\
    0.6 & 0.4 & 0.7 & 0.3 & 0.2 & 0.9 & 0.1 & 0.8 & 0.0 & 0.0\\
    0.3 & 0.8 & 0.1 & 0.7 & 0.4 & 0.5 & 0.9 & 0.2 & 0.0 & 0.0\\
    0.9 & 0.2 & 0.5 & 0.4 & 0.3 & 0.8 & 0.6 & 0.1 & 0.0 & 0.0\\
    0.4 & 0.5 & 0.8 & 0.6 & 0.1 & 0.7 & 0.2 & 0.9 & 0.0 & 0.0\\
    0.0 & 0.0 & 0.0 & 0.0 & 0.0 & 0.0 & 0.0 & 0.0 & {\color{red}0.8} & 0.0 \\
    0.0 & 0.0 & 0.0 & 0.0 & 0.0 & 0.0 & 0.0 & 0.0 & 0.0 & {\color{red}0.9} \\
\end{pmatrix}\]
\end{comment}

\[
\begin{tikzpicture}[baseline=(A.center)]
  \node (L) {$\bR^{\star}=$};

  \node[right=2mm of L] (A) {$
  \begin{pmatrix}
  0.2 & 0.7 & 0.4 & 0.9 & 0.1 & 0.6 & 0.3 & 0.8 & 0.0 & 0.0\\
  0.5 & 0.1 & 0.9 & 0.2 & 0.7 & 0.4 & 0.8 & 0.3 & 0.0 & 0.0\\
  0.8 & 0.3 & 0.6 & 0.1 & 0.9 & 0.2 & 0.5 & 0.7 & 0.0 & 0.0\\
  0.1 & 0.9 & 0.2 & 0.5 & 0.8 & 0.3 & 0.7 & 0.4 & 0.0 & 0.0\\
  0.6 & 0.4 & 0.7 & 0.3 & 0.2 & 0.9 & 0.1 & 0.8 & 0.0 & 0.0\\
  0.3 & 0.8 & 0.1 & 0.7 & 0.4 & 0.5 & 0.9 & 0.2 & 0.0 & 0.0\\
  0.9 & 0.2 & 0.5 & 0.4 & 0.3 & 0.8 & 0.6 & 0.1 & 0.0 & 0.0\\
  0.4 & 0.5 & 0.8 & 0.6 & 0.1 & 0.7 & 0.2 & 0.9 & 0.0 & 0.0\\
  0.0 & 0.0 & 0.0 & 0.0 & 0.0 & 0.0 & 0.0 & 0.0 & {\color{red}0.8} & 0.0 \\
  0.0 & 0.0 & 0.0 & 0.0 & 0.0 & 0.0 & 0.0 & 0.0 & 0.0 & {\color{red}0.9} \\
  \end{pmatrix}
  $};

  \begin{scope}[on background layer]
    \fill[teal!25, rounded corners, fill opacity=0.45]
      ($(A.north east)+(-1.50cm,-0.25cm)$) rectangle
      ($(A.south east)+(-0.10cm, 0.25cm)$);

    \fill[orange!30, rounded corners, fill opacity=0.45]
      ($(A.south west)+( 0.10cm, 0.90cm)$) rectangle
      ($(A.south east)+(-0.10cm, 0.25cm)$);
  \end{scope}
\end{tikzpicture}
\]
{The elements in {\color{red}{red}} are the only ones which the theorem assumes are \emph{not} queried}; all other elements may or may not be queried. Thus, we see that the theorem statement holds as long as only these specific 2 out of 100 elements are not randomly selected for query. Meanwhile, 2 out of 10 (a much larger proportion) users might be poorly served by a recommender as a result of this.

The block diagonal structure of $\bR^\star$ (what we call a majority-minority matrix) could also be perceived as a (strong) assumption of Proposition~\ref{thm:rank_min_bad}. To this end, we point out that technically, it does not have to be the true matrix, $\bR^\star$, that is a block-diagonal, but that simply it should look like the partially completed matrix \emph{could be} from a block-diagonal. Concretely, consider the following alternative true ratings matrix:
\begin{comment}
\[\bR^{\star'} =
    \begin{pmatrix} 
    0.2 & 0.7 & 0.4 & 0.9 & 0.1 & 0.6 & 0.3 & 0.8 & {\color{red}0.1} & 0.0\\
    0.5 & 0.1 & 0.9 & 0.2 & 0.7 & 0.4 & 0.8 & 0.3 & 0.0 & 0.0\\
    0.8 & 0.3 & 0.6 & 0.1 & 0.9 & 0.2 & 0.5 & 0.7 & 0.0 & 0.0\\
    0.1 & 0.9 & 0.2 & 0.5 & 0.8 & 0.3 & 0.7 & 0.4 & 0.0 & {\color{red}0.7}\\
    0.6 & 0.4 & 0.7 & 0.3 & 0.2 & 0.9 & 0.1 & 0.8 & 0.0 & 0.0\\
    0.3 & 0.8 & 0.1 & 0.7 & 0.4 & 0.5 & 0.9 & 0.2 & 0.0 & 0.0\\
    0.9 & 0.2 & 0.5 & 0.4 & 0.3 & 0.8 & 0.6 & 0.1 & 0.0 & 0.0\\
    0.4 & 0.5 & 0.8 & 0.6 & 0.1 & 0.7 & 0.2 & 0.9 & 0.0 & 0.0\\
    0.0 & {\color{red}0.5} & 0.0 & 0.0 & 0.0 & {\color{red}0.2} & 0.0 & 0.0 & {\color{red}0.8} & 0.0 \\
    {\color{red}0.8} & 0.0 & 0.0 & 0.0 & 0.0 & 0.0 & 0.0 & {\color{red}0.3} & 0.0 & {\color{red}0.9} \\
\end{pmatrix}\]
\end{comment}

\[
\begin{tikzpicture}[baseline=(A.center)]
  \node (L) {$\bR^{\star'}=$};

  \node[right=2mm of L] (A) {$
  \begin{pmatrix}
  0.2 & 0.7 & 0.4 & 0.9 & 0.1 & 0.6 & 0.3 & 0.8 & {\color{red}0.1} & 0.0\\
  0.5 & 0.1 & 0.9 & 0.2 & 0.7 & 0.4 & 0.8 & 0.3 & 0.0 & 0.0\\
  0.8 & 0.3 & 0.6 & 0.1 & 0.9 & 0.2 & 0.5 & 0.7 & 0.0 & 0.0\\
  0.1 & 0.9 & 0.2 & 0.5 & 0.8 & 0.3 & 0.7 & 0.4 & 0.0 & {\color{red}0.7}\\
  0.6 & 0.4 & 0.7 & 0.3 & 0.2 & 0.9 & 0.1 & 0.8 & 0.0 & 0.0\\
  0.3 & 0.8 & 0.1 & 0.7 & 0.4 & 0.5 & 0.9 & 0.2 & 0.0 & 0.0\\
  0.9 & 0.2 & 0.5 & 0.4 & 0.3 & 0.8 & 0.6 & 0.1 & 0.0 & 0.0\\
  0.4 & 0.5 & 0.8 & 0.6 & 0.1 & 0.7 & 0.2 & 0.9 & 0.0 & 0.0\\
  0.0 & {\color{red}0.5} & 0.0 & 0.0 & 0.0 & {\color{red}0.2} & 0.0 & 0.0 & {\color{red}0.8} & 0.0 \\
  {\color{red}0.8} & 0.0 & 0.0 & 0.0 & 0.0 & 0.0 & 0.0 & {\color{red}0.3} & 0.0 & {\color{red}0.9} \\
  \end{pmatrix}
  $};

  \begin{scope}[on background layer]
    \fill[teal!25, rounded corners, fill opacity=0.45]
      ($(A.north east)+(-1.50cm,-0.25cm)$) rectangle
      ($(A.south east)+(-0.10cm, 0.25cm)$);

    \fill[orange!30, rounded corners, fill opacity=0.45]
      ($(A.south west)+( 0.10cm, 0.90cm)$) rectangle
      ($(A.south east)+(-0.10cm, 0.25cm)$);
  \end{scope}
\end{tikzpicture}
\]

Now instead, the minority users' preferences and their preferred items are not entirely discrete, but just \emph{sparse}. As long as the red elements, which are now greater in number than in $\bR^\star$, are not queried, Proposition~\ref{thm:rank_min_bad} still holds because these elements could be filled in as zero and a low rank approximation is achieved. Of course, as the sparsity decreases, this kind of assumption of $\Omega$ not containing any of the red entries becomes less reasonable. Overall, it is not the case that perfect block-diagonal structure must be true; sparse niche users/items can create the same effect.
\subsubsection{Proof of Proposition~\ref{thm:rank_min_bad}} \label{app:MC_proofs}
\RankMinBad*
\noindent \begin{proof}[Proof of~\Cref{thm:rank_min_bad}]

The rank of a matrix is defined as the number of linearly independent columns or rows and it is invariant to column or row permutations. Without loss of generality, consider the true (unknown) ratings matrix $\bR^\star$ in the following (potentially permuted) form:
\[\bR^\star =
    \begin{pmatrix} 
    \bR_{\maj}^\star & \mathbf{0} \\
    \mathbf{0} & \bR_{\minor}^\star
\end{pmatrix}\]
where $\bR_{\maj}^\star \in \R^{\barm \times \barn}$ and $\bR_{\minor} \in \R^{(m-\barm) \times (n-\barn)}$. In other words, we reorder the columns and rows such that the first $\barm$ users are all $u \in \calUmaj$ while the first $\barn$ items are all $i \in \calImaj$. Similarly, we refer to the partially completed matrix in these row and column orderings: 
\[\widetilde{\bR} =
    \begin{pmatrix} 
    \widetilde{\bR}_\maj & \tilde{\mathbf{0}} \\
    \tilde{\mathbf{0}} & \tilde{\mathbf{0}}
\end{pmatrix}\]
where $\widetilde{\bR}_{\maj}$ is the partially completed version of $\bR_{\maj}^\star$ matrix (that is, just $\bR_{\maj}^\star$, but with unknown entries represented as $-$) and $\tilde{\mathbf{0}}$ refers to partially completed blocks of 0s (that is, blocks of $0$, but with unknown entries represented as $-$). Note the bottom right block is a partially complete block of zeros by~\Cref{thm:rank_min_bad}'s assumption on $\Omega$, that the only $\bR_{\minor}^\star$ entries that are known (i.e., constrained for the $\widehat{\bR}$ solution) are zeros. Lastly, we refer to the completed version (i.e., a solution to Equation~\eqref{eqn:rank_min}), in the same reordering:
\[\widehat{\bR} =
    \begin{pmatrix} 
    \widehat{\bR}_{\maj} & \widehat{\mathbf{A}} \\
    \widehat{\mathbf{B}} & \widehat{\bR}_{\minor}
\end{pmatrix}\]

In order to prove the proposition, all we must do is show the following claim: 
\begin{claim}\label{claim:MC} Given any $\widehat{\bR}$ that is a feasible optimal solution to Equation \eqref{eqn:rank_min}, we can make $\widehat{\mathbf{A}} = \mathbf{0}$, $\widehat{\mathbf{B}} = \mathbf{0}$, and $\widehat{\bR}_{\minor} = \mathbf{0}$ and have a new $\widehat{\bR}'$ that is also a feasible optimal solution.
\end{claim}
Then, because $\widehat{\bR}'$ is a feasible optimal solution solution but also sparser than $\widehat{\bR}$, it must be the case that the sparsest feasible optimal solution is one that has $\widehat{\mathbf{B}} = \mathbf{0}$, and $\widehat{\bR}_{\minor} = \mathbf{0}$. This concludes the proof of the proposition. Before we delve into the proof, one may wonder why would it ever be the case that any of the $\widehat{\mathbf{A}}, \widehat{\mathbf{B}}, \widehat{\mathbf{\bR}}_{\minor}$ would be different than $\mathbf{0}$. We discuss this in the next subsection.

To prove Claim~\ref{claim:MC}: Recall that by definition of matrix rank, given any matrix $\mathbf{X}$, if we define $\mathbf{X}'$ as the same matrix but with any appended row or column vector, $\textrm{rank}(\mathbf{X}') \geq \textrm{rank}(\mathbf{X})$. Now, consider any $\widehat{\bR}$ that is a feasible and optimal solution to Equation \eqref{eqn:rank_min}. We can recover $\widehat{\bR}_\maj$ by iteratively removing columns and rows from $\widehat{\bR}$. Therefore, $\textrm{rank}(\widehat{\bR}) \geq \textrm{rank}(\widehat{\bR}_\maj)$. Padding out $\widehat{\bR}_\maj$ with zeros such that it looks like:
\[\widehat{\bR}' =
    \begin{pmatrix} 
    \widehat{\bR}_{\maj} & \mathbf{0} \\
    \mathbf{0} & \mathbf{0}
\end{pmatrix}\]
such that $\widehat{\bR}' \in \R^{m\times n}$, we now have $\textrm{rank}(\widehat{\bR}) \geq \textrm{rank}(\widehat{\bR}_\maj) = \textrm{rank}(\widehat{\bR}')$. But clearly, $\widehat{\bR}'$ still satisfies all the $\Omega$ constraints (recall  that by the~\Cref{thm:rank_min_bad}'s assumption that the only $(u,i) \in \Omega$ where $u \in \calUmin$ and $i \in \calImin$ are s.t. $r^\star_{u,i} = 0$) and it achieves the same or lower rank as $\widehat{\bR}$, which is a rank minimizing solution. Therefore $\widehat{\bR}'$ must also a feasible optimal solution.
\end{proof}

\subsubsection{A note about the ``less sparse'' solutions to Equation \ref{eqn:rank_min}}\label{app:less_sparse}
It is not immediately obvious how there even exist the less-sparse feasible and optimal solutions where $\widehat{\mathbf{A}} \neq \mathbf{0}$, $\widehat{\mathbf{B}} \neq \mathbf{0}$, and/or $\widehat{\bR}_{\minor} \neq \mathbf{0}$ because it may seem that to find a feasible minimum rank $\widehat{\bR}$ under the assumption on $\Omega$ in~\Cref{thm:rank_min_bad}, one solving method that would work and is feels very natural due to the block nature is to immediately set $\widehat{\mathbf{A}} = \mathbf{0}$, $\widehat{\mathbf{B}} = \mathbf{0}$, and $\widehat{\bR}_{\minor} = \mathbf{0}$, and then proceed to filling in $\widetilde{\bR}_\maj$. However, generally, it is not necessary that $\widehat{\mathbf{A}} = \mathbf{0}$, $\widehat{\mathbf{B}} = \mathbf{0}$, and $\widehat{\bR}_{\minor} = \mathbf{0}$ for $\widehat{\bR}$ to be feasible and optimal. To illustrate this, consider the following example where {\color{red}red} represents $\widetilde{\bR}_\maj$, {\color{ForestGreen}green} represents $\mathbf{A}, \mathbf{B}$, and {\color{RoyalPurple}purple} represents $\widetilde{\bR}_\minor$:
\[\widetilde{\bR} =
    \begin{pmatrix} 
    {\color{red}1} & {\color{red}-} & {\color{red}0} &{\color{ForestGreen}-} & {\color{ForestGreen}0} & {\color{ForestGreen}-}\\
    {\color{red}0} & {\color{red}-} & {\color{red}0} & {\color{ForestGreen}0} & {\color{ForestGreen}-} & {\color{ForestGreen}-}\\
    {\color{red}1} & {\color{red}0} & {\color{red}1} & {\color{ForestGreen}-} & {\color{ForestGreen}0} & {\color{ForestGreen}0}\\
    {\color{ForestGreen}-} & {\color{ForestGreen}0} & {\color{ForestGreen}-} & {\color{RoyalPurple}-} & {\color{RoyalPurple}-} & {\color{RoyalPurple}-}\\
    {\color{ForestGreen}0} & {\color{ForestGreen}-} & {\color{ForestGreen}-} & {\color{RoyalPurple}-} & {\color{RoyalPurple}-} & {\color{RoyalPurple}-}\\
    {\color{ForestGreen}0} & {\color{ForestGreen}0} & {\color{ForestGreen}-} &{\color{RoyalPurple}-} & {\color{RoyalPurple}0} & {\color{RoyalPurple}-}
\end{pmatrix}\]
Because of the red items, there is no way to fill in rows such that \emph{all} rows are linearly dependent with each other; as a result, we cannot make this matrix only rank 1. However, rank 2 is possible, therefore this is minimum rank. Now, one obvious way to fill this in is to do the following:
\[\widehat{\bR} =
    \begin{pmatrix} 
    {\color{red}1} & {\color{red}0} & {\color{red}0} &{\color{ForestGreen}0} & {\color{ForestGreen}0} & {\color{ForestGreen}0}\\
    {\color{red}0} & {\color{red}0} & {\color{red}0} & {\color{ForestGreen}0} & {\color{ForestGreen}0} & {\color{ForestGreen}0}\\
    {\color{red}1} & {\color{red}0} & {\color{red}1} & {\color{ForestGreen}0} & {\color{ForestGreen}0} & {\color{ForestGreen}0}\\
    {\color{ForestGreen}0} & {\color{ForestGreen}0} & {\color{ForestGreen}0} & {\color{RoyalPurple}0} & {\color{RoyalPurple}0} & {\color{RoyalPurple}0}\\
    {\color{ForestGreen}0} & {\color{ForestGreen}0} & {\color{ForestGreen}0} & {\color{RoyalPurple}0} & {\color{RoyalPurple}0} & {\color{RoyalPurple}0}\\
    {\color{ForestGreen}0} & {\color{ForestGreen}0} & {\color{ForestGreen}0} &{\color{RoyalPurple}0} & {\color{RoyalPurple}0} & {\color{RoyalPurple}0}
\end{pmatrix}\]
Clearly this is rank 2 and it keeps all known entries, so it is feasible. However, it is not a unique solution. Notice that we can easily change some green and purple entries, keep the same rank, and still remain feasible.
\[\widehat{\bR}' =
    \begin{pmatrix} 
    {\color{red}1} & {\color{red}0} & {\color{red}0} &{\color{ForestGreen}0} & {\color{ForestGreen}0} & {\color{ForestGreen}0}\\
    {\color{red}0} & {\color{red}0} & {\color{red}0} & {\color{ForestGreen}0} & {\color{ForestGreen}0} & {\color{ForestGreen}0}\\
    {\color{red}1} & {\color{red}0} & {\color{red}1} & {\color{ForestGreen}1} & {\color{ForestGreen}0} & {\color{ForestGreen}0}\\
    {\color{ForestGreen}1} & {\color{ForestGreen}0} & {\color{ForestGreen}1} & {\color{RoyalPurple}1} & {\color{RoyalPurple}0} & {\color{RoyalPurple}0}\\
    {\color{ForestGreen}0} & {\color{ForestGreen}0} & {\color{ForestGreen}0} & {\color{RoyalPurple}0} & {\color{RoyalPurple}0} & {\color{RoyalPurple}0}\\
    {\color{ForestGreen}0} & {\color{ForestGreen}0} & {\color{ForestGreen}0} &{\color{RoyalPurple}0} & {\color{RoyalPurple}0} & {\color{RoyalPurple}0}
\end{pmatrix}\]
This is still only a rank 2 matrix, which is the minimum rank, but we now have some of the green and purple entries set as non-zeros. Proposition~\ref{thm:rank_min_bad} proves that any feasible optimal solution can be ``reduced'' to one that leaves the green and purple entries as all zeros.

\subsection{Details on our abstraction}
\subsubsection{Proof of observation 
\ref{obs:permutation_invariance}} 
\label{app:abstraction_proofs}
\PermInv*
\begin{proof}[Proof of Obs \ref{obs:permutation_invariance}]
    Let $\bR$ be the original (i.e., non-permuted) ratings matrix. Let $\pi_R$ be a permutation of users (rows) and $\pi_C$ be a permutation of items (columns) and $\mathbf{P}_R$, $\mathbf{P}_C$ the corresponding permutation matrices. Then the permuted ratings matrix is given by 
    \[\bR' = \mathbf{P}_R \bR \mathbf{P}_C.\]
    \textbf{Claim 1}: Let $\widehat{\bR}$ be the rank-$k$ PCA of $\bR$. Then 
    $\widehat{\bR}' = \mathbf{P}_R \widehat{\bR} \mathbf{P}_C$
    is the rank-$k$ PCA of $\bR'$. 

Recall that $\widehat{\bR}$ is a rank-$k$ matrix minimizing the sum of squared errors. For any rank-$k$ matrix $\mathbf{X}$:
\begin{align*}
    \|\bR-\mathbf{X}\|_F^2 & = \sum_{u \in [m]}\sum_{i \in [n]}(r_{ui}-x_{ui})^2\\
    & = \sum_{u \in [m]}\sum_{i \in [n]}(r_{\pi_R(u), \pi_C(i)}-x_{\pi_R(u), \pi_C(i)})^2\\
    & = \|\mathbf{P}_R\bR \mathbf{P}_C -\mathbf{P}_R\mathbf{X} \mathbf{P}_C\|_F^2 \\
    & = \|\bR' -\mathbf{P}_R\mathbf{X} \mathbf{P}_C\|_F^2
\end{align*}
Thus:
\[\widehat{\bR} \in \argmin_{\mathbf{X}: \mathrm{rank}(\mathbf{X})=k}\|\bR-\mathbf{X}\|_F \iff \widehat{\bR}' \in \argmin_{\mathbf{X}: \mathrm{rank}(\mathbf{X})=k}\|\bR'-\mathbf{X}\|_F\]

    \textbf{Claim 2}: Let $\mu$ and $\mu'$ be the recommendations based on $\widehat{\bR}$ and $\widehat{\bR}'$, respectively. Then for all $u \in [m]$:
    \[r_{u, \mu_u} = r'_{\pi_R(u), \mu_{\pi_R(u)}'}.\]
By construction, $\hat{r}'_{\pi_R(u), \pi_C(i)} = \hat{r}_{u, i}$ for all $(u,i) \in [m] \times [n]$. Thus, 
\begin{align*}
    & \mu_u \in \argmax_{i \in [n]}\hat{r}_{u,i} \\
    & \iff \mu_u \in \argmax_{i \in [n]}\hat{r}_{\pi_R(u),\pi_C(i)}'\\
    & \iff \pi_C(\mu_u) \in \argmax_{i \in [n]}\hat{r}_{\pi_R(u),i}'
\end{align*}
Further, $\|\mathbf{R}_{:,i}\|_1=\|\mathbf{R}_{:,\pi_C(i)}\|_1$. Therefore, the recommendation will be the same regardless of ordering. 
\end{proof}
\subsubsection{Example of a majority-minority matrix with a singular value gap}\label{app:maj_min_ex}
\begin{example}\label{ex:matrix}
A very simple majority-minority matrix is a binary matrix of $m$ users and $4$ items where every user likes just one item: $2$ popular items are liked by $m_{\maj}$ users and $2$ less-popular items are liked by $m_{\minor}<m_{\maj}$ users. Ordering users by which item they like, and listing the popular items first the we can write $\bR$ as
\[\bR = \begin{pmatrix}\mathbf{1}_{m_\maj} &  0 & 0 & 0 \\
0 & \mathbf{1}_{m_\maj}  & 0 & 0\\
0 & 0 & \mathbf{1}_{m_\minor} & 0 \\
0 &  0 & 0 & \mathbf{1}_{m_\minor}
\end{pmatrix}\]
where $\mathbf{1}_{m} \in \R^{m}$ is a vectors of all $1$s, one for each of the users who like that item. Likewise:
\[\bR =\begin{pmatrix} \bR_{\maj} & \mathbf{0} \\

\mathbf{0} & \bR_{\minor}\end{pmatrix}\]
where $\bR_{\maj} \in \R^{2m_{\maj}\times 2}=\begin{pmatrix}\mathbf{1}_{m_\maj} &  0 \\
0 & \mathbf{1}_{m_\maj} 
\end{pmatrix}$ has the ratings of all users who like the popular items and $\bR_{\minor} \in \R^{2m_{\minor}\times 2}=\begin{pmatrix}\mathbf{1}_{m_\minor} &  0 \\
0 & \mathbf{1}_{m_\minor} 
\end{pmatrix}$ has the ratings of all users who like the less-popular items. The Singular Value Gap space is $\mathcal{G}(\bR) =(\sqrt{m_{\minor}},\sqrt{m_{\maj}} )$
\end{example}

\subsubsection{Proof of Proposition \ref{rhat_equals_nbar1}}\label{app:rhat_nbar_proof}
\RHatNBar*

\begin{proof}
    Recall that $k_\maj$ is the rank of $\widetilde{\bR}$. Since $\tbR$ is a block diagonal matrix, its singular values are simply the singular values of the blocks. Further, because it has a singular value gap, by definition, all the nonzero singular values of $\tbR_\maj >$ the nonzero singular values of $\tbR_{\minor}$. Thus we have that $\sigma_{k_{\maj}+1}(\tbR) = \sigma_{1}(\tbR_\minor)$. But since $\alpha \in \mathcal{G}(\tbR)$, this means that $\alpha$ is in the singular value gap between $\tbR_\maj$ and $\tbR_\minor$, equivalently, $\alpha > \sigma_1(\tbR_\minor)$. This implies that $k^\star \le k_{\maj}$, since $k^\star$ is the solution to:
\begin{center}
        $\min_{k\in [\mathrm{rank}(\tbR)]} k \quad 
        \text{s.t., } \quad \sigma_{k+1}(\tbR) \leq \alpha$.
\end{center}
so clearly $k_\maj$ as the solution is feasible, but it might not be the smallest possible.

    But likewise, from the fact that $\alpha$ is in the singular value gap between $\tbR_\maj$ and $\tbR_\minor$, we have that:
    $\sigma_{k_{\maj}}(\tbR) = \sigma_{k_{\maj}}(\tbR_{\maj}) > \alpha$, which implies $k^\star \ge k_{\maj}$ (recall that singular values are ordered, so any $\sigma_{k_\maj -i}$ are weakly greater than $\sigma_{k_\maj}$). Thus $k^\star = k_{\maj}$. 
\end{proof}

\section{Supplementary material for Section \ref{sec:users}}\label{app:users}
\subsection{Missing Proofs}\label{app:users_proofs}
\subsubsection{Proof of Theorem~\ref{g_maj_b_min2}} \label{app:users_proofs_gmajbmin}
\GMajBMin*
\begin{proof}
By \Cref{rhat_equals_nbar1}, $\bfR^\star$ is reduced to rank $k_{\maj}$, meaning $\widehat{\bfR}= \sum_{i \in [k_{\maj}]}\sigma_i\mathbf{u_i}\mathbf{v_i}^\top$, where this is a sum over the $k_{\maj}$ largest singular values. Let $U_{\maj}\Sigma_{\maj}V_{\maj}^\top$, $U_{\minor}\Sigma_{\minor}V_{\minor}^\top$ be a SVD for $\bR_{\maj}^\star$ and $\bR_{\minor}^\star$, respectively. Then the following is a SVD for $\bR^\star$:
    \[\begin{pmatrix}U_{\maj} & \mathbf{0} \\
    \mathbf{0} & U_{\minor} \end{pmatrix}\begin{pmatrix}\Sigma_{\maj} & \mathbf{0} \\
    \mathbf{0} & \Sigma_{\minor} \end{pmatrix}\begin{pmatrix}V_{\maj}^\top & \mathbf{0} \\
    \mathbf{0} & V_{\minor}^\top \end{pmatrix} = \begin{pmatrix}U_{\maj}\Sigma_{\maj}V_{\maj}^\top & \mathbf{0} \\
    \mathbf{0} & U_{\minor}\Sigma_{\minor}V_{\minor}^\top \end{pmatrix}.
    \]
Note that
\[ 
    \mathbf{\Sigma} := \begin{pmatrix}\Sigma_{\maj} & \mathbf{0} \\
    \mathbf{0} & \Sigma_{\minor} \end{pmatrix}
    \]
is not necessarily a perfectly ordered diagonal of singular values because even if $\Sigma_{\maj}$ and $\Sigma_{\minor}$ are ordered, we do not assume full rank of $\bR_{\maj}^\star$, meaning that some diag elements of $\Sigma_{\maj}$ may be $0$. 

However by Assumption \ref{assumption:singular_value_gap}, the first $k_{\maj}$ singular values belong to $\Sigma_{\maj}$. Also using the definition of $k_{\maj}$ as the rank of $\bR_{\maj}^\star$,
\[\widehat{\bR} = \begin{pmatrix}U_{\maj}\Sigma_{\maj}V_{\maj}^\top & \mathbf{0} \\
    \mathbf{0} & \mathbf{0} \end{pmatrix} = \begin{pmatrix} \bR_{\maj}^\star & \mathbf{0} \\
\mathbf{0} & \mathbf{0}\end{pmatrix}\]

Firstly, we will show that for all $u \in \calUmaj, r_{u,i_u^\star}^\star=\max_{i \in [n]}r_{u,i}^\star$.
Because all ratings for majority users are preserved, for all $u \in \calUmaj$:
\[\argmax_{i \in [n]}\hat{r}_{u,i}=\argmax_{i \in [n]}r_{u,i}^\star.\]
Hence, $i_u^\star \in \argmax_{i \in [n]}r_{u,i}^\star$ and $r_{u,i_u^\star}^\star=\max_{i \in [n]}r_{u,i}^\star$.

Secondly, we will show that for all $u \in \calUmin, r_{u,i_u^\star}^\star=0$. For all minority users:
\[\argmax_{i \in [n]}\hat{r}_{u,i} = [n]\]
since their ratings are represented by a vector of $0$s. 
Therefore, they will be recommended an item from the arg max of 
\begin{equation*}
    \begin{aligned}
        \text{maximize}_{i \in [n]} \|\widehat{\bR}_{i}\|_1
    \end{aligned}
\end{equation*}
But clearly (because $\widehat{\bR}$ is simply the zero-padded majority matrix) this can be rewritten as 
\begin{equation*}
    \begin{aligned}
        \text{maximize}_{i \in [\barn]} \|\bR_{\maj_i}^\star\|_1
    \end{aligned}
\end{equation*}
By assumption, $\sum_{u \in \calUmin}r_{u,i}^\star=0$ for all $i \in [\bar{n}]$. Thus for all $u \in \calUmin$, $i_u^\star \in [\bar{n}]$ and $r_{u,i_u^\star}^\star=0$.
\end{proof}

\subsubsection{Proof of Theorem~\ref{thm:SW_EA_Model1}} \label{app:users_proofs_sw_improve}
\SWImprove*
We first refer to known lower-bounds on matrix singular values when appending a column:
\begin{lemma}[Corollary 3.5 of~\citet{chretien_perturbation_2014}]\label{thm:singular_val_LB1}
    Let $d$ be a positive integer and let $\mathbf{M} \in \mathbb{R}^{d \times d}$ be a positive semi-definite matrix with rank $k\leq d$, whose eigenvalues are $\lambda_1 \ge \dots \ge \lambda_d$. For some $\mathbf{a} \in \R^d$, and $c \in \R$ let $\mathbf{A}$ be given by 
    \[\mathbf{A} = \begin{pmatrix}c & \mathbf{a}^\top\\
    \mathbf{a} & \mathbf{M}\end{pmatrix}\]
    Then 
    \[\lambda_{k+1}(\mathbf{A}) \ge \min(c,\lambda_k)-\|\mathbf{a}\|_2.\] 
\end{lemma}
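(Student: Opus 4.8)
The plan is to view $\mathbf{A}$ as a symmetric perturbation of a block-diagonal matrix and to combine a Weyl-type eigenvalue inequality with an elementary counting argument on the spectrum of the unperturbed matrix. Since $\mathbf{M}$ is positive semi-definite (hence symmetric) and the border $(\mathbf{a},\mathbf{a}^\top)$ is symmetric, $\mathbf{A} \in \R^{(d+1)\times(d+1)}$ is symmetric, so its eigenvalues are real and $\lambda_{k+1}(\mathbf{A})$ is well defined (note $k+1 \le d+1$). Throughout I sort eigenvalues in decreasing order.

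First I would split $\mathbf{A} = \mathbf{A}_0 + \mathbf{E}$, where
\[
\mathbf{A}_0 = \begin{pmatrix} c & \mathbf{0}^\top \\ \mathbf{0} & \mathbf{M}\end{pmatrix}, \qquad \mathbf{E} = \begin{pmatrix} 0 & \mathbf{a}^\top \\ \mathbf{a} & \mathbf{0}\end{pmatrix}.
\]
The perturbation $\mathbf{E}$ has a trivial spectrum: writing an eigenvector as $(x_0,\mathbf{v})$ and solving $\mathbf{E}(x_0,\mathbf{v})^\top = \mu (x_0,\mathbf{v})^\top$ forces $\mu^2 = \|\mathbf{a}\|_2^2$, so the only nonzero eigenvalues of $\mathbf{E}$ are $\pm\|\mathbf{a}\|_2$, whence $\lambda_{\min}(\mathbf{E}) = -\|\mathbf{a}\|_2$ and $\|\mathbf{E}\|_2 = \|\mathbf{a}\|_2$ (the case $\mathbf{a}=\mathbf{0}$ being immediate). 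I would then apply the Weyl lower bound $\lambda_i(X+Y) \ge \lambda_i(X) + \lambda_{\min}(Y)$ for symmetric matrices, which follows in one line from the Courant–Fischer characterization $\lambda_i(X) = \max_{\dim S = i}\min_{x \in S,\,\|x\|=1} x^\top X x$ together with $x^\top Y x \ge \lambda_{\min}(Y)$ on unit vectors. Taking $X = \mathbf{A}_0$, $Y = \mathbf{E}$, $i = k+1$ gives
\[
\lambda_{k+1}(\mathbf{A}) \ge \lambda_{k+1}(\mathbf{A}_0) - \|\mathbf{a}\|_2.
\]

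It then remains to lower bound $\lambda_{k+1}(\mathbf{A}_0)$. Since $\mathbf{A}_0$ is block diagonal, its spectrum is the $(d+1)$-element multiset $\{c\}\cup\{\lambda_1,\dots,\lambda_d\}$, and because $\mathbf{M}$ is PSD of rank $k$ we have $\lambda_1 \ge \dots \ge \lambda_k > 0 = \lambda_{k+1} = \dots = \lambda_d$. I would show $\lambda_{k+1}(\mathbf{A}_0) \ge \min(c,\lambda_k)$ by counting how many entries of the merged list meet the threshold $\min(c,\lambda_k)$. If $c \ge \lambda_k$, then $\min(c,\lambda_k)=\lambda_k$ and the $k+1$ values $\lambda_1,\dots,\lambda_k,c$ are all $\ge \lambda_k$, so the $(k+1)$-th largest entry is $\ge \lambda_k$. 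If $c < \lambda_k$, then $\min(c,\lambda_k)=c$, the $k$ values $\lambda_1,\dots,\lambda_k$ all exceed $c$, and $c$ itself equals $c$, so again at least $k+1$ entries are $\ge c$. Chaining the two displayed inequalities yields $\lambda_{k+1}(\mathbf{A}) \ge \min(c,\lambda_k) - \|\mathbf{a}\|_2$, as claimed.

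The only delicate step is the counting bound for $\lambda_{k+1}(\mathbf{A}_0)$: one must carefully track where the scalar $c$ lands in the merged, sorted spectrum and account for the rank deficiency of $\mathbf{M}$ (the zero eigenvalues $\lambda_{k+1}=\dots=\lambda_d=0$), which is precisely what produces the $\min(c,\lambda_k)$ split; everything else is routine symmetric-perturbation theory. An essentially equivalent alternative would bypass the decomposition $\mathbf{A}=\mathbf{A}_0+\mathbf{E}$ and instead estimate $\lambda_{k+1}(\mathbf{A})$ directly through Courant–Fischer on a well-chosen $(d+1-k)$-dimensional test subspace, but the Weyl-plus-counting route keeps the dependence on $\|\mathbf{a}\|_2$ most transparent.
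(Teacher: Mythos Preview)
Your proof is correct. The paper does not actually prove this lemma: it is stated as a citation (Corollary~3.5 of Chr\'etien and Darses) and used as a black box in the proof of Theorem~\ref{thm:SW_EA_Model1}, with no argument supplied. Your Weyl-plus-counting argument is a clean, self-contained derivation that the paper simply omits; there is nothing to compare against on the paper's side.
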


And bounds on matrix singular values when removing columns:
\begin{lemma}[Corollary 7.3.6 of~\citet{Horn_Johnson_2012}] \label{thm:remove_cols}
Let $\mathbf{A} \in \mathbb{C}^{m\times n}$ be a hermitian matrix and let $\hat{\mathbf{A}}\in \mathbb{C}^{m\times (n-1)}$ or $\in \mathbb{C}^{(m-1)\times n}$ be a hermitian matrix obtained by deleting any column or row from $\mathbf{A}$. Define $r:=\mathrm{rank}(\mathbf{A})$ and $\sigma_r(\hat{\mathbf{A}}) = 0$ if $m\geq n$ and a column is deleted or if $m \leq n$ and a row is deleted. Then:
\begin{equation*}
    \sigma_1(\mathbf{A}) \geq \sigma_1(\hat{\mathbf{A}}) \geq \sigma_2(\mathbf{A}) \geq \sigma_2(\hat{\mathbf{A}}) \geq \dots \geq \sigma_r(\mathbf{A}) \geq \sigma_r(\hat{\mathbf{A}})
\end{equation*}
\end{lemma}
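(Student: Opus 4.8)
The plan is to reduce this singular-value interlacing claim to an eigenvalue-interlacing statement about Hermitian positive semidefinite matrices, and then invoke Cauchy's interlacing theorem for principal submatrices. The key observation is that the singular values of any $\mathbf{A} \in \mathbb{C}^{m \times n}$ are the nonnegative square roots of the eigenvalues of the Gram matrix $\mathbf{A}^*\mathbf{A}$ (equivalently of $\mathbf{A}\mathbf{A}^*$), both of which are Hermitian and PSD. Deleting a column of $\mathbf{A}$ corresponds exactly to deleting the matching row \emph{and} column of the $n \times n$ Gram matrix $\mathbf{A}^*\mathbf{A}$: if $\hat{\mathbf{A}}$ is obtained by removing column $j$, then $\hat{\mathbf{A}}^*\hat{\mathbf{A}}$ is precisely the $(n-1) \times (n-1)$ principal submatrix of $\mathbf{A}^*\mathbf{A}$ gotten by striking index $j$. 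Thus the squared singular values of $\hat{\mathbf{A}}$ are the eigenvalues of a size-$(n-1)$ principal submatrix of a size-$n$ Hermitian matrix whose eigenvalues are the squared singular values of $\mathbf{A}$.

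The second step is to apply Cauchy's interlacing theorem: if $H$ is Hermitian of order $N$ with eigenvalues $\lambda_1 \ge \cdots \ge \lambda_N$ and $B$ is an order-$(N-1)$ principal submatrix with eigenvalues $\mu_1 \ge \cdots \ge \mu_{N-1}$, then $\lambda_1 \ge \mu_1 \ge \lambda_2 \ge \mu_2 \ge \cdots \ge \mu_{N-1} \ge \lambda_N$. Applying this with $H = \mathbf{A}^*\mathbf{A}$ and $B = \hat{\mathbf{A}}^*\hat{\mathbf{A}}$ yields the interlacing of squared singular values; since all these quantities are nonnegative and the square root is monotone increasing, taking square roots termwise preserves every inequality and gives exactly the chain $\sigma_1(\mathbf{A}) \ge \sigma_1(\hat{\mathbf{A}}) \ge \sigma_2(\mathbf{A}) \ge \cdots$. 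The row-deletion case is entirely symmetric: one instead uses the $m \times m$ Gram matrix $\mathbf{A}\mathbf{A}^*$, noting that deleting a row of $\mathbf{A}$ strikes the corresponding row and column of $\mathbf{A}\mathbf{A}^*$, so that $\hat{\mathbf{A}}\hat{\mathbf{A}}^*$ is a size-$(m-1)$ principal submatrix. In each case, the correct choice of Gram matrix is the one \emph{on the side of the deleted dimension}, so that the shrinking of $\mathbf{A}$ by one column/row becomes the shrinking of a square Hermitian matrix by one principal index.

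The main obstacle will be the careful index bookkeeping at the tail of the chain, encoded in the convention $\sigma_r(\hat{\mathbf{A}}) = 0$. When $m \ge n$ and a column is deleted, $\mathbf{A}$ has up to $n$ nonzero singular values but $\hat{\mathbf{A}}$ has only $n-1$; the chain as written runs up to index $r = \mathrm{rank}(\mathbf{A})$, so if $r$ exceeds the number of singular values $\hat{\mathbf{A}}$ actually possesses, the trailing term $\sigma_r(\hat{\mathbf{A}})$ must be interpreted as a zero obtained by padding the eigenvalue list of the smaller Gram matrix. I would verify that this padding is consistent with Cauchy interlacing by checking that the smallest genuine eigenvalue of $\hat{\mathbf{A}}$'s Gram matrix still dominates the corresponding padded (zero) entry, and that no inequality in the stated chain is violated in the rank-deficient boundary case. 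The remainder—monotonicity of the square root and the symmetry between the two Gram matrices—is routine; the substantive content is the single application of Cauchy interlacing, with the delicate part being the correct alignment of rank, dimension, and the zero-padding convention.
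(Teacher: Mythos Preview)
The paper does not prove this lemma; it is cited without proof as Corollary 7.3.6 of Horn and Johnson's \emph{Matrix Analysis}, so there is no in-paper argument to compare against. Your reduction to Cauchy's eigenvalue interlacing theorem via the Gram matrices $\mathbf{A}^*\mathbf{A}$ (for column deletion) and $\mathbf{A}\mathbf{A}^*$ (for row deletion) is correct and is precisely the standard textbook route, so your proposal is both valid and aligned with the source the paper invokes.
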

\begin{proof}[Proof of Theorem \ref{thm:SW_EA_Model1}]

WLOG we shall assume that $i^\star = \bar{n}+1$ and $\calUmaj\cup \calU_{i^\star} = [m_1]$ (see \Cref{obs:permutation_invariance}).  
In order to prove that social welfare is the sum of majority AND picky item users' top ratings, we shall go first prove the following claims:
\begin{claim}\label{claim:SW_EA_Model1_1}
    Given $\tbR$, the learner will pick $k^\star = k_{\maj}+1$.
\end{claim}
\begin{claim}\label{claim:SW_EA_Model1_2}
    Let $\widehat{\bR}$ be the rank $k^\star$ SVD truncation of $\tbR$. We have that 
 \[\hat{r}_{u,i}=\begin{cases} r_{u,i}^\star & u \in \calUmaj, i \in [\bar{n}]\\
 \tilde{r}_{u,i} & u \in (\calUmaj\cup \calU_{i^\star}), i=i^\star \\
 0 & ow
 \end{cases}.\]
\end{claim}
\begin{proof}[Proof of Claim~\ref{claim:SW_EA_Model1_1}]

Recall that the learner picks $k^\star$ by solving

\begin{align*}
    & \min_{k \le \min\{m,n\}}k \\
    & \text{s.t. }\sigma_{k+1}(\tbR) \le \alpha
\end{align*}

First, we show that $\sigma_{k_{\maj}+1}(\tbR)> \alpha$ which implies that $k^\star \ge k_{\maj}+1$. 

We shall show this by invoking Lemma \ref{thm:singular_val_LB1} to bound the $(k_\maj +1)$th singular value of a matrix $\tilde{\mathbf{X}} \in \R^ {m\times 
(\barn +1)}$ made up of the first $\barn +1$ columns of $\tbR$. We will then show that this is weakly smaller than $\sigma_{k_{\maj}+1}(\tbR)$ using lemma \ref{thm:remove_cols}. 

Let matrix $\mathbf{X} \in \R^ {m\times 
\barn}$ be a matrix made up of the first $\barn$ columns of $\bR^\star$ (or equivalently, $\tbR$).

Construct $\mathbf{A} \in \R^{(\barn+1)\times (\barn+1)}$ according to Lemma \ref{thm:singular_val_LB1}: with $\mathbf{X}^\top\mathbf{X} \in \bR^{\barn \times \barn}$, $c = \tbR_{i^\star}^\top\tbR_{i^\star} \in \R$, $\mathbf{a} = \mathbf{X}^\top \tbR_{i^\star} \in \R^m$.  This satisfies the conditions of Lemma \ref{thm:singular_val_LB1} when $k=k_{\maj}$ and $d = \bar{n}$. Evaluating for each value in the bound of Lemma \ref{thm:singular_val_LB1}: 
\begin{align*}
    c & = \tbR_{i^\star}^\top\tbR_{i^\star} = \sum_{u \in [m]}\tilde{r}_{u,i^\star}^2 \\
    & = \sum_{u \in \calU_\COLL}\eta^2 +\sum_{u \in \calU_{i^\star}}(r_{u,i^\star}^{\star})^2 &\quad \text{Definition \ref{defn:picky_items}}\\
    & = \eta^2 |\calU_\COLL| + \|\bR_{i^\star}^\star\|_2^2 &\quad \text{By construction of collective strategy} \\
\end{align*}

Additionally:
\begin{align*}
    \|\mathbf{a}\|_2^2 & = \sum_{i \in [\bar{n}]}\left(\mathbf{X}^\top \tbR_{i^\star}\right)_i^2  = \sum_{i \in [\bar{n}]}\left(\sum_{u \in [m]}r_{u,i}^\star\tilde{r}_{u,i^\star}\right)^2 \\
    & = \sum_{i \in [\bar{n}]}\left(\sum_{u \in \calUmaj}r_{u,i}^\star\tilde{r}_{u,i^\star}\right)^2  &\quad \text{exclusivity of $\calImaj$ and $\calImin$}\\
    & =  \sum_{i \in [\bar{n}]}(\eta \sum_{u \in \calU_\COLL}r_{u,i}^\star)^2 =\eta^2 \sum_{i \in [\bar{n}]}(\sum_{u \in \calU_\COLL}r_{u,i}^\star)^{2} &\quad \text{By construction of collective strategy}\\
    & \le \eta^2 \bar{n} \max_{i \in [\bar{n}]} (\sum_{u \in \calU_\COLL}r_{u,i}^\star)^2\\
    & = \bar{n}(\eta \AV(\bR^\star, \calU_\COLL))^2
\end{align*}

Thus $\|\mathbf{a}\|_2 \le  \eta \sqrt{\bar{n}}\AV(\bR^\star, \calU_\COLL)$. 

To get the bound of Lemma \ref{thm:singular_val_LB1}, we also need singular values of $\mathbf{X}$ (equivalently, eigenvalues of $\mathbf{M}:=\mathbf{X}^\top\mathbf{X}$). Clearly, the non-zero singular values of $\mathbf{X}$ and $\bR_{\maj}^\star$ are the same because $\mathbf{X}$ is simply $\bR_{\maj}^\star$ but padded with zeroes, thus:
\[\lambda_{k_{\maj}} = \sigma_{k_{\maj}}(\mathbf{X})^2 =\sigma_{k_{\maj}}(\bR_{\maj}^\star)^2\]

We have then that 
\begin{align*}
    \sigma_{k_{\maj}+1}(\tbR)^2 & \ge \sigma_{k_{\maj}+1}(\tilde{\mathbf{X}})^2 &\quad \text{Lemma \ref{thm:remove_cols}}\\
    & = \lambda_{k_{\maj}+1}(\mathbf{A}) &\quad \text{By construction of $\mathbf{A}$}\\
    & \ge \min(c,\lambda_{k_{\maj}}) -\|\mathbf{a}\|_2 &\quad \text{Lemma \ref{thm:singular_val_LB1}}\\
    & \ge \min (\eta^2 |\calU_\COLL| + \|\bR_{i^\star}^\star\|_2^2, \sigma_{k_{\maj}}(\bR_{\maj}^\star)^2) - \eta \sqrt{\bar{n}}\AV(\bR^\star, \calU_\COLL) \\
    & > \alpha^2 &\quad \text{Theorem \ref{thm:SW_EA_Model1} assumption}
\end{align*}
This implies that $\sigma_{k_{\maj}+1}(\tbR) > \alpha$ and that $k^\star \ge k_{\maj}+1$.

Now we will show that $k^\star \leq k_{\maj}+1$. Note that for all $i > \bar{n}+1$ and $u \in \calUmaj \cup \calU_{i^\star}$, $\tilde{r}_{u,i}=0$. Likewise, for all $i \le \bar{n}+1$
and $u \notin \calUmaj \cup \calU_{i^\star}$, $\tilde{r}_{u,i}=0$. WLOG and because $i^\star$ is picky, we can represent $\tbR$ as a block diagonal matrix: 
\[\tbR = \begin{pmatrix}\tbR_{\maj '} & \mathbf{0} \\ \mathbf{0} & \tbR_{\minor '} \end{pmatrix}\]
where $\tbR_{\maj '} \in \R^{m_1\times \bar{n}+1}$ has reported ratings for users $u \in \calUmaj \cup \calU_{i^\star}$ and items $i \le \bar{n}+1$ and $\tbR_{\minor '} \in \R^{(m-m_1)\times n-(\bar{n}+1)}$ has reported ratings for users $u \in (\calUmin \setminus \calU_{i^\star})$ and items $i > \bar{n}+1$.

Recall that the singular values of a block diagonal matrix are simply a concatenation of the singular values of the two blocks. Since $\mathrm{rank}(\tbR_{\maj '}) \le \mathrm{rank}(\bR_{\maj}^\star)+1= k_{\maj}+1$, it has no more than $k_{\maj}+1$ nonzero singular values. It follows that at least one of the $k_{\maj}+2$ largest singular values of $\tbR$ are a singular value of $\tilde{\bR}_2$. 
    Therefore:
    \begin{align*}
    \sigma_{k_{\maj}+2}(\tbR) & \le \sigma_{1}(\tbR_{\minor '}) \\
    &\leq \sigma_{1}(\bR_{\minor}^\star) &\quad \text{Lemma \ref{thm:remove_cols}}\\
    & < \alpha &\quad \text{Theorem \ref{thm:SW_EA_Model1} assumption}
\end{align*}
    
This implies that $k^\star \le k_{\maj}+1$. So $k^\star = k_{\maj}+1$ as desired. 
\end{proof}

\begin{proof}[Proof of Claim~\ref{claim:SW_EA_Model1_2}]

Recall that the $k^\star$-truncated SVD is $\sum_{j\in[k^\star]}\sigma_j\mathbf{u_j}\mathbf{v_j}^\top$ where $[k^\star]$ are the $k^\star$ largest singular values. In the above claims we showed that $\sigma_{k_{\maj}+1}(\tbR) \geq \min \{\sigma_{k_{\maj}}(\bR_{\maj}^\star)^2, \eta^2|\calU_\COLL|+\|\bR^\star_{i^\star}\|_2^2\}-\eta\sqrt{\bar{n}}\AV(\bR^\star, \calU_\COLL)$ and so by assumption, the $k_{\maj}+1$ largest singular values are all strictly greater than $\sigma_1(\tbR_{\minor '})$.

In the proof of Theorem \ref{g_maj_b_min2}, we showed that the SVD of a block diagonal matrix can be decomposed into a sum of terms for each block. 
Therefore, because the $k_{\maj}+1$ largest singular values are all strictly greater than $\sigma_1(\tbR_{\minor '})$, it must be case that for $\tbR$, $\sum_{j\in[k_{\maj}+1]}\sigma_j\mathbf{u_j}\mathbf{v_j}^\top$, where $[k_{\maj}+1]$ are the $k_{\maj}+1$ largest singular values, form the following matrix:
\[\begin{pmatrix}\tbR_{\maj '} & \mathbf{0} \\ \mathbf{0} & \mathbf{0} \end{pmatrix}\]

Of course, from Claim~\ref{claim:SW_EA_Model1_1}, we have that $k^\star = k_{\maj}+1$, thus this completes Claim~\ref{claim:SW_EA_Model1_2}.
\end{proof}

Now, we shall use Claims \ref{claim:SW_EA_Model1_1} and \ref{claim:SW_EA_Model1_2} to prove our Theorem result. 

For all $u \in \calUmaj$ and all $i \neq i^\star$, $r_{u,i}^\star=\hat{r}_{u,i}$. To show that user $u$ will be recommended a top item it therefore suffices to show that picky item, $i^\star$, will not become a top item. This is true by construction:
\[\hat{r}_{u,i^\star} \le \eta < \min_{u' \in \calUmaj}\max_{i \in [n]}r_{u'i}^\star \le   \max_{i \in [n]}r_{u,i}^\star \quad \forall u \in \calUmaj\]
For all $u \in \calU_{i^\star}$, $r_{u,i}^\star=\hat{r}_{u,i}$ for all $i \in [n]$. Thus $\argmax_{i \in [n]}\hat{r}_{u,i}=\argmax_{i \in [n]}r_{u,i}^\star=i^\star$. 

Lastly, for all $u \in \calUmin \setminus \calU_{i^\star}$:
\[\argmax_{i \in [n]}\hat{r}_{u,i}=[n]\]
Therefore, we will have 
\[\Topk(u) \in \argmax_{i \in [n]}\|\widehat{\bR}_{i}\|_1 = \argmax_{i \in [\bar{n}+1]}\|\widehat{\bR}_{i}\|_1 \subseteq [\bar{n}+1]\]
and recall that $r_{u,i}^\star = 0$ $\forall i\in [\barn +1], u\in \calUmin \setminus \calU_{i^\star}$\\
As such: 
\begin{align*}
    \SW(\tbR, \alpha) & = \sum_{u \in \calUmaj}\max_{i \in [n]}r_{u,i}^\star + \sum_{u \in \calU_{i^\star}}\max_{i \in [n]}r_{u,i}^\star + \sum_{u \in \calUmin}0\\
    & = \sum_{u \in \calUmaj\cup \calU_{i^\star}}\max_{i \in [n]}r_{u,i}^\star.
\end{align*}

\end{proof}

\subsubsection{Proof of Corollary~\ref{cor:suff_cond}}
\SuffCond*
\begin{proof}[Proof of Corollary \ref{cor:suff_cond}]
    Suppose we have an altruistic rating $(\eta, \calU_\COLL)$ for the picky item such that the conditions above hold. Then we must have that 
    \[
    \alpha \in (\sigma_{1}(\bR_{\minor}^\star), \sqrt{\min \{\sigma_{k_{\maj}}(\bR_{\maj}^\star)^2, \eta^2|\calU_\COLL|+\|\bR^\star_{i^\star}\|^2_2\}-\eta\sqrt{\bar{n}}\AV(\bR^\star, \calU_\COLL)})
    \]
    The space in this interval is $\mathcal{G}(\bR^\star, \calU_\COLL, \eta)$ by definition. Equivalently, $\alpha \in \mathcal{G}(\bR^\star, \calU_\COLL, \eta)$. Since $\alpha$ clearly exists, $\mathcal{G}(\bR^\star, \calU_\COLL, \eta) \neq \emptyset$, which means $\bR^\star$ has $(\eta, \calU_\COLL)$-sufficient singular value gap. Thus by Theorem \ref{thm:SW_EA_Model1}, social welfare is improved by the manipulated matrix.
\end{proof}

\subsubsection{Proof of Theorem~\ref{thm:find_eta_algo}}\label{app:users_proofs_find_eta}
\FindEta*
\begin{proof} [Proof of thm \ref{thm:find_eta_algo}]
Let $\eta$ be the returned output of Algorithm \ref{alg:find_eta_aglo}. Note that the index of the picky item is $\barn + 1$ without loss of generality to any $i^\star >\barn$, see \Cref{obs:permutation_invariance}. Thus we will return to $i^\star$ as if it were $\barn + 1$ for the sake of this proof. There are two parts to Theorem \ref{thm:find_eta_algo} that we present as claims and prove sequentially for the cases when $\eta$ returned by Algorithm \ref{alg:find_eta_aglo} is positive or zero. For cleanliness of notation, we will refer to $\AV(\bR^\star, \calU_\COLL)$ as simply $\AV$ and $\|\bR^\star_{i^\star}\|^2_2$ as $\ASV$.

\begin{claim}\label{thm:find_eta_algo_claim_1}
    If $\eta >0$ is returned by Algorithm \ref{alg:find_eta_aglo}, then playing $\eta$ will improve social welfare.
\end{claim}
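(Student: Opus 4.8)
The plan is to show that whenever Algorithm~\ref{alg:find_eta_aglo} outputs a strictly positive $\eta$, this $\eta$ satisfies the sufficient conditions of Corollary~\ref{cor:suff_cond}; Claim~\ref{thm:find_eta_algo_claim_1} then follows immediately, since under the hypotheses of Theorem~\ref{thm:find_eta_algo} (a majority--minority $\bR^\star$ with a picky item $i^\star>\bar{n}$ and $\alpha\in\mathcal{G}(\bR^\star)$, so in particular $\alpha>\sigma_1(\bR_{\minor}^\star)$) that corollary guarantees that an $(\eta,\calU_\EA)$-altruistic uprating with such an $\eta$ strictly increases social welfare. So the whole task reduces to checking that the algorithm is a correct feasibility solver for the two conditions
\[
0 < \eta < \kappa, \qquad \alpha^2 < \min\bigl\{\sigma_{k_{\maj}}(\bR_{\maj}^\star)^2,\ \eta^2|\calU_\EA| + \ASV\bigr\} - \eta\sqrt{\bar{n}}\AV .
\]

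First I would split the second condition by the minimum: it holds exactly when both $\eta < \bigl(\sigma_{k_{\maj}}(\bR_{\maj}^\star)^2 - \alpha^2\bigr)/(\sqrt{\bar{n}}\AV)$ (a linear upper bound on $\eta$) and $q(\eta) := |\calU_\EA|\eta^2 - \sqrt{\bar{n}}\AV\,\eta + (\ASV - \alpha^2) > 0$. The quadratic $q$ is convex (its leading coefficient $|\calU_\EA|\ge 1$ is positive since $\calU_\EA$ is nonempty), and its discriminant is precisely $d = \bar{n}\AV^2 + 4|\calU_\EA|(\alpha^2 - \ASV)$, the quantity the algorithm computes. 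I would also record the positivity facts available under the hypotheses: $\kappa > 0$ and $\AV > 0$ because every majority user has a positive rating on some majority item, and $\sigma_{k_{\maj}}(\bR_{\maj}^\star)^2 - \alpha^2 > 0$ because $\alpha\in\mathcal{G}(\bR^\star)$; hence the initial value $N_{\up}$, being the minimum of the linear bound above and $\kappa$, is strictly positive, and any $\eta\in(0,N_{\up})$ automatically meets the first condition and the $\sigma_{k_{\maj}}$-half of the second.

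The remaining work is a case analysis mirroring the branches of the algorithm. If $d<0$, then $q(\eta)>0$ for all $\eta$, so the quadratic half of the second condition is automatic; the algorithm sets $N_{\lo}=N_{\up}/2$, the test $N_{\lo}<N_{\up}$ passes (as $N_{\up}>0$), and the returned value $(N_{\lo}+N_{\up})/2$ lies in $(0,N_{\up})$ and is thus feasible. If $d\ge 0$, write the roots of $q$ as $\eta_{\pm} = (\sqrt{\bar{n}}\AV \pm \sqrt{d})/(2|\calU_\EA|)$, so $q(\eta)>0$ iff $\eta<\eta_-$ or $\eta>\eta_+$; the algorithm sets $N_{\lo}=\eta_+$. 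If $N_{\lo}<N_{\up}$, the returned point of $(\eta_+,N_{\up})$ satisfies both halves of the second condition and the first. Otherwise the algorithm shrinks $N_{\up}$ to $\min\{\eta_-,N_{\up}\}$ and, when this is still positive, returns a point of $(0,\eta_-)\cap(0,N_{\up}^{\mathrm{old}})$, which again satisfies every condition; when instead $\eta_-\le 0$ there is no positive feasible $\eta$ and the algorithm returns $0$, which is outside the scope of this claim. Hence in every case where the output is positive, Corollary~\ref{cor:suff_cond} applies and social welfare strictly improves.

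I do not anticipate a genuine obstacle: the content of the claim is just that the algorithm correctly implements the feasibility test of Corollary~\ref{cor:suff_cond}, so the only care required is bookkeeping --- matching each algorithmic branch to the correct subinterval of the $\eta$-line and confirming the positivity facts ($N_{\up}>0$, $\AV>0$, $\kappa>0$, $|\calU_\EA|\ge 1$) that make the divisions well-defined and the relevant intervals nonempty.
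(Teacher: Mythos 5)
Your proposal is correct and follows essentially the same route as the paper's proof: reduce to the sufficient conditions of Corollary~\ref{cor:suff_cond}, split the $\min$ into the linear bound $\eta<(\sigma_{k_{\maj}}(\bR_{\maj}^\star)^2-\alpha^2)/(\sqrt{\bar{n}}\AV)$ and the convex quadratic condition with discriminant $d$, and then match each branch of Algorithm~\ref{alg:find_eta_aglo} to the corresponding feasible subinterval of the $\eta$-line. The only difference is cosmetic (you make the positivity bookkeeping for $\kappa$, $\AV$, and $N_{\up}$ explicit, which the paper attributes to ``setting assumptions''), so no further work is needed.
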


\begin{proof}[Proof of Claim~\ref{thm:find_eta_algo_claim_1}]
    By Corollary \ref{cor:suff_cond}, it is sufficient to show that $\eta$ (when $\eta \neq 0$) satisfies the following:
\begin{enumerate}
    \item $\alpha^2 < \min \{\sigma_{k_{\maj}}(\bR_{\maj}^\star)^2, \eta^2|\calU_\COLL|+\ASV\}-\eta\sqrt{\bar{n}}\AV$
    \item $\eta > 0$
    \item $\eta < \kappa$
\end{enumerate}
We first focus on the inequality: 
\begin{equation} \label{alpha_ub}
    \alpha^2 < \min \{\sigma_{k_{\maj}}(\bR_{\maj}^\star)^2, \eta^2|\calU_\COLL|+\ASV\}-\eta\sqrt{\bar{n}}\AV
\end{equation}

Where $\eta > 0$, the inequality above is equivalent to both of the following statements holding: 
\begin{equation} \label{eta_ub}
\begin{aligned}
        & \alpha^2 < \sigma_{k_{\maj}}(\bR_{\maj}^\star)^2-\eta\sqrt{\bar{n}}\AV
        & \iff \eta < \frac{\sigma_{k_{\maj}}(\bR_{\maj}^\star)^2-\alpha^2}{\sqrt{\bar{n}}\AV}
    \end{aligned}
\end{equation}
and 
\begin{equation} \label{eta_lb}
\begin{aligned}
        & \alpha^2 < \eta^2|\calU_\COLL|+\ASV-\eta\sqrt{\bar{n}}\AV
        & \iff  \eta^2|\calU_\COLL|-\eta \sqrt{\bar{n}}\AV +\ASV-\alpha^2 >0 
\end{aligned}
\end{equation}
Clearly equation \ref{eta_ub} is an upper bound on $\eta$. We shall analyze equation \ref{eta_lb} to get explicit bounds on $\eta$

Let $f(\eta)$ be the quadratic of equation \ref{eta_lb} in terms of $\eta$ with discriminant $d := \bar{n}\AV^2+4|\calU_\COLL|(\alpha^2-\ASV)$. Now we need to understand for which set of $\eta \in \R$, $f(\eta)> 0$. Notice that, by standard properties of quadratic functions, if $d \geq 0$, $f(\eta) > 0$ where $\eta \in \left[\frac{\sqrt{\bar{n}}\AV -\sqrt{d}}{2|\calU_\COLL|}, \frac{\sqrt{\bar{n}}\AV +\sqrt{d}}{2|\calU_\COLL|}\right]^C$ and if $d < 0$, $f(\eta) > 0 \quad \forall \eta \in \R$. Consequently, the set of feasible $\eta$ for equation \ref{alpha_ub} to hold break into the following cases:
\begin{enumerate}
    \item Case 1: $d < 0$, therefore equation \ref{eta_lb} does not constrain $\eta$ and only equation \ref{eta_ub} and positivity is important: \[\eta \in \bigg(0, \frac{\sigma_{k_{\maj}}(\bR_{\maj}^\star)^2-\alpha^2}{\sqrt{\bar{n}}\AV}\bigg)\]
    \item Case 2: $d \geq 0$, $\eta$ must be feasible for both equation \ref{eta_lb} and \ref{eta_ub} and positive.
    \[\eta \in \left[\frac{\sqrt{\bar{n}}\AV -\sqrt{d}}{2|\calU_\COLL|}, \frac{\sqrt{\bar{n}}\AV +\sqrt{d}}{2|\calU_\COLL|}\right]^C \cap \bigg(0, \frac{\sigma_{k_{\maj}}(\bR_{\maj}^\star)^2-\alpha^2}{\sqrt{\bar{n}}\AV}\bigg)\]
\end{enumerate}
Note that $\frac{\sigma_{k_{\maj}}(\bR_{\maj}^\star)^2-\alpha^2}{\sqrt{\bar{n}}\AV} > 0$ because $\sigma_{k_{\maj}}(\bR_{\maj}^\star) > \alpha$ by setting assumptions.

We can further rewrite Case 2. Notice that by setting $\nabla f = 0$, the minimum of $f(\eta)$ is at $\eta = \frac{\sqrt{\barn} \AV}{2|\calU_\COLL|}$ which is greater than 0 by setting assumptions. Thus, it must be that $\frac{\sqrt{\bar{n}}\AV +\sqrt{d}}{2|\calU_\COLL|} > 0$ because $\frac{\sqrt{\bar{n}}\AV +\sqrt{d}}{2|\calU_\COLL|}$ is the right hand root $f(\eta)$.
\begin{enumerate}
    \item Case 1: $d < 0$ \[\eta \in \bigg(0, \frac{\sigma_{k_{\maj}}(\bR_{\maj}^\star)^2-\alpha^2}{\sqrt{\bar{n}}\AV}\bigg)\]
    \item Case 2: $d \geq 0$,
    \[\eta \in \left(0, \min(\frac{\sqrt{\bar{n}}\AV -\sqrt{d}}{2|\calU_\COLL|},\frac{\sigma_{k_{\maj}}(\bR_{\maj}^\star)^2-\alpha^2}{\sqrt{\bar{n}}\AV} )\right)\cup \left(\frac{\sqrt{\bar{n}}\AV +\sqrt{d}}{2|\calU_\COLL|}, \frac{\sigma_{k_{\maj}}(\bR_{\maj}^\star)^2-\alpha^2}{\sqrt{\bar{n}}\AV}\right)\]
\end{enumerate}

Now we finally add the $\eta < \kappa$ to the sufficient conditions. this becomes a part of both case's upper bounds:
\begin{enumerate}
    \item Case 1: $d < 0$, \[\eta \in \bigg(0, \min(\kappa, \frac{\sigma_{k_{\maj}}(\bR_{\maj}^\star)^2-\alpha^2}{\sqrt{\bar{n}}\AV})\bigg)\]
    \item Case 2: $d \geq 0$,
    \[\eta \in \left(0, \min(\kappa, \frac{\sqrt{\bar{n}}\AV -\sqrt{d}}{2|\calU_\COLL|},\frac{\sigma_{k_{\maj}}(\bR_{\maj}^\star)^2-\alpha^2}{\sqrt{\bar{n}}\AV} )\right)\cup \left(\frac{\sqrt{\bar{n}}\AV +\sqrt{d}}{2|\calU_\COLL|}, \min(\kappa, \frac{\sigma_{k_{\maj}}(\bR_{\maj}^\star)^2-\alpha^2}{\sqrt{\bar{n}}\AV})\right)\]
\end{enumerate}
Note that we have shown that these cases are equivalent to the sufficient conditions we must prove are met.

It is easy to see that in either case, when the relevant space is non-empty, Algorithm \ref{alg:find_eta_aglo} returns an $\eta$ in the space because the algorithm first checks the discriminant and then constructs the relevant range(s) (if nonempty).
\end{proof}

\begin{claim}\label{thm:find_eta_algo_claim_2}
    Algorithm \ref{alg:find_eta_aglo} returns $0$ if and only if there is no $\eta$ correlated strategy that satisfies our feasible conditions.
\end{claim}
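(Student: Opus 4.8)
The plan is to prove both directions simultaneously by showing that Algorithm~\ref{alg:find_eta_aglo} returns $0$ precisely when the set of $\eta$ meeting the sufficient conditions of Corollary~\ref{cor:suff_cond} is empty. First I would recall the feasible set extracted in the proof of Claim~\ref{thm:find_eta_algo_claim_1}: writing $B := (\sigma_{k_{\maj}}(\bR_{\maj}^\star)^2 - \alpha^2)/(\sqrt{\bar{n}}\AV)$, letting $\bar N := \min(\kappa, B)$ denote the value first assigned to $N_\up$, and letting $r_\pm := (\sqrt{\bar{n}}\AV \pm \sqrt{d})/(2|\calU_\EA|)$ (with $d := \bar{n}\AV^2 + 4|\calU_\EA|(\alpha^2-\ASV)$) be the two roots of the quadratic in \eqref{eta_lb} when $d \ge 0$, that proof shows the admissible $\eta$ form
\[
F = \begin{cases} (0,\bar N), & d < 0,\\[4pt] \bigl(0,\min(\bar N, r_-)\bigr) \cup (r_+,\bar N), & d \ge 0. \end{cases}
\]
Since $r_- \le r_+$, for $d\ge 0$ these two intervals are disjoint, so $F = \emptyset$ holds exactly when $d\ge 0$, $r_+ \ge \bar N$, and $\min(\bar N, r_-)\le 0$.

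Second I would record that $\bar N > 0$ always: $\kappa = \min_{u\in\calUmaj}\max_{i\in[n]} r_{u,i}^\star > 0$ since no majority user has an all-zero rating row, and $B > 0$ since $\sigma_{k_{\maj}}(\bR_{\maj}^\star) > \alpha$ (Assumption~\ref{assumption:singular_value_gap} together with $\alpha\in\mathcal{G}(\bR^\star)$), while $\sqrt{\bar{n}}\AV > 0$ because $\calU_\EA\subseteq\calUmaj$ is nonempty and majority users have positive ratings on $[\bar{n}]$. This disposes of the branch $d < 0$: there $F = (0,\bar N)\ne\emptyset$, and the pseudocode sets $N_\lo = \bar N/2 < \bar N$ and returns $(N_\lo+\bar N)/2 \in (0,\bar N) = F$ --- in particular it never evaluates $\sqrt{d}$ and never returns $0$, which is consistent with $F \ne \emptyset$. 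For $d \ge 0$ I would match the algorithm's branches to the two pieces of $F$: it sets $N_\lo \gets r_+$, $N_\up \gets \bar N$ and returns the midpoint of $(r_+,\bar N)$ iff $r_+ < \bar N$, i.e.\ iff the second piece of $F$ is nonempty; failing that it reassigns $N_\up \gets \min(r_-,\bar N)$ and returns $N_\up/2$ iff $\min(r_-,\bar N) > 0$, which (since $\bar N > 0$) holds iff $r_- > 0$, i.e.\ iff the first piece $\bigl(0,\min(\bar N,r_-)\bigr)$ is nonempty; and it returns $0$ only when both tests fail, i.e.\ exactly when $r_+ \ge \bar N$ and $r_- \le 0$. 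Comparing with the characterization of $F=\emptyset$ above, the algorithm returns $0$ iff $F = \emptyset$, i.e.\ iff no correlated $\eta$ satisfies the feasible conditions; and whenever it returns a positive value, that value lies in $F$, so by Corollary~\ref{cor:suff_cond} playing it improves social welfare (this also re-derives the positive bullet of Theorem~\ref{thm:find_eta_algo}, matching Claim~\ref{thm:find_eta_algo_claim_1}).

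I do not anticipate a conceptual obstacle: the whole argument is careful bookkeeping that lines up each ``\textbf{if}'' of the pseudocode with a piece of the (possibly disconnected) feasible interval $F$. The steps needing the most care are (i) checking $\bar N > 0$, so that the test ``$\min(r_-,\bar N)>0$'' collapses to ``$r_->0$'' and so that the $d<0$ branch neither returns $0$ nor needs $\sqrt{d}$; (ii) confirming that the $r_\pm$ appearing in the algorithm are exactly the roots of the quadratic \eqref{eta_lb}, so that the ``complement of $[r_-,r_+]$'' description of that constraint from the proof of Claim~\ref{thm:find_eta_algo_claim_1} aligns with the two intervals the algorithm actually probes; and (iii) tracking open/closed endpoints so that ``nonempty interval'' and ``strict inequality in the \textbf{if}'' coincide --- here $r_-,r_+,\bar N$ never lie in $F$ and the algorithm's comparisons are all strict, so the two notions agree.
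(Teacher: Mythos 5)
Your proposal is correct and follows essentially the same route as the paper: it reuses the feasible-set characterization from Claim~\ref{thm:find_eta_algo_claim_1}, splits on the sign of $d$, and matches the algorithm's two \textbf{if}-branches to the two pieces of the (possibly disconnected) feasible interval, concluding that $0$ is returned exactly when both pieces are empty. Your write-up is in fact a bit more explicit than the paper's (verifying $\bar N>0$ and following the pseudocode's actual branch order, checking $(r_+,\bar N)$ before $(0,\min(r_-,\bar N))$), but the argument is the same.
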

\begin{proof}[Proof of Claim~\ref{thm:find_eta_algo_claim_2}]
In our proof of Claim~\ref{thm:find_eta_algo_claim_1}, we showed that an equivalent way to characterize an $\eta$ that satisfies our sufficient conditions for SW improvement is the following:
\begin{enumerate}
    \item Case 1: $d < 0$, 
    \[\eta \in (0, \min(u, \kappa) )\]
    \item Case 2: $d \geq 0$,
    \[\eta \in \left(0, \min(\kappa, r_1,u )\right)\cup \left(r_2, \min(\kappa, u)\right)\]
\end{enumerate}
Where 
\[
d:=\bar{n}\AV^2+4|\calU_\COLL|(\alpha^2-\ASV)
\]
\[
u := \frac{\sigma_{k_{\maj}}(\bR_{\maj}^\star)^2-\alpha^2}{\sqrt{\bar{n}}\AV})
\]
\[
r_1:=\frac{\sqrt{\bar{n}}\AV -\sqrt{d}}{2|\calU_\COLL|}
\]
\[
r_2 :=\frac{\sqrt{\bar{n}}\AV +\sqrt{d}}{2|\calU_\COLL|}
\]
and $\barn, |\calU_EA|, \sigma_{k_\maj}(\bR_\maj), \kappa, \ASV, \AV$ are parameters of the algorithm

Suppose there does not exist an $\eta$ that is feasible according to our cases. It must be the case that $d\geq 0$, because otherwise there is clearly always a feasible $\eta$ as $\kappa, u > 0$ by setting assumptions. Since there is no feasible $\eta$, it must be the case that  \[\eta \in \left(0, \min(\kappa, \frac{\sqrt{\bar{n}}\AV -\sqrt{d}}{2|\calU_\COLL|},\frac{\sigma_{k_{\maj}}(\bR_{\maj}^\star)^2-\alpha^2}{\sqrt{\bar{n}}\AV} )\right)\cup \left(\frac{\sqrt{\bar{n}}\AV +\sqrt{d}}{2|\calU_\COLL|}, \min(\kappa, \frac{\sigma_{k_{\maj}}(\bR_{\maj}^\star)^2-\alpha^2}{\sqrt{\bar{n}}\AV})\right)\]
is an empty space. Algorithm \ref{alg:find_eta_aglo} first checks the LHS set. If it is empty, it checks the RHS, and if that is empty, it returns $0$. Therefore if $\eta$ is infeasible, $0$ is returned.

Suppose $0$ is returned by the algorithm. It clearly could not have been the case that $d < 0$ because given $\kappa, u > 0$, $d<0$ would never result in a returned $0$. Thus we consider $d\geq 0$. In this case, evaluating the If statements, $0$ is clearly returned only if 
\[\eta \in \left(0, \min(\kappa, \frac{\sqrt{\bar{n}}\AV -\sqrt{d}}{2|\calU_\COLL|},\frac{\sigma_{k_{\maj}}(\bR_{\maj}^\star)^2-\alpha^2}{\sqrt{\bar{n}}\AV} )\right)\cup \left(\frac{\sqrt{\bar{n}}\AV +\sqrt{d}}{2|\calU_\COLL|}, \min(\kappa, \frac{\sigma_{k_{\maj}}(\bR_{\maj}^\star)^2-\alpha^2}{\sqrt{\bar{n}}\AV})\right)\]
is empty. Thus we have that if $0$ is returned, there must be no feasible $\eta$ (for our sufficient conditions).
\end{proof}
Having proven both Claims, we have shown both parts of Theorem \ref{thm:find_eta_algo} hold, so we may conclude the full proof.
\end{proof}

\RobustAlgo*
\subsubsection{Proof of Theorem~\ref{thm:perturbed_params}}\label{app:users_proofs_robust}
In order to prove this, we will invoke Lipschitz bounds on a function that is a minimum of two Lipschitz functions, so the following lemma will be helpful:
\begin{lemma}\label{lemma:lipschitz}
    Let $f(\bz) = \min \{f_1(\bz), f_2(\bz)\}$ where $f_1$ and $f_2$ are Lipschitz on a convex region $\calD$ with constants $L_1$ and $L_2$, respectively. Then $f$ is Lipschitz on $\calD$ with constant $L = \max\{L_1,L_2\}$. 
\end{lemma}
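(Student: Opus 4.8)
The plan is to establish the Lipschitz bound pointwise, via a one-line case analysis on which of $f_1,f_2$ realizes the minimum at the relevant point. Fix arbitrary $\bz,\bz' \in \mathcal{D}$. Since the claimed inequality is symmetric in $\bz$ and $\bz'$, we may assume without loss of generality that $f(\bz)\ge f(\bz')$, so that $|f(\bz)-f(\bz')| = f(\bz)-f(\bz')$. Because $f(\bz') = \min\{f_1(\bz'),f_2(\bz')\}$, there exists an index $i\in\{1,2\}$ with $f(\bz') = f_i(\bz')$; and by the very definition of the minimum, $f(\bz)\le f_i(\bz)$, whichever branch is active at $\bz$.

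\textbf{Core estimate.} Chaining these observations with the Lipschitz property of $f_i$ on $\mathcal{D}$ gives
\[
  f(\bz)-f(\bz') \;\le\; f_i(\bz)-f_i(\bz') \;\le\; \bigl|f_i(\bz)-f_i(\bz')\bigr| \;\le\; L_i\,\|\bz-\bz'\|_2 \;\le\; \max\{L_1,L_2\}\,\|\bz-\bz'\|_2 .
\]
Combined with the reduction above, this yields $|f(\bz)-f(\bz')|\le \max\{L_1,L_2\}\,\|\bz-\bz'\|_2$ for all $\bz,\bz'\in\mathcal{D}$, which is exactly the assertion with $L=\max\{L_1,L_2\}$.

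\textbf{The (non-)obstacle and remarks.} There is no genuine obstacle here: the lemma is elementary, and the only point requiring care is not to assume that the \emph{same} index $i$ is active at both $\bz$ and $\bz'$ — it need not be — which is why we evaluate $f_i$ at $\bz'$, where it coincides with $f$, and only bound $f(\bz)$ from above by $f_i(\bz)$ rather than trying to pin down the active branch at $\bz$. Two small notes for the application: convexity of $\mathcal{D}$ is not actually used in this argument (only that each $f_i$ obeys the Lipschitz inequality between the two chosen points), but we retain the hypothesis since in Theorem~\ref{thm:perturbed_params} the constants $L_1,L_2$ are themselves derived from gradient bounds over the convex parameter region; and the two-function form stated here is exactly what is needed downstream, because the leading term $\min(\sigma^2,\ \eta^2|\calU_\EA|+\ASV)$ of $f(\bz;\eta)$ in Theorem~\ref{thm:perturbed_params} is a minimum of precisely two smooth functions of the unknown parameter vector, so this lemma certifies that $f(\cdot;\eta)$ is Lipschitz in $\bz$ with constant the larger of the two componentwise constants.
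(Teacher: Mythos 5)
Your proof is correct and follows essentially the same elementary argument as the paper: reduce to the case $f(\bz)\ge f(\bz')$ and bound the difference of minima through the Lipschitz property of a single active branch. The only (cosmetic) difference is that you select the branch attaining the minimum at the smaller-value point $\bz'$, which lets you skip the max-manipulation the paper uses; the substance is the same.
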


\begin{proof}[Proof of Theorem \ref{thm:perturbed_params}]
    Consider two arbitrary points $\bz_1, \bz_2 \in \calD$. Assume without loss of generality that $f(\bz_1)\ge f(\bz_2)$. If $f_1(\bz_1) \ge f_2(\bz_1)$:
    \begin{align*}
        |f(\bz_1) -f(\bz_2)| & = |f_1(\bz_1) - \min\{f_1(\bz_2), f_2(\bz_2)\}|\\
        & = |f_1(\bz_1) + \max\{-f_1(\bz_2), -f_2(\bz_2)\}|\\
        & = | \max\{f_1(\bz_1)-f_1(\bz_2), f_1(\bz_1)-f_2(\bz_2)\}|\\
        & \le | \max\{f_1(\bz_1)-f_1(\bz_2), f_2(\bz_1)-f_2(\bz_2)\}|\\
        & \le \max \{|f_1(\bz_1)-f_1(\bz_2)|, |f_2(\bz_1)-f_2(\bz_2)|\}\\
        & \le \max \{L_1 \|\bz_1-\bz_2\|, L_2 \|\bz_1-\bz_2\|_2\}\\
        & = \max\{L_1, L_2\}\|\bz_1-\bz_2\|_2
    \end{align*}
By making a symmetric argument for $f_1(\bz_1)< f_2(\bz_1)$ we get the same bound. Thus, $f$ is Lipschitz on $\calD$ with constant $L = \max\{L_1, L_2\}$ as desired.  
\end{proof}

With this lemma, we will now proceed with the full proof.

\begin{proof}[Proof of Theorem \ref{thm:perturbed_params}]
For simplicity of notation, we will denote $\sigma_{k_{maj}}(\bR_{\maj}^\star)$ as $\sigma$, $\AV(\bR^\star, \calU_\COLL)$ as simply $\AV$ and $\|\bR^\star_{i^\star}\|^2_2$ as $\ASV$. 

Recall from Corollary \ref{cor:suff_cond}, it suffices to show that 
    \begin{enumerate}
        \item $\hat{\eta} > 0$
        \item $\hat{\eta} < \kappa$
        \item $0 < \min \{(\sigma^\star)^2, \hat{\eta}^2|\calU_\COLL^\star|+\ASV^\star\}-\hat{\eta}\sqrt{\bar{n}^\star}\AV^\star - (\alpha^\star)^2$
    \end{enumerate}
for the true parameters, $\bz^\star$ and the returned $\hat{\eta}$. Because $\hat{\eta}>0$, by assumption, clearly the first condition is satisfied. Also, the 2nd condition must be satisfied because $\hat{\eta}$ was returned by Algorithm \ref{alg:find_eta_aglo}, and Theorem \ref{thm:find_eta_algo} asserts that any nonzero $\eta$ returned by the algorithm satisfies $\hat{\eta} < \kappa$.

Thus all we must prove is that $f(\bz^\star;\hat{\eta}) > 0$ given that $ f(\hat{\mathbf{z}};\hat{\eta})>0$ by Theorem \ref{thm:find_eta_algo}. Note that we can equivalently write $f(\mathbf{z};\eta) = \min \{f_1(\mathbf{z};\eta), f_2(\mathbf{z};\eta)\}$ where
\[f_1(\mathbf{z};\eta) = \sigma^2 -\eta\sqrt{\bar{n}}\AV -\alpha^2\]
\[f_1(\mathbf{z};\eta) = \eta^2 |\calU_\COLL|+\ASV -\eta\sqrt{\bar{n}}\AV -\alpha^2\]
Because $\eta, \AV, \barn, \alpha > 0$ by setting assumptions.

We note that for the remainder of this proof, we are exclusively interested in $f(\cdot;\hat{\eta})$ (i.e. $f$ with the returned $\hat{\eta}$ as the parameter), so for notational simplicity, we often drop parameter, $\hat{\eta}$.

Assume that $f$ is $L$-Lipschitz on some region $\calD$ (we will prove later that this is true for a suitably defined $\mathcal{D}$). Then, given $\|\bz-\hat{\bz}\|_2 < \frac{\Delta(\hat{\bz})}{L}$ and $\hat{\bz}, \bz \in \calD$:
\begin{align*}
    f(\bz) & = f(\hat{\bz}) + (f(\bz) -f(\hat{\bz}))\\
    & \ge f(\hat{\bz}) - |f(\bz)-f(\hat{\bz})|\\
    & \ge f(\hat{\bz}) -L\|\bz-\hat{\bz}\|_2 \quad \text{$f$ is $L$-Lipschitz}\\
    & > f(\hat{\bz}) - f(\hat{\bz}) \quad \text{by assumption}\\
    &=0
\end{align*}
Therefore, to show the desired $f(\hat{\mathbf{z}};\hat{\eta})>0$, it suffices to show that $f$ is $L$-Lipschitz on  $\mathcal{D}$. We will do this by finding a bound on the maximum gradient norm of $f_1$ and $f_2$. First we will compute the gradient of both functions:
\begin{align*}
    \|\nabla f_1(\bz)\|_2^2 & = \|(\frac{\partial f_1}{\partial \sigma}(\mathbf{z}), \frac{\partial f_1}{\partial \alpha}(\mathbf{z}), \frac{\partial f_1}{\partial \bar{n}}(\mathbf{z}), \frac{\partial f_1}{\partial \ASV}(\mathbf{z}), \frac{\partial f_1}{\partial \AV}(\mathbf{z}), \frac{\partial f_1}{\partial |\mathcal{U}_{EA}|}(\mathbf{z}))^\top\|\\
    & = \|(2\sigma, -2\alpha, \frac{\hat{\eta} \AV}{2\sqrt{\bar{n}}},0, -\hat{\eta} \sqrt{\bar{n}},0)^\top\|_2^2 =4\sigma^2 + 4 \alpha^2 + \frac{\hat{\eta} \AV^2}{4\sqrt{\bar{n}}}+\hat{\eta}^2 \bar{n} & \quad \text{Definition of $f_1(\bz)$}\\
\end{align*}
\begin{align*}
    \|\nabla f_2(\bz)\|_2^2 & = \|(\frac{\partial f_2}{\partial \sigma}(\mathbf{z}), \frac{\partial f_2}{\partial \alpha}(\mathbf{z}), \frac{\partial f_2}{\partial \bar{n}}(\mathbf{z}), \frac{\partial f_2}{\partial \ASV}(\mathbf{z}), \frac{\partial f_2}{\partial \AV}(\mathbf{z}), \frac{\partial f_2}{\partial |\mathcal{U}_{EA}|}(\mathbf{z}))^\top\|\\
    & = \|(0, -2\alpha, \frac{\hat{\eta} \AV}{2\sqrt{\bar{n}}}, 1, -\hat{\eta} \sqrt{\bar{n}}, \hat{\eta}^2)^\top\|_2^2 = 4 \alpha^2 + \frac{\hat{\eta} \AV^2}{4\sqrt{\bar{n}}}+1+\hat{\eta}^2 \bar{n}+\hat{\eta}^4& \quad \text{Definition of $f_2(\bz)$}\\
\end{align*}

To bound the gradient norms we will use bounds on the parameters in terms of the primitives of the matrix. Recall that $\hat{\eta}$ is fixed parameter (it is what Algorithm \ref{alg:find_eta_aglo} recommends). Thus, the gradient norm depend on $\alpha, \bar{n}, \AV, \frac{1}{\sqrt{n}}$. Since all variables are positive, it suffices to find an upper-bound in terms of the known values of $n$, $\|\bR^\star\|_1$ and $\|\bR^\star\|_2$. As $1 \le \bar{n} \le n$ we have that $\bar{n} \le n$ and $\frac{1}{\sqrt{n}} \le 1$. By assumption, we have that $\alpha < \sigma_{k_{maj}}(\bR_{\maj}^\star)=\sigma$. Additionally, that $\sigma_{k_{maj}}(\bR_{\maj}^\star) \le \sigma_{1}(\bR^\star) =\|\bR^\star\|_2$. Hence $\alpha < \sigma < \|\bR^\star\|_2$. Lastly, 
\[\AV = \max_{i \in [\bar{n}]}\sum_{u \in \calU_\COLL}r_{u,i} \le \max_{i \in [n]}\sum_{u \in [m]}r_{u,i}=\|\bR^\star\|_1\]

Hence we can define $\mathcal{D}$ to be the space where each of these bounds hold:
\[\mathcal{D} = \{\mathbf{z}: 0 < \alpha < \sigma \le \|\bR^\star\|_2, 1 \le \bar{n} \le n, \AV \le \|\bR^\star\|_1 \}\]
Since it is defined by a set of linear inequalities, $\mathcal{D}$ is convex. 

Further we can bound the norms of the gradients for all $\mathbf{z} \in \mathcal{D}$:
\[\|\nabla f_1(\bz)\|_2^2 = 4\sigma^2 + 4 \alpha^2 + \frac{\hat{\eta} \AV^2}{4\sqrt{\bar{n}}}+\hat{\eta}^2 \bar{n} \le 8\|\bR^\star\|_2^2+\frac{\hat{\eta} \|\bR^\star\|_1^2}{4}+\hat{\eta}^2n\]
\[\|\nabla f_1(\bz)\|_2^2= 4 \alpha^2 + \frac{\hat{\eta} \AV^2}{4\sqrt{\bar{n}}}+1+\hat{\eta}^2 \bar{n}+\hat{\eta}^4 \le  4 \|\bR^\star\|_2^2+ \frac{\hat{\eta} \|\bR^\star\|_1^2}{4}+1+\hat{\eta}^2 n+\hat{\eta}^4\]
Consequently, for $L_1 =\sqrt{8\|\bR^\star\|_2^2+\frac{\hat{\eta} \|\bR^\star\|_1^2}{4}+\hat{\eta}^2n}$ and $L_2 = \sqrt{4 \|\bR^\star\|_2^2+ \frac{\hat{\eta} \|\bR^\star\|_1^2}{4}+1+\hat{\eta}^2 n+\hat{\eta}^4}$ we have that, by the Mean Value Theorem, for any $\bz, \bz' \in \calD$ and $i \in \{1,2\}$:
\[|f_i(\bz) - f_i(\bz^{'})| \le \sup_{\bz \in \calD}\|\nabla f_i(\bz)\|_2\|\bz - \bz'\|_2\le L_i \|\bz - \bz'\|_2\]
That is, $f_i$ is $L_i$-Lipschitz on $\calD$.  

Applying Lemma \ref{lemma:lipschitz}, we can conclude that $f = \min \{f_1, f_2\}$ is Lipschitz with $L = \max\{L_1, L_2\}$. 
\end{proof}

\section{Supplementary material for Section \ref{sec:learner}}
\subsection{Missing proofs}
\subsubsection{Proof of Corollary~\ref{cor:ben_learner}}
\BenLearner*
\begin{proof}[Proof of \Cref{cor:ben_learner}]
    The corollary follows from Theorems \ref{g_maj_b_min2} and \ref{thm:SW_EA_Model1} since from the definition of accuracy-based utilty, $U_{\BEN}^{\TRUE}:= \SW(\bR^\star,\alpha)$ and $U_{\BEN}^{\COLL}:= \SW(\tbR,\alpha)$.
\end{proof}

\subsubsection{Proof of Proposition~\ref{prop:en_learner}}
\EnLearner*
\begin{proof}[Proof of \Cref{prop:en_learner}]
    When agents report truthfully: $r_{ui}^\star = \tilde{r}_{ui}$. Under the collective strategy, the reported preference matrix has all the same elements except $|\calU_{\COLL}|$ ratings (which were $0$s) have become $\eta > 0$. The strict inequality follows from $|\calU_{\COLL}| > 0$ (based on assumptions of Theorem \ref{thm:SW_EA_Model1}).
\end{proof}
\section{Supplementary material for Section \ref{sec:experiment}}\label{app:experiment}
\subsection{Methodological details}\label{app:experiment_details}

\xhdr{Summary.}
We use the Goodreads interactions dataset~\cite{goodreads1, goodreads2}, which consists of users' interactions with books on the platform. Interactions users may have with books are adding to shelf (equivalently, shelving) to indicate interest, marking as read, rating (out of 5), and reviewing. We use the Hugging Face Supervised Fine-Tuning (SFT) Trainer~\cite{vonwerra2020trl} to train four different instances of the Qwen 2.5-Instruct~\cite{qwen2025qwen25technicalreport} 1.5 billion parameter model. Each model, given a user's profile of past interactions with young-adult (YA) books and four new YA book options, recommends two (of the four) new YA books that the user should add to their shelf. That is, the model recommends two new books the user would be interested in, based on the user's past interactions, the book/user information the model has learned from fine-tuning, and the knowledge the model already has from the base Qwen 2.5 model. To study the effect of collective action and whether it achieves the goal of helping less popular authors, we train each of the four models on different versions of $85,238$ users' data. As a baseline, we train a model on users' true history of pre-2015 interactions and use two of each user's actually shelved 2015 books as their ``ground-truth'' recommendation, randomly generating two other 2015 book options such that the model must learn to recommend the two truly shelved books for each user. We train the other three models using the same structure, but with perturbed data that represents collective action to varying levels. We select 10 books from 10 different unpopular authors as the ``targets'' of collection action. In the language of our theory, these targeted books/authors are the picky items, though they do not necessarily satisfy the picky item assumption formally. For all perturbed datasets, we select a random 80\% of users who primarily enjoy very popular books to be in the collective. Each collective user does the following: (1) switches their less popular shelved 2015 book to be one of the target books, representing their boosting of the book directly; (2) adds an interaction with a pre-2015 book by $j$ target authors to represent boosting the authors. To create the three perturbed datasets on which we train the models representing collective action, we vary $j$ across $1, 3,$ and $7$. 

\xhdr{Dataset info.}
We downloaded the Young Adult interactions and books (total of 93,398 books and 34,919,254 interactions) as well as the author meta-data information from the Goodreads Datasets~\cite{goodreads1, goodreads2}. We filter this down to only US and english language books. In order to narrow down to users to whom a model may reasonably recommend books published in 2015, we filter out any users who have fewer than 3 shelved books in 2015 or do not rate any 2015 book at least a 3 of 5.

\xhdr{Baseline model training details.}
From the filtered dataset, we construct the data on which we run supervised fine tuning. We use a prompt-completion structure on the inputs used to train the model. The following is an example of a prompt completion pair:

PROMPT

"Here are a user's 25 most recent pre-2015 interactions:

Added to shelf, read, and rated 5 of 5 Fangirl by Rainbow Rowell. Added to shelf, read, and rated 3 of 5 The Elite (The Selection, \#2) by Kiera Cass. Added to shelf, read, and rated 3 of 5 The One (The Selection, \#3) by Kiera Cass. Added to shelf The 5th Wave (The 5th Wave, \#1) by Rick Yancey. Added to shelf The Infinite Sea (The 5th Wave, \#2) by Rick Yancey.

Using *only* the next options, which 2 2015 books did they add to their shelf (i.e. plan to read)?

All the Bright Places by Jennifer Niven,  All American Boys by Jason Reynolds and Brendan Kiely,  All Fall Down (Embassy Row, \#1) by Ally Carter,  or Ray of Sunlight by Brynn Stein"

COMPLETION

"All the Bright Places by Jennifer Niven,  All American Boys by Jason Reynolds and Brendan Kiely"

To keep prompts of a reasonable token length, we formed the baseline prompt-completion dataset using only the most recent 25 books each user interacted with prior to 2015. The options of potential 2015 shelved books for each user were formed from 2 actually shelved books of that user and then two additional random 2015 book in the dataset that were not shelved. For the prompt and completion, we shuffle the order of the options randomly. The completion are the two actually shelved books. We train on 80\% of the Prompt/Completions using SFTTrainer on Qwen2.5-1.5B-Instruct with the following parameters:

max sequence length: 1024

epochs: 1

batch: 2

gradient accumulation: 8

learning rate: 2e-5

\xhdr{Collective action model training.}
Cutting out any books that appeared as true shelf answers in the baseline model's test set fewer than 5 times, the 10 target books are the most popular of the bottom 20\% popularity books, excluding any books with multiple authors or with authors who do not have an earlier book in the dataset. We define popularity here according to total past interactions with the associated author. It was important to narrow target books to only those that had a single author with a pre-2015 book in the dataset in order to ensure collective action could involve adding an interaction prior to 2015.

We then randomly select the collective participating users as a random 80\% of baseline model train users who have a true shelf answer that is one of the top 30 highest shelving books AND have no shelf answer that corresponds to a target book. For these 80\% of users, we change the (train) completions and options such that the less popular shelf answer becomes a randomly selected target book. Additionally, into the collective users' train prompts, we add ``Added to shelf, read, and rated 4 of 5 <pre-2015 title> by <target author>'' for $j$ of the target books/authors. Random placement of the string was ensured to be between existing interactions, so as not to disturb interpretability. On the perturbed train set, we retrained 3 more models using the same parameters and test/train split.

\xhdr{Testing.}
For all models, we run inference on the full (true) test set and this is presented in Table \ref{tab:accuracies}. We run inference with the same tokenization parameters as from training with the caveat that to the prompt we add, ``"Answer (copy exactly TWO options): '', to encourage model brevity. We do this consistently with all inference.

For the testing of target book/author recommendations, we filter for all test users that have (truly) shelved a target book and replace the two random book options with random popular 2015 books to various thresholds. We refer the reader to \Cref{app:experiment_additional_results} for these results.

\subsection{Additional results}\label{app:experiment_additional_results}
\begin{comment}
\begin{table}[ht]
\centering
\small
\begin{tabular}{lrrrr}
\toprule
Model & Accuracy 0 & Accuracy 1 & \textbf{Accuracy 2} & Total \\
\midrule
Baseline & 264 & 6{,}787 & \cellcolor{gray!20}\textbf{14{,}259} & 21{,}310 \\
$j=1$    & 301 & 7{,}206 & \cellcolor{gray!20}\textbf{13{,}803} & 21{,}310 \\
$j=3$    & 282 & 7{,}147 & \cellcolor{gray!20}\textbf{13{,}881} & 21{,}310 \\
$j=7$    & 307 & 7{,}369 & \cellcolor{gray!20}\textbf{13{,}634} & 21{,}310 \\
\bottomrule
\end{tabular}
\caption{Accuracy counts across models on the shelving test set. Accuracy is the number of correct recommendations (out of two) per test instance.}
\label{tab:shelving_accuracy_counts}
\end{table}
\end{comment}

%
%
%
\begin{table}[H]
\centering
\small
\begin{tabular}{lrrrr}
\toprule
Model & Incorrect (\%) & Partial Correct (\%) & \textbf{Correct (\%)}\\
\midrule
Baseline & 1.24 & 31.85 & \cellcolor{gray!20}\textbf{66.91}\\
$j=1$    & 1.41 & 33.82 & \cellcolor{gray!20}\textbf{64.77}\\
$j=3$    & 1.32 & 33.55 & \cellcolor{gray!20}\textbf{65.13}\\
$j=7$    & 1.44 & 34.58 & \cellcolor{gray!20}\textbf{63.98}\\
\bottomrule
\end{tabular}
\caption{Each model outputs a recommendation of \emph{two} books. These are correctness percentages across models on the test set of $21,310$ users. Partial Correct one correct shelving recommendation, Correct is two.}
\label{tab:accuracies}
\end{table}

\begin{figure}[H]
\centering

\begin{subfigure}{0.48\textwidth}
  \centering
  \includegraphics[width=\linewidth]{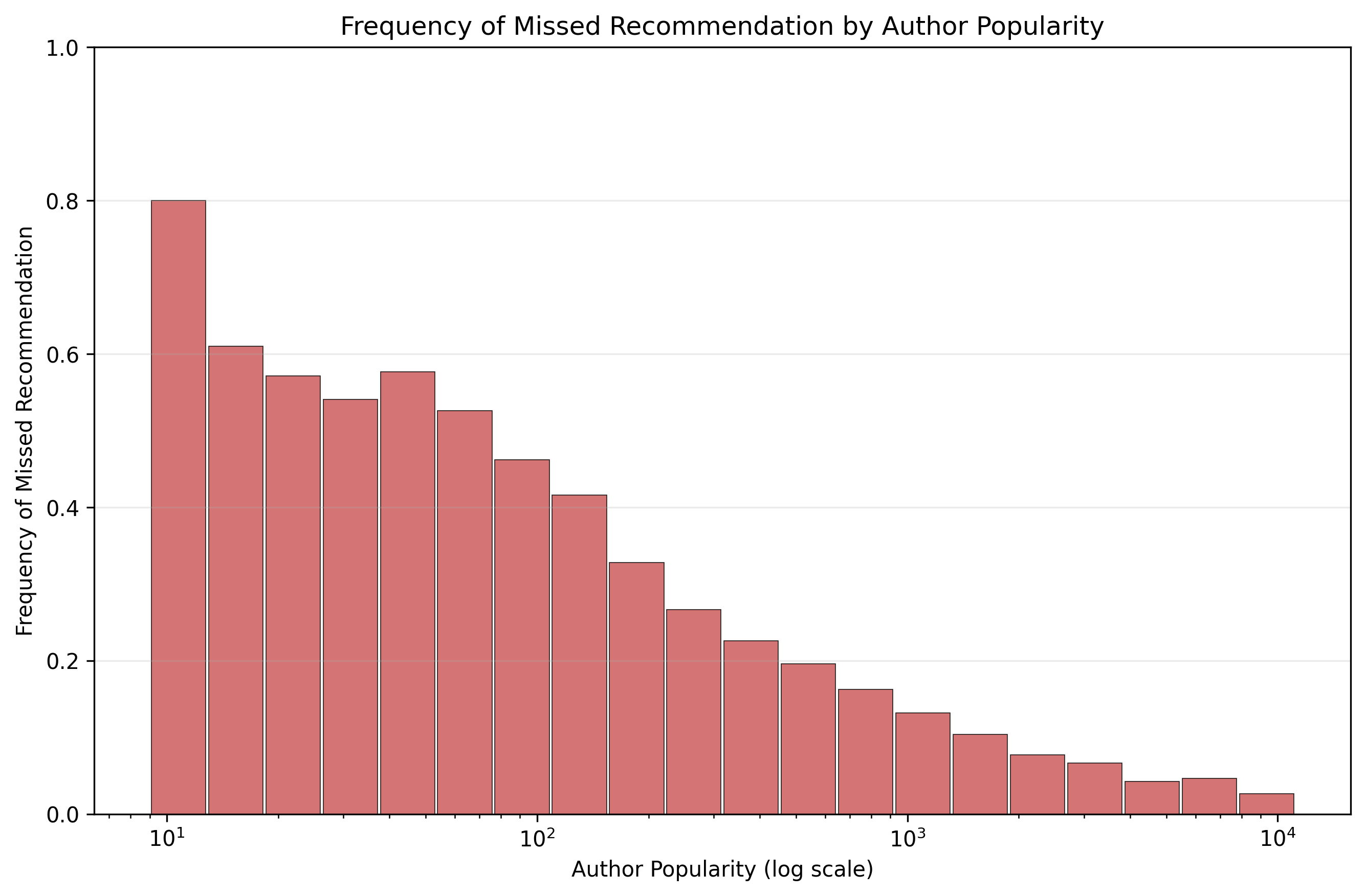}
  \caption{Avg frequency with which books are incorrectly not recommended by popularity of their author}
\end{subfigure}
\hfill
\begin{subfigure}{0.48\textwidth}
  \centering
  \includegraphics[width=\linewidth]{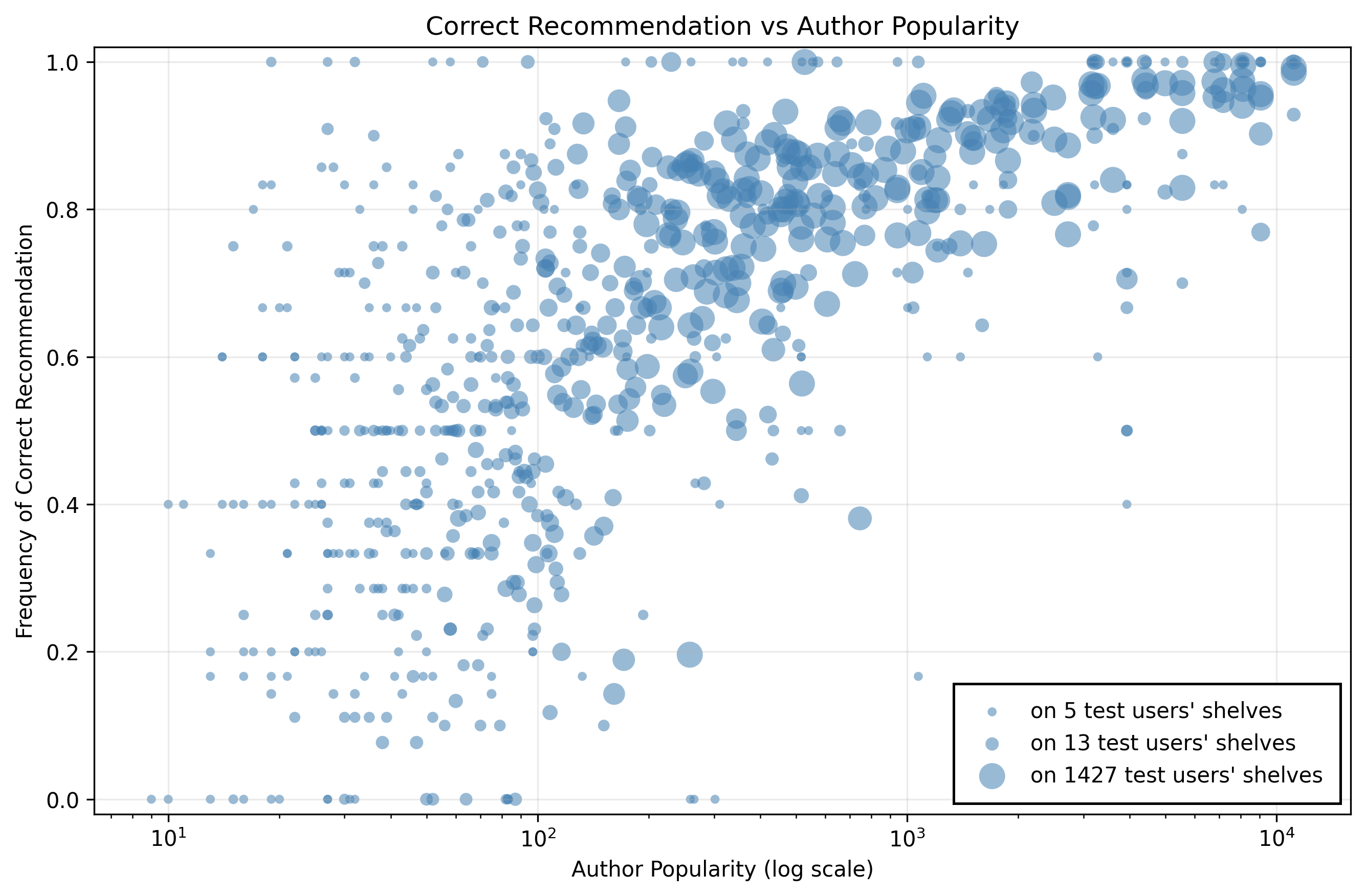}
  \caption{Avg frequency with which books are correctly recommended by popularity of their author}
  \label{fig:sub2}
\end{subfigure}

\caption{[In]correct recommendation of books by their author popularity, measured by how often the author is added to shelf. Books shelved by fewer than $5$ users are dropped.}
\label{fig:baseline_model_acc2}
\end{figure}

\begin{figure}[H]
\centering

\begin{subfigure}{0.48\textwidth}
  \centering
  \includegraphics[width=\linewidth]{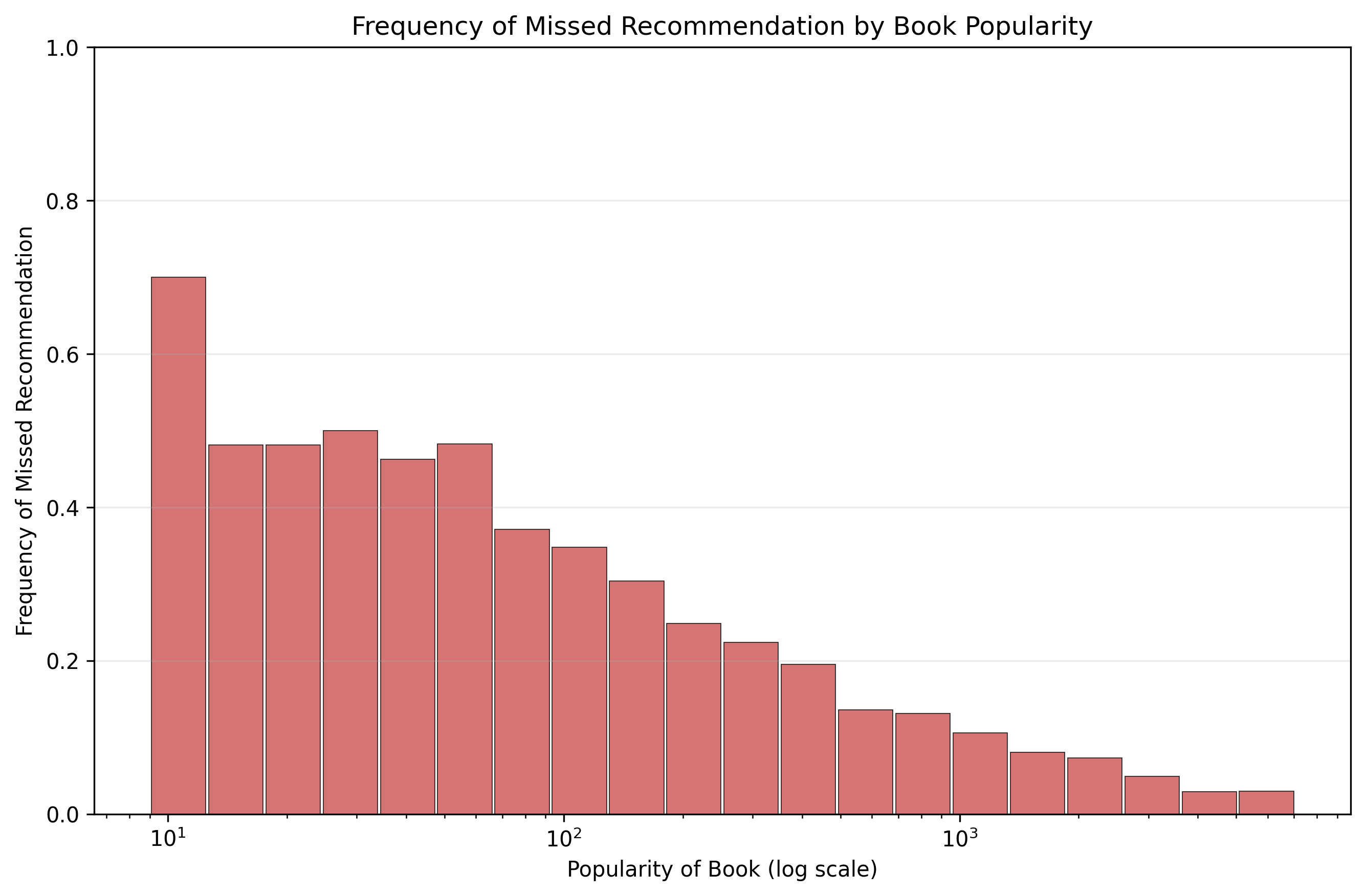}
  \caption{Avg frequency with which books are incorrectly not recommended by popularity}
  \label{fig:sub1}
\end{subfigure}
\hfill
\begin{subfigure}{0.48\textwidth}
  \centering
  \includegraphics[width=\linewidth]{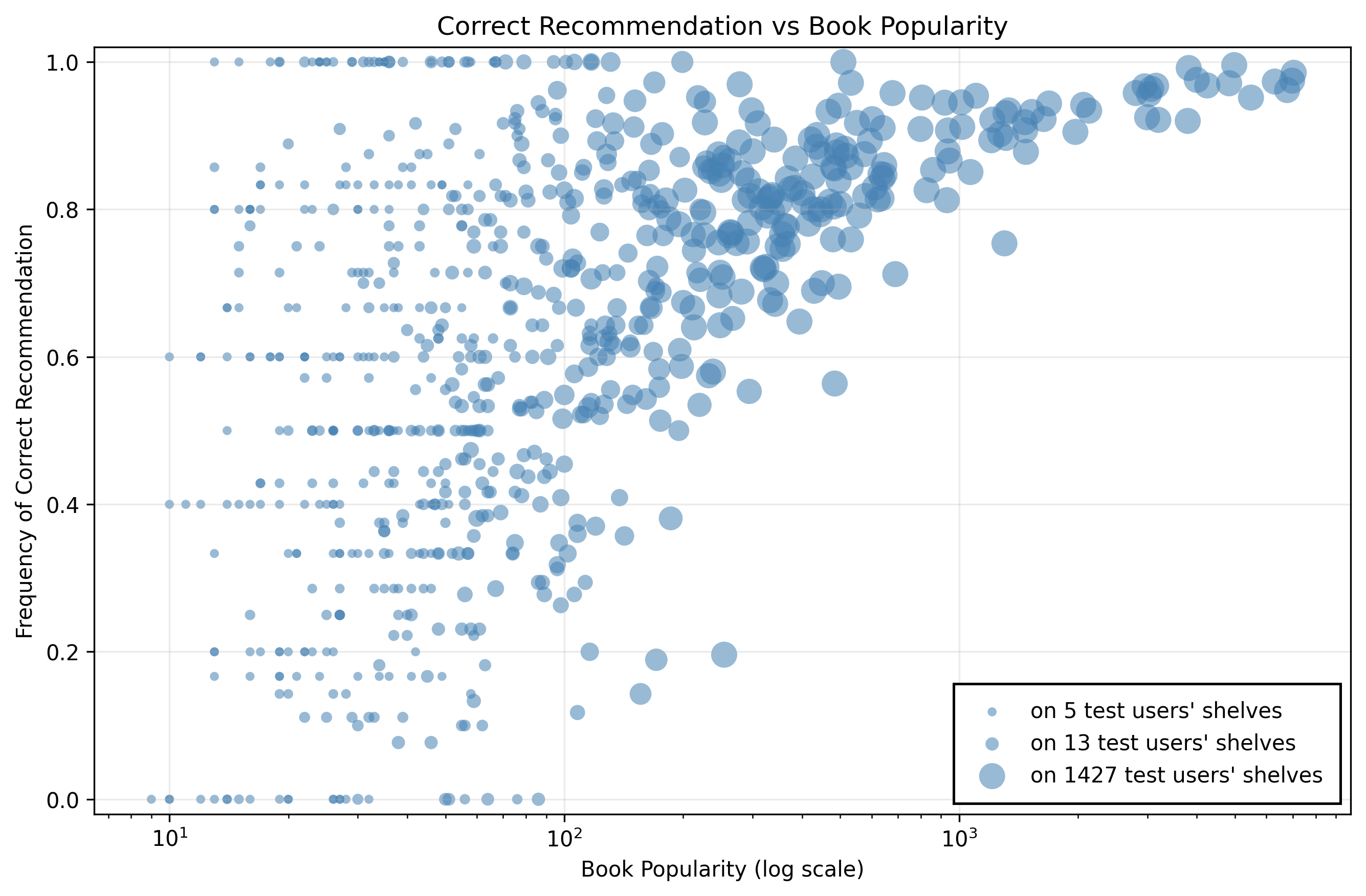}
  \caption{Avg frequency with which books are correctly recommended by popularity}
\end{subfigure}

\caption{[In]correct recommendation of books by their popularity, measured by how often they are added to shelf. Books shelved by fewer than $5$ users are dropped.}
\label{fig:baseline_model_acc3}
\end{figure}

\begin{table}[H]
\centering
\footnotesize
\begin{tabular}{lrrrrr}
\toprule
Target Author & Baseline & $j=1$ & $j=3$ & $j=7$ & Test user \# \\
\midrule
K.A.\ Holt       
& 0.3077 & 0.9231 & 0.9615 & 0.9615 & 26 \\

Allan Stratton   
& 0.0476 & 1.0000 & 1.0000 & 0.9524 & 21 \\

Julie Mayhew     
& 0.2778 & 1.0000 & 1.0000 & 1.0000 & 18 \\

Matthew Crow     
& 0.1333 & 1.0000 & 1.0000 & 0.9333 & 15 \\

Edwidge Danticat 
& 0.1333 & 1.0000 & 0.9333 & 0.9333 & 15 \\

S.\ Walden       
& 0.1538 & 1.0000 & 0.9231 & 0.8462 & 13 \\

Addison Moore    
& 0.0909 & 0.9091 & 0.9091 & 0.9091 & 11 \\

Alan Gratz       
& 0.2000 & 1.0000 & 1.0000 & 1.0000 & 10 \\

Anna Martin      
& 0.5000 & 0.8000 & 0.8000 & 0.8000 & 10 \\

Gabriella Lepore 
& 0.3000 & 0.9000 & 0.9000 & 1.0000 & 10 \\
\bottomrule
\end{tabular}
\caption{Models' frequency of target author recommendation to interested test users when selecting from options consisting of two correct books and two random books from top 30 most-interacted-with authors.}
\label{tab:improve_rec_variant1}
\end{table}

\begin{table}[H]
\centering
\footnotesize
\begin{tabular}{lrrrrr}
\toprule
Target Author & Baseline & $j=1$ & $j=3$ & $j=7$ & Test user \# \\
\midrule
K.A.\ Holt       
& 0.5385 & 0.9231 & 0.8846 & 0.9231 & 26 \\

Allan Stratton   
& 0.0476 & 1.0000 & 1.0000 & 0.9524 & 21 \\

Julie Mayhew     
& 0.2778 & 0.9444 & 0.9444 & 0.9444 & 18 \\

Matthew Crow     
& 0.0000 & 1.0000 & 0.9333 & 0.7333 & 15 \\

Edwidge Danticat 
& 0.2000 & 1.0000 & 1.0000 & 0.9333 & 15 \\

S.\ Walden       
& 0.3077 & 0.9231 & 0.9231 & 1.0000 & 13 \\

Addison Moore    
& 0.7273 & 1.0000 & 1.0000 & 1.0000 & 11 \\

Alan Gratz       
& 0.3000 & 1.0000 & 1.0000 & 1.0000 & 10 \\

Anna Martin      
& 0.5000 & 1.0000 & 1.0000 & 1.0000 & 10 \\

Gabriella Lepore 
& 0.5000 & 1.0000 & 1.0000 & 1.0000 & 10 \\
\bottomrule
\end{tabular}
\caption{Models' frequency of target author recommendation to interested test users when selecting from options consisting of two correct books and two random books from top 500 most-interacted-with authors.}
\label{tab:improve_rec_variant2}
\end{table}

\begin{table}[H]
\centering
\footnotesize
\begin{tabular}{lrrrrr}
\toprule
Target Author & Baseline & $j=1$ & $j=3$ & $j=7$ & Test user \# \\
\midrule
K.A.\ Holt       
& 0.5000 & 0.9615 & 0.9615 & 0.9615 & 26 \\

Allan Stratton   
& 0.1429 & 0.9524 & 0.9524 & 0.9524 & 21 \\

Julie Mayhew     
& 0.4444 & 0.9444 & 1.0000 & 0.9444 & 18 \\

Matthew Crow     
& 0.2000 & 0.9333 & 0.9333 & 0.8667 & 15 \\

Edwidge Danticat 
& 0.2000 & 1.0000 & 1.0000 & 1.0000 & 15 \\

S.\ Walden       
& 0.2308 & 1.0000 & 1.0000 & 1.0000 & 13 \\

Addison Moore    
& 0.5455 & 0.9091 & 0.9091 & 0.9091 & 11 \\

Alan Gratz       
& 0.2000 & 1.0000 & 1.0000 & 1.0000 & 10 \\

Anna Martin      
& 0.4000 & 1.0000 & 1.0000 & 1.0000 & 10 \\

Gabriella Lepore 
& 0.8000 & 1.0000 & 1.0000 & 1.0000 & 10 \\
\bottomrule
\end{tabular}
\caption{Models' frequency of target author recommendation to interested test users when selecting from options consisting of two correct books and two random books from top 1000 most-interacted-with authors.}
\label{tab:improve_rec_variant3}
\end{table}

\section{Additional results}\label{app:model2}
In this section, we derive analogous social welfare results for a more complicated class of preference matrices and collective strategies than those which are in the main body of this manuscript. This preference matrix class is not limited to majority minority groups with exclusive preferences and collective strategies are not limited to uprating by a single value, $\eta$. Because of these complexities, the proof techniques necessary are significantly different than in the main body, but the majority of results have analogies. Additionally, we have not been able to provide a simple algorithm for the computation of an effective collective strategy as we do in the simpler case.

\subsection{Technical connections to the PCA item-fairness of \citet{pca_fairness_liu}}
While conceptual connections between our main body results and \citet{pca_fairness_liu} are already discussed in Section \ref{sec:related_work} and Appendix \ref{app:connection_fair_pca}, we note in this section that the setting presented in this appendix is related both in spirit \emph{and} in technical tools. Reading this section is not necessary to understand the results of this Appendix, but it may be useful for readers already familiar with the theoretical results of \citet{pca_fairness_liu}. 

In the definition of majority and minority users, we define \emph{popular} and \emph{unpopular} items. \citet{pca_fairness_liu} similarly define such items, though because they consider a sequence of preference matrices growing to size $\infty \times \infty$, they define [un]popular values in terms of growth in singular values over the sequence. Our definitions of popular and unpopular items are essentially finite analogies to theirs, defined by a relevant \emph{gap} in singular values. As a result, the proof techniques found in this section, particularly in the use of a variant the Davis-Kahan theorem \cite{davis_kahan} from \citet{yu_useful_2015} will be familiar to a reader with knowledge of \citet{pca_fairness_liu}. Whereas \citet{pca_fairness_liu} use Davis-Kahan to show that the rank reduction of PCA approaches a version of the ratings matrix where all unpopular items have been zeroed out, because we focus on \emph{recommendations}, we are able to make additional minor assumptions such that the rank reduction is guaranteed to have the popular items as the highest estimated rated item for every user, thus ensuring that all users (even those that prefer unpopular items) end up with popular items. Therefore, for our purposes, while unpopular items are not necessarily estimated as 0 rating, they are functionally 0 in that no user will actually receive them.

\subsection{A class of popularity gap matrices}\label{pref_class2}
In the main body, we have a fairly simplistic idea of majority and minority agent groupings. The groups' preferences are disjoint, which makes the idea of which items ``belong'' to each group very simple. Further, we can separate groups into their own smaller preference matrices and define what makes a group majority in terms of matrix singular values. If we break disjointedness, as we do in for these Appendix results, we must redefine what makes a majority/minority group and what makes corresponding ``popular'' and ``unpopular'' items such. In this subsection, we introduce how we can do this. Additionally, note that we present all of our definitions using a generic ratings/preference matrix $\bR$. For \emph{theorems, propositions, etc}, we specify genuine personal interest matrices, $\bR^\star$ and revealed preference matrices $\widetilde{\bR}$ where it is needed.

Consider a tuple: $(\mathbf{R}, \bar{n})$ where: $\mathbf{R} \in [0,1]^{m\times n}$ is a [normalized] preference matrix and $\bar{n}$ is some integer value $0< \bar{n}< n$. Call the first $\bar{n}$ items (columns) of $\mathbf{R}$ the \emph{popular} items and the remaining, the \emph{unpopular} items. 
Define $\kappa_{(\bfR, \barn)} \in \R_{\geq 0}$ to be $\max_{i' \in \{(\bar{n}+1), \dots, n\}}\|\bR_{i'}\|_1$, the greatest L-1 norm for any unpopular item. The $\bR$ matrix but with the preferences for unpopular items zeroed out is important for the remainder of our analysis, so we define this as follows.
\begin{definition}[Popular Preferences Matrix, $\bR'(\barn)$]
Let $\mathbf{R}'(\bar{n}) \in [0,1]^{m\times n}$ be a matrix s.t.:
\[
r'_{u,i} =
\begin{cases}
r_{u,i}, & \text{if $i \leq \bar{n}$}\\
0, & \text{otherwise}
\end{cases}
\]
Thus $\mathbf{R}'({\bar{n}})$ is a block matrix where $\mathbf{A}({\bar{n}}) \in [0,1]^{m\times \bar{n}}$ is the popular item block of $\bf{R}$.:
\begin{equation}
    \mathbf{R}'({\bar{n}}) = 
\begin{pmatrix} 
\mathbf{A}({\bar{n}}) & \mathbf{0}^{m \times (n-\bar{n})} 
\end{pmatrix}
\end{equation}
\end{definition}

Because the learner is interested in recovering a best item for each user, we define a set of top item(s) for a user $u$ with respect to the matrix, $\bR$:
\begin{definition}[User $u$'s Top Item(s)]
    Define $\topitems(\bR, u)$ to be a set of top items for a user $u$ according to preference matrix $\bR$: 
    \[\topitems(\bR, u) := \mathrm{arg}\max_{i\in[n]}r_{u,i}\]
    Clearly, $|\topitems(\bR, u)| \geq 1$.
\end{definition}
Note that by definition, $\Topk(u) \subseteq \topitems(\widehat{\bR}, u)$, but it is not necessarily the case that they are \emph{exactly} the same because $\Topk(u)$ is the singular item that user $u$ is actually recommended.

\subsubsection{User groupings}
We define user groups based on whether a user's top item(s) is(are) popular or unpopular. Intuitively, this is like saying a user is majority if the thing she likes most is mainstream item and she is minority if the thing she likes most is something niche. Notably, we are defining users' groups using their \emph{most liked} item. This means that the full profile of preferences are not disjoint. 
\begin{definition}[Majority User]
    A majority user for a particular $\bf{R}$ preference matrix and $\barn$ is one who has top rated item $i \in \topitems(\bR, u)$, such that $i\leq \bar{n}$, meaning $i$ is one of the popular items. Formally, we define the set of majority users for a particular $\bf{R}$ preference matrix:
    \[\calUmaj := \{u: \exists i \in \topitems(\bR, u)\quad s.t. \quad i \in [\bar{n}]\}\]
\end{definition}

\begin{definition}[Minority User]
    A minority user for a particular $\bf{R}$ preference matrix and $\barn$ is one who has a top-rated item $i \in \topitems(\bR, u)$, such that $i> \bar{n}$, meaning $i$ is one of the unpopular items. Formally, we define the set of minority users for a particular $\bf{R}$ preference matrix:
    \[\calUmin := \{u: \exists i \in \topitems(\bR, u)\quad s.t. \quad i \in \{(\bar{n} + 1), \dots, n\} \}\]
\end{definition}
For some results, it is useful to make assumptions that a tuple, $(\bfR, \barn)$ is such that there exist nonempty user majority/minority groups and they are exclusive. 
\begin{assumption}[Minority/Majority User assumptions]\label{maj_min_users_ass}
Preference matrix $\mathbf{R}$ and $\bar{n}$ is such that the following is true of majority and minority user groups:
    \begin{enumerate}
        \item Majority and Minority user sets are exclusive: 
        \[\calUmin \cap \calUmaj= \emptyset\]
        \item There is at least one minority user:
        \[\forall m: |\calUmin| > 0\]
    \end{enumerate}
\end{assumption}
\begin{remark}
    Note that the majority/minority exclusivity of assumption \ref{maj_min_users_ass} is a weaker assumption than the exclusivity assumption in the main body of the paper as assumption \ref{maj_min_users_ass} exclusivity does not imply that majority and minority users' preferences are entirely exclusive, only that their \textbf{top items} are exclusive. Formally:
    \begin{equation*}
        \begin{aligned}
            \forall u \in \calUmaj, \forall i \in \{\barn + 1, \dots, n\}, &\quad i \notin \topitems(\bR, u)\\
            \forall u' \in \calUmin, \forall i' \in [\barn], &\quad i' \notin \topitems(\bR, u')
        \end{aligned}
    \end{equation*}
    \end{remark}
    
We now construct a class of preference matrix, $\bfR$ and popular item index $\barn$ tuples. In order to do this, we define the assumptions that tuples belonging to this class must satisfy. These assumptions are be about the difference between particular ratings of the preference matrix, so before proceeding we define $\Delta(\mathbf{R}, \barn)$. This has a similar function to the singular value gaps of the class of matrices used in the main body in that we use $\Delta$ to impose a gap in popularity between items while in the main body it denoted a gap between singular values of the majority/minority.

\begin{definition}[Sufficient Ratings Gap, $\Delta(\mathbf{R}, \barn)$]
The sufficient ratings gap is a function of the ratings of $\bfR$ and which items are popular, $\barn$
\begin{equation}
    \Delta(\mathbf{R}, \barn):= \frac{2^{\frac{5}{2}}\kappa_{(\bfR, \bar{n})} n^{\frac{3}{2}}}{\left[\sigma_{\bar{n}}(\mathbf{R}'(\bar{n}))\right]^2}
\end{equation}
\end{definition}

\begin{assumption}[Majority users' top item(s) are sufficiently highly rated] \label{maj_users_high}
    Majority users don't care about all items equally: $[n]\setminus \topitems(\bR, u) \neq \emptyset$ and there is a sufficient gap between a majority user's top rating (which may appear on multiple items) and her other ratings:
    \begin{equation}
        \max_{i \in [n]\setminus \topitems(\bR, u)}r_{u,i} < \max_{i \in [n]}r_{u,i} - \Delta(\bfR, \barn) \quad \forall u \in \calUmaj
    \end{equation}
\end{assumption}

\begin{assumption}[Minority users have sufficient preference for a popular item] \label{min_users_high}
    Each minority user likes at least one popular item by a sufficient amount:
    \begin{equation}
    \exists i \in \bar{n}\quad  s.t. \quad
        r_{u,i} > \Delta(\bfR, \barn) \quad \forall u \in \calUmin
    \end{equation}
\end{assumption}
Note that because assumption \ref{min_users_high} states that each minority user likes at least one popular item by some small amount, the majority-minority matrices discussed in the main body \emph{cannot} satisfy this assumption as they are block matrices that impose complete exclusivity in preference. Thus, while the class we construct here does not technically include matrices analyzed in the main body, the appendix class can be viewed as ``more general" because it handles the settings in which preferences matrices do not have the $\bR_{\minor}, \bR_{\maj}$ block structure that creates exclusivity between all items a majority user likes and all items a minority user likes. The results of this appendix are basically complementary to those in the main body. One way to think about this is that Assumption \ref{min_users_high} is reminiscent of a non-zero support assumption on minority users' preference over popular items while the main body imposes zero-support over the same space.
We can now define the key class of preference matrix and popular item index tuples that we use for the remainder of the results in this section.
\begin{definition}[Popularity Gap Class, $\calM$]
    The following is an important class of tuples where the popular items (whose indices lies in $[\barn]$) are sufficiently more highly rated by a variety of users than the unpopular items:
    \begin{equation}
        \mathcal{M} := \{(\mathbf{R}, \bar{n}):\text{Assumption \ref{maj_users_high}, \ref{min_users_high} hold.}\}
    \end{equation}
\end{definition}

\begin{remark}[The Meaning of Popularity]\label{rem:pop_gap_class_why}
    We note that for any valid $\mathbf{R} \in [0,1]^{m\times n}$ and $\barn$ where assumptions \ref{maj_users_high} and \ref{min_users_high} are true, it must be the case that the following is true:
    \begin{equation}\label{eqn:rem_model_2}
        2^{\frac{5}{4}}n^{\frac{3}{4}}\sqrt{\kappa_{(\bfR, \bar{n})}} < \sigma_{\bar{n}}(\mathbf{R}'(\bar{n}))
    \end{equation}

    Intuitively, equation \ref{eqn:rem_model_2} means that popular items (those in $[\bar{n}]$) are sufficiently well-liked by enough users such that their associated singular values are big relative to the magnitude of minority items' ratings (whose l1-norms are upper bounded by $\kappa_{(\bfR, \bar{n})}$). 
\end{remark}
We call $\calM$ the ``Popularity Gap Class" following the intuition detailed in remark \ref{rem:pop_gap_class_why}. That is, in order for the assumptions to be potentially satisfied, it must be the case that the singular value of the associated Popular Preferences Matrix, $\bR'(\barn)$, dominates over the a function of the unpopular ratings. Much like the singular value gap assumption of the main body, ensuring that equation \ref{eqn:rem_model_2} holds ensures that the items labeled as popular by $\barn$ are actually mathematically popular.
\begin{comment}
\begin{remark}
    Note that $\calM$ is a class of tuples because for a given $\bfR$ preference matrix, one might try to define different sets of items as the ``popular items" along different metrics. If $\bfR$ is such that $(\bfR, \barn) \in \calM$, this means that $\bfR$ satisfies our assumptions specifically when we call the first $\barn$ items popular, but the assumptions may not hold if we were to call the first $\barn + 1$ items popular, for example.
\end{remark}
\end{comment}
\subsection{Learner's selection of optimal truncation rank}
In the section above, we present a class of $(\mathbf{R},\bar{n})$.
However, recall that the $\alpha$-loss tolerant learner gets an approximated version of the received preference matrix, $\tbR$, such this approximated version, $\widehat{\bR}$ has a particular ``optimal'' rank, $k^\star$. 
Given that the tuple $(\tbR,\bar{n}) \in \mathcal{M}$, in this section we show that $k^\star = \bar{n}$ for learners whose $\alpha$ total variance loss budget is within a particular range. This result is this setting's version of \Cref{rhat_equals_nbar1}.

\subsubsection{Preliminary: useful singular value bounds}
Recall that $\alpha$-loss tolerant learners are defined in terms of how large the next singular value after truncation would be. Thus, it is useful to have bounds on the singular values of $\bfR$ and $\bfR'(\barn)$.
\begin{comment}
\begin{lemma} [Corollary 7.3.6 Horn and Johnson Matrix Analysis]
Let $\mathbf{A} \in \mathbb{C}^{m\times n}$ be a hermitian matrix and let $\hat{\mathbf{A}}\in \mathbb{C}^{m\times (n-1)}$ or $\in \mathbb{C}^{(m-1)\times n}$ be a hermitian matrix obtained by deleting any column or row from $\mathbf{A}$. Define $r:=\mathrm{rank}(\mathbf{A})$ and $\sigma_r(\hat{\mathbf{A}}) = 0$ if $m\geq n$ and a column is deleted or if $m \leq n$ and a row is deleted. Then:
\begin{equation*}
    \sigma_1(\mathbf{A}) \geq \sigma_1(\hat{\mathbf{A}}) \geq \sigma_2(\mathbf{A}) \geq \sigma_2(\hat{\mathbf{A}}) \geq \dots \geq \sigma_r(\mathbf{A}) \geq \sigma_r(\hat{\mathbf{A}})
\end{equation*}
\end{lemma}
\end{comment}
\begin{corollary}[Corollary of Lemma \ref{thm:remove_cols}] \label{relations_singular_vals}
    Let matrix $\tilde{\mathbf{A}} \in \mathbb{C}^{m\times (n-j)}$ where $j < n$. Define $\sigma_r(\hat{\mathbf{A}}) = 0$ for singular values lost to column deletion. Then the following relation of singular values holds:
    \begin{equation*}
        \sigma_i(\mathbf{A}) \geq \sigma_i(\tilde{\mathbf{A}}) \geq \sigma_{i + j}(\mathbf{A})
    \end{equation*}
\end{corollary}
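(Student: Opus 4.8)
The plan is to prove the inequality $\sigma_i(\mathbf{A}) \ge \sigma_i(\tilde{\mathbf{A}}) \ge \sigma_{i+j}(\mathbf{A})$ by induction on the number $j$ of deleted columns, with Lemma \ref{thm:remove_cols} (the single-deletion interlacing bound) as the engine of the inductive step. Throughout I adopt the convention, already implicit in Lemma \ref{thm:remove_cols}, that for any matrix $\mathbf{M}$ one sets $\sigma_t(\mathbf{M}) := 0$ whenever $t$ exceeds $\min$ of its two dimensions (equivalently, its rank); this makes every singular-value index appearing below well-defined and is exactly the meaning of ``singular values lost to column deletion.'' The base case $j = 1$ is precisely Lemma \ref{thm:remove_cols} restricted to column deletion, which gives $\sigma_i(\mathbf{A}) \ge \sigma_i(\hat{\mathbf{A}}) \ge \sigma_{i+1}(\mathbf{A})$ for every $i$.

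For the inductive step, suppose the claim holds after deleting any $\ell$ columns: for the intermediate matrix $\mathbf{A}^{(\ell)} \in \mathbb{C}^{m \times (n-\ell)}$ we have $\sigma_i(\mathbf{A}) \ge \sigma_i(\mathbf{A}^{(\ell)}) \ge \sigma_{i+\ell}(\mathbf{A})$ for all $i$. Let $\mathbf{A}^{(\ell+1)}$ be obtained by deleting one further column from $\mathbf{A}^{(\ell)}$. Applying Lemma \ref{thm:remove_cols} to the pair $(\mathbf{A}^{(\ell)}, \mathbf{A}^{(\ell+1)})$ yields $\sigma_i(\mathbf{A}^{(\ell)}) \ge \sigma_i(\mathbf{A}^{(\ell+1)}) \ge \sigma_{i+1}(\mathbf{A}^{(\ell)})$. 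Chaining the upper bounds gives $\sigma_i(\mathbf{A}^{(\ell+1)}) \le \sigma_i(\mathbf{A}^{(\ell)}) \le \sigma_i(\mathbf{A})$, and chaining the lower bounds gives $\sigma_i(\mathbf{A}^{(\ell+1)}) \ge \sigma_{i+1}(\mathbf{A}^{(\ell)}) \ge \sigma_{(i+1)+\ell}(\mathbf{A}) = \sigma_{i+\ell+1}(\mathbf{A})$, which is the claim with $\ell + 1$ in place of $\ell$. Since $\tilde{\mathbf{A}}$ arises from $\mathbf{A}$ by deleting a fixed set of $j$ columns, it is exactly $\mathbf{A}^{(j)}$ for the process that removes those columns one at a time (in any order), so setting $\ell = j$ finishes the argument.

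The one point requiring genuine care — and what I would flag as the main obstacle — is the bookkeeping at the index boundary: once $i + \ell$ exceeds $\mathrm{rank}(\mathbf{A})$, or once $i$ exceeds the number of columns remaining in $\mathbf{A}^{(\ell)}$, the relevant singular values vanish, and one must check that the chained inequalities degrade gracefully rather than becoming false. Under the zero-padding convention above this is immediate: a nonnegative quantity is $\ge 0$, and $\sigma_i(\tilde{\mathbf{A}}) = 0 \le \sigma_i(\mathbf{A})$ whenever $i$ is out of range for $\tilde{\mathbf{A}}$; still, it is worth stating the convention explicitly so the inductive inequalities are unambiguous. A secondary, purely cosmetic issue is that Lemma \ref{thm:remove_cols} is phrased for Hermitian matrices, whereas $\mathbf{A}$ here is a general rectangular complex matrix; since we invoke only its singular-value conclusion — which is the form valid for arbitrary rectangular matrices, obtainable by passing to $\mathbf{A}^{*}\mathbf{A}$ or to the Hermitian dilation $\bigl(\begin{smallmatrix} 0 & \mathbf{A} \\ \mathbf{A}^{*} & 0 \end{smallmatrix}\bigr)$ — no extra work is needed.
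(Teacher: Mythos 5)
Your proposal is correct and follows exactly the paper's argument: the paper's proof is a one-line induction on $j$ using Lemma \ref{thm:remove_cols} as the base case, which is precisely the structure you spell out (with the chaining and zero-padding conventions made explicit). No further comparison is needed.
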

\begin{proof}
    This is easily seen by induction on $j$ using the Horn and Johnson lemma as a $j =1$ base case.
\end{proof}

\begin{proposition}[$\barn$ and $\barn + 1$ singular value bounds] \label{next_singular_val_low}
    If a tuple $(\mathbf{R}, \bar{n}) \in \mathcal{M}$ then the singular values of $\mathbf{R}$ satisfy the following relations:
    \begin{equation*}
        \sigma_{\bar{n}}(\mathbf{R}) \geq 2^{\frac{5}{4}}n^{\frac{3}{4}}\sqrt{\kappa_{(\bfR, \bar{n})}}
    \end{equation*}
    \begin{equation*}
        \sigma_{\bar{n}+1}(\mathbf{R}) \leq \sqrt{(n-\bar{n})\kappa_{(\bfR,\barn)}}
    \end{equation*}
\end{proposition}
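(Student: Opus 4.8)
Both inequalities reduce to the singular-value interlacing statement under column deletion (Corollary~\ref{relations_singular_vals}), applied to the partition of $\mathbf{R}$'s columns into the $\bar{n}$ popular columns $[\bar{n}]$ and the $n-\bar{n}$ unpopular columns $\{\bar{n}+1,\dots,n\}$, together with the elementary fact that padding a matrix with zero columns does not change its nonzero singular values. First I would set up notation: write $\mathbf{A}(\bar{n})\in[0,1]^{m\times\bar{n}}$ for the popular block, so that $\mathbf{R}'(\bar{n})$ is $\mathbf{A}(\bar{n})$ padded with $n-\bar{n}$ zero columns, and let $\mathbf{B}'\in[0,1]^{m\times(n-\bar{n})}$ be the submatrix of $\mathbf{R}$ formed by its unpopular columns (i.e.\ $\mathbf{R}$ with its first $\bar{n}$ columns deleted).

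\textbf{Lower bound on $\sigma_{\bar{n}}(\mathbf{R})$.} Since the nonzero singular values of $\mathbf{R}'(\bar{n})$ are exactly those of $\mathbf{A}(\bar{n})$, and $\mathbf{A}(\bar{n})$ is obtained from $\mathbf{R}$ by deleting $n-\bar{n}$ columns, Corollary~\ref{relations_singular_vals} with $j=n-\bar{n}$ and $i=\bar{n}$ gives $\sigma_{\bar{n}}(\mathbf{R})\ge\sigma_{\bar{n}}(\mathbf{A}(\bar{n}))=\sigma_{\bar{n}}(\mathbf{R}'(\bar{n}))$. It then suffices to show $\sigma_{\bar{n}}(\mathbf{R}'(\bar{n}))>2^{5/4}n^{3/4}\sqrt{\kappa_{(\mathbf{R},\bar{n})}}$, which is exactly equation~\ref{eqn:rem_model_2} of Remark~\ref{rem:pop_gap_class_why}; I would include its short proof here. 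Because $(\mathbf{R},\bar{n})\in\mathcal{M}$, the matrix $\mathbf{R}$ has at least one row, and any user's top item lies in $[\bar{n}]$ or in $\{\bar{n}+1,\dots,n\}$, so at least one of $\calUmaj,\calUmin$ is nonempty; applying Assumption~\ref{maj_users_high} (respectively Assumption~\ref{min_users_high}) to such a user $u$, and using that the other side of that inequality is a rating bounded above by $1$ since $\mathbf{R}\in[0,1]^{m\times n}$, forces $\Delta(\mathbf{R},\bar{n})<1$. Unpacking $\Delta(\mathbf{R},\bar{n})=2^{5/2}\kappa_{(\mathbf{R},\bar{n})}n^{3/2}/\sigma_{\bar{n}}(\mathbf{R}'(\bar{n}))^2<1$ (in particular this forces $\sigma_{\bar{n}}(\mathbf{R}'(\bar{n}))>0$, so the rearrangement is legitimate) and taking square roots yields the claimed strict inequality, and chaining with $\sigma_{\bar{n}}(\mathbf{R})\ge\sigma_{\bar{n}}(\mathbf{R}'(\bar{n}))$ finishes this part.

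\textbf{Upper bound on $\sigma_{\bar{n}+1}(\mathbf{R})$.} Applying Corollary~\ref{relations_singular_vals} to $\mathbf{B}'$ (which is $\mathbf{R}$ with its $\bar{n}$ popular columns deleted) with $j=\bar{n}$ and $i=1$ gives $\sigma_{\bar{n}+1}(\mathbf{R})\le\sigma_1(\mathbf{B}')$. I would then bound the operator norm by the Frobenius norm, $\sigma_1(\mathbf{B}')^2\le\|\mathbf{B}'\|_F^2=\sum_{i=\bar{n}+1}^{n}\|\mathbf{R}_i\|_2^2$, and use that every entry $r_{u,i}\in[0,1]$ satisfies $r_{u,i}^2\le r_{u,i}$, whence $\|\mathbf{R}_i\|_2^2\le\|\mathbf{R}_i\|_1\le\kappa_{(\mathbf{R},\bar{n})}$ for each unpopular index $i$ by definition of $\kappa_{(\mathbf{R},\bar{n})}$. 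Summing over the $n-\bar{n}$ unpopular columns gives $\|\mathbf{B}'\|_F^2\le(n-\bar{n})\kappa_{(\mathbf{R},\bar{n})}$, hence $\sigma_{\bar{n}+1}(\mathbf{R})\le\sqrt{(n-\bar{n})\kappa_{(\mathbf{R},\bar{n})}}$. Honestly there is no serious obstacle in this proposition — it is essentially bookkeeping with interlacing and norm inequalities; the only points requiring care are (i) genuinely using the $[0,1]$ normalization to extract $\Delta(\mathbf{R},\bar{n})<1$ (and with it $\sigma_{\bar{n}}(\mathbf{R}'(\bar{n}))>0$), and (ii) selecting the correct deletion count $j$ and index $i$ in each of the two applications of Corollary~\ref{relations_singular_vals}.
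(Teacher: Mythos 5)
Your proposal is correct and follows essentially the same route as the paper: the lower bound comes from noting that Assumptions~\ref{maj_users_high}/\ref{min_users_high} force $\Delta(\bR,\bar{n})\le 1$ (since ratings lie in $[0,1]$), which rearranges to the bound on $\sigma_{\bar{n}}(\bR'(\bar{n}))$ and transfers to $\bR$ via Corollary~\ref{relations_singular_vals}, while the upper bound uses the same interlacing step $\sigma_{\bar{n}+1}(\bR)\le\sigma_1(\mathbf{B})$ followed by the Frobenius-norm bound exploiting $r_{u,i}^2\le r_{u,i}$. Your explicit remark that at least one of $\calUmaj,\calUmin$ is nonempty (so the $\Delta<1$ argument applies) is a slightly more careful rendering of the paper's ``otherwise the gap exceeds what $[0,1]$ ratings allow,'' but it is the same argument.
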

\begin{proof}
The first inequality comes from the fact that assumptions \ref{maj_users_high} and \ref{min_users_high} hold. Clearly $\frac{2^{\frac{5}{2}}\kappa_{(\bfR, \bar{n})} n^{\frac{3}{2}}}{\left[\sigma_{\bar{n}}(\mathbf{R}'(\bar{n}))\right]^2} \leq 1$ if assumption \ref{maj_users_high} is true because otherwise the minimum difference between a top items and next ratings is greater than what $\mathbf{R} \in [0,1]^{m\times n}$ allows. This yields $\sigma_{\bar{n}}(\mathbf{R}'({\bar{n}})) \geq 2^{\frac{5}{4}}n^{\frac{3}{4}}\sqrt{\kappa_{(\bfR, \bar{n})}}$ Because $\mathbf{R}'({\bar{n}})$ is the same as $\mathbf{R}$ with the unpopular columns removed (and replaced with zeros, which does not affect singular values) we can invoke Corollary \ref{relations_singular_vals} to get the desired inequality in terms of $\mathbf{R}$.\\
Now we show the second inequality. Define matrix $\mathbf{B} \in [0,1]^{m \times (n-\bar{n})}$ to be matrix $\mathbf{R}$ but where popular item columns have been removed. We have the following:
\begin{equation*}
    \begin{aligned}
        \sigma_{1 + \bar{n}}(\mathbf{R}) &\leq \sigma_1(\mathbf{B})\quad \text{Corollary \ref{relations_singular_vals}}\\
        &= ||\mathbf{B}||_{2} \quad \text{Def of spectral norm}\\
        &\leq ||\mathbf{B}||_{F} \quad \text{Matrix Norm Equivalences}\\
        &\leq \sqrt{(n-\bar{n})\kappa_{(\bfR,\barn)}}
    \end{aligned}
\end{equation*}

To get the last inequality, recall that $||\mathbf{X}||_{F} = \sqrt{\mathrm{tr}(\mathbf{X}^\top \mathbf{X})}$ and note that 
\begin{equation*}
    (\mathbf{B}^\top\mathbf{B})_{i,j}\leq \max_{i \in [n-\bar{n}]}\mathbf{B_i}^\top\mathbf{1} = \max_{i \in [n-\bar{n}]}||\mathbf{B_i}||_{1} = \kappa_{(\bfR,\barn)}
\end{equation*}
where $\mathbf{B_i}$ is the $i$th column vector of $\mathbf{B}$. $\mathrm{tr}(\mathbf{X}^\top \mathbf{X})$ must thus be upper bounded by $(n-\barn)\kappa_{(\bfR,\barn)}$ because there are $n-\barn$ diagonal elements of $\mathbf{B}^\top \mathbf{B}$ each upper bounded by $\kappa_{(\bfR,\barn)}$.
\end{proof}
\subsubsection{The important class of learners}

Intuitively, we have shown in the above preliminaries that if $(\mathbf{R}, \bar{n}) \in \mathcal{M}$, then the $\bar{n}$-th singular value of $\mathbf{R}$ must be relatively big while the next singular values must be quite small. This should mean that retaining singular values 1 through $\bar{n}$ is ``important" while the remainder of the singular values do not contribute very much. Notably this is another way in which to understand intuitively why an item is ``popular'' or ``unpopular''!
\begin{definition}[$(\barn, \bR)$-Singular Value Gap] For any $\bR \in [0,1]^{m\times n}$ and $\barn \in [n]$ s.t. $(\bR, \barn) \in \calM$, define the space 
\begin{equation*}
    \calG(\barn, \bR):= \{y \in \R: y \in  \left(\sqrt{(n-\bar{n})\kappa_{(\bfR,\barn)}}, 2^{\frac{5}{4}}n^\frac{3}{4}\sqrt{\kappa_{(\bfR,\barn)}}\right)\}
\end{equation*}
    
\end{definition}

Much like in the main body of the paper, a learner whose $\alpha$ parameter falls into this gap selects to truncate exactly to dimension $\barn$. Formally:
\begin{proposition}[$k^\star=\barn$ for the $\alpha$-loss tolerant learner] \label{rhat_equals_nbar}
    For all $\tbR \in [0,1]^{m\times n}$ such that $(\tbR, \bar{n}) \in \mathcal{M}$, 
    If the $\alpha$ loss tolerant learner is such that $\alpha \in \calG(\barn, \tbR)$,
    then it must be the case that $k^\star = \bar{n}$.
\end{proposition}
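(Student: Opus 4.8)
The plan is to unwind the definition of the $\alpha$-loss tolerant learner and feed in the singular-value bounds already established in Proposition~\ref{next_singular_val_low}. Recall that an $\alpha$-loss tolerant learner selects the smallest $k$ for which $\sigma_{k+1}(\tbR) \le \alpha$. Hence it suffices to prove two inequalities: (i) $\sigma_{\bar{n}}(\tbR) > \alpha$, so that no $k \le \bar{n}-1$ is feasible (the singular values are non-increasing, so $\sigma_{k+1}(\tbR) \ge \sigma_{\bar{n}}(\tbR) > \alpha$ whenever $k+1 \le \bar{n}$), forcing $k^\star \ge \bar{n}$; and (ii) $\sigma_{\bar{n}+1}(\tbR) \le \alpha$, so that $k = \bar{n}$ is feasible, giving $k^\star \le \bar{n}$. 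Together these yield $k^\star = \bar{n}$.

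For (i): because $(\tbR, \bar{n}) \in \calM$, Proposition~\ref{next_singular_val_low} gives $\sigma_{\bar{n}}(\tbR) \ge 2^{5/4} n^{3/4} \sqrt{\kappa_{(\tbR,\bar{n})}}$, while $\alpha \in \calG(\bar{n}, \tbR)$ means in particular that $\alpha < 2^{5/4} n^{3/4}\sqrt{\kappa_{(\tbR,\bar{n})}}$; chaining these gives $\alpha < \sigma_{\bar{n}}(\tbR)$. For (ii): again by Proposition~\ref{next_singular_val_low}, $\sigma_{\bar{n}+1}(\tbR) \le \sqrt{(n-\bar{n})\kappa_{(\tbR,\bar{n})}}$, and $\alpha \in \calG(\bar{n}, \tbR)$ gives $\alpha > \sqrt{(n-\bar{n})\kappa_{(\tbR,\bar{n})}} \ge \sigma_{\bar{n}+1}(\tbR)$. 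So the feasibility constraint $\sigma_{k+1}(\tbR) \le \alpha$ holds at $k = \bar{n}$, and the minimal feasible $k$ is at most $\bar{n}$.

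I do not anticipate any genuine obstacle: all of the real work has been pushed into Proposition~\ref{next_singular_val_low} (which itself rests on Assumptions~\ref{maj_users_high} and~\ref{min_users_high} together with the column-deletion interlacing of Corollary~\ref{relations_singular_vals}), so the present statement is essentially a restatement once the endpoints of the interval $\calG(\bar{n}, \tbR)$ are matched against those two bounds. The only minor points to check are that the minimum in the learner's program is well-defined (a feasible $k$ always exists, e.g.\ $k = \mathrm{rank}(\tbR)$, for which $\sigma_{k+1}(\tbR) = 0 \le \alpha$) and that the open-interval endpoints of $\calG(\bar{n}, \tbR)$ deliver strict inequalities in precisely the directions used above — both are immediate, so the argument is short.
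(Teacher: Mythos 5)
Your proposal is correct and follows essentially the same route as the paper: both invoke Proposition~\ref{next_singular_val_low} to get $\sigma_{\bar{n}}(\tbR) \ge 2^{5/4}n^{3/4}\sqrt{\kappa_{(\tbR,\bar{n})}} > \alpha$ and $\sigma_{\bar{n}+1}(\tbR) \le \sqrt{(n-\bar{n})\kappa_{(\tbR,\bar{n})}} < \alpha$ from the endpoints of $\calG(\bar{n},\tbR)$, then use monotonicity of singular values to conclude that $\bar{n}$ is the minimal feasible $k$. Your write-up merely spells out the endpoint-matching and the well-definedness of the minimum a bit more explicitly than the paper does.
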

\begin{proof}
By proposition \ref{next_singular_val_low}, $\sigma_{\bar{n}}>\alpha$ while $\sigma_{\bar{n}+1}<\alpha$. By properties of singular values, $\sigma_j \geq \sigma_{\bar{n}}\forall j \leq \bar{n}$, thus $\bar{n}$ is the minimum $k$ such that $\sigma_{k + 1} < \alpha$
\end{proof}
\begin{remark}[$\calG(\barn, \bR)$ is nonempty]\label{calg_nonmpety_appendix}
    Note that for any $\bar{n} \geq 1$, $2^{\frac{5}{4}}n^\frac{3}{4} > \sqrt{n-\bar{n}}$ therefore the space $\calG(\barn, \tbR)$ is not empty for any reasonable tuple.
\end{remark}
Obviously, this range limits the type of learners we discuss, however it is importantly, there are always some learners who fall into this range! And for large $n$, this range for $\alpha$ is also very big, therefore this is a non negligible space of general $\alpha$-loss tolerant learners.
\subsection{Recommendations and social welfare results}
We are now ready to derive the analogous results to Section \ref{sec:users} for this setting. We compute the recommendations made and resulting social welfare when the received preference matrix is such that $(\tbR, \barn)\in \calM$ and the $\alpha$-loss tolerant learner is parametrized such that $k^\star = \barn$. Recall that the learner gets access to a SVD truncation of the the matrix $\tbR$. Before deriving the main results, we need some useful lemmas about SVD in our setting. This involves similar techniques to those used in \citet{pca_fairness_liu}, though we use them for our finite matrices rather than for an infinite sequence.
\subsubsection{Preliminary: SVD truncation error bounds}
First we remind the reader of an equivalent representation of truncated SVD and derive a useful lemma to upper bound the approximation error.

Recall that for a preference matrix, $\bfR \in [0,1]^{m\times n}$, a $k^\star$-truncated SVD approximation is equivalent to solving the following optimization problem:
\begin{equation} \label{eqn:svd_opt}
    \begin{aligned}
        \text{minimize}_{\mathbf{\Pi}\in \R^{n \times n}} &\quad \| \bR - \bR \mathbf{\Pi}\|^2_F \\
        \text{subject to} &\quad \mathbf{\Pi} = \mathbf{U}\mathbf{U}^\top \\
        &\quad \mathbf{U} \in \R^{n \times k^\star} \\
        &\quad \mathbf{U}^\top \mathbf{U} = \mathbf{I}_{k^\star}
    \end{aligned}
\end{equation}
Where $\mathbf{I}_{k^\star}$ is a $k^\star$-dimensional identity and clearly $\mathbf{\Pi}$ is a projection matrix. 

We can define $\widehat{\bR} = \bR \mathbf{\Pi}^\star$ where $\mathbf{\Pi}^\star$ is the minimizer and this $\widehat{\bR}$ is equivalent to the $k^\star$-truncated SVD. We can derive a bound on how close the optimal projection matrix, $\mathbf{\Pi}^\star$, is to $\mathbf{I}_{n,\barn}$, a ``partial" identity matrix where only the first $\barn$ diagonal elements are $1$s and everything else is a 0. Functionally, because $\widehat{\bR} = \bR \mathbf{\Pi}^\star$, this is an upper bound on how close $\widehat{\bR}$ is to just being the first $\barn$ columns of $\bR$ with the remaining columns zeroed out (recall we called this matrix $\bR'(\barn)$). We note that this bound (and its proof) is a version of Theorem 1 from \cite{pca_fairness_liu}.
\begin{proposition}\label{projection_identity}
    Let $\mathbf{\Pi}^\star_{\bar{n}} \in \R^{n\times n}$ be the optimal projection operator of $\mathbf{R}$ to its $\bar{n}$-truncated SVD. Assume that $\sigma_{\barn}(\bR'(\barn))>0$. We have the following:
    \begin{equation}
        ||\mathbf{\Pi}^\star_{\bar{n}} - \mathbf{I}_{n,\bar{n}}||_{F} \leq \frac{\Delta(\bfR, \barn)}{2\sqrt{n}},
    \end{equation}
    where $\mathbf{I}_{n,\bar{n}}$ is a $n\times n$ matrix where the first $\bar{n}$ diagonal entries are $1$ and all other entries are $0$.
\end{proposition}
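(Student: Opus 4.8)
The plan is to identify $\mathbf{\Pi}^*_{\bar{n}}$ and $\mathbf{I}_{n,\bar{n}}$ as the orthogonal projections onto the top-$\bar{n}$ eigenspaces of two closely related Gram matrices, and then invoke a Davis--Kahan $\sin\Theta$ bound. First I would note that solving \eqref{eqn:svd_opt} is equivalent to maximizing $\mathrm{tr}(\mathbf{\Pi}\,\mathbf{R}^\top\mathbf{R})$ over rank-$\bar{n}$ orthogonal projections, so by Ky Fan $\mathbf{\Pi}^*_{\bar{n}}$ projects onto the span of the top $\bar{n}$ eigenvectors of $\mathbf{A} := \mathbf{R}^\top\mathbf{R}$; membership $(\mathbf{R},\bar{n}) \in \calM$ (via Proposition~\ref{next_singular_val_low}, which gives $\sigma_{\bar n}(\mathbf{R}) > \sigma_{\bar n+1}(\mathbf{R})$) makes this projection unique. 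On the other side, $\mathbf{R}'(\bar{n})$ has its last $n-\bar{n}$ columns identically zero, so $\mathbf{B} := \mathbf{R}'(\bar{n})^\top\mathbf{R}'(\bar{n})$ is supported on the top-left $\bar{n}\times\bar{n}$ block; that block is $\mathbf{A}(\bar{n})^\top\mathbf{A}(\bar{n})$, which has full rank $\bar{n}$ precisely because $\sigma_{\bar{n}}(\mathbf{R}'(\bar{n}))>0$, so the top-$\bar{n}$ eigenspace of $\mathbf{B}$ is exactly the coordinate subspace $\mathrm{span}(\mathbf{e}_1,\dots,\mathbf{e}_{\bar{n}})$, whose projection is $\mathbf{I}_{n,\bar{n}}$. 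The eigenvalues of $\mathbf{B}$ are $\sigma_i(\mathbf{R}'(\bar{n}))^2$ for $i\le\bar{n}$ and $0$ afterward, so the relevant $\bar{n}$-th eigengap of $\mathbf{B}$ is the full quantity $\sigma_{\bar{n}}(\mathbf{R}'(\bar{n}))^2$.

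Next I would apply the Davis--Kahan $\sin\Theta$ theorem (the Frobenius-norm variant of Yu--Wang--Samworth, in the spirit of Theorem~1 of \cite{pca_fairness_liu}) with ``clean'' matrix $\mathbf{B}$ and ``perturbed'' matrix $\mathbf{A} = \mathbf{B} + (\mathbf{A}-\mathbf{B})$, together with the standard identity $\|\mathbf{\Pi}^*_{\bar{n}} - \mathbf{I}_{n,\bar{n}}\|_F = \sqrt{2}\,\|\sin\Theta\|_F$ relating the difference of two equal-rank orthogonal projections to their matrix of principal angles. This yields
\[
\|\mathbf{\Pi}^*_{\bar{n}} - \mathbf{I}_{n,\bar{n}}\|_F \;\le\; \frac{2\sqrt{2}\,\|\mathbf{A}-\mathbf{B}\|_F}{\sigma_{\bar{n}}(\mathbf{R}'(\bar{n}))^2}.
\]
The only remaining ingredient is a Frobenius bound on the perturbation $\mathbf{A}-\mathbf{B}$. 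Since $\mathbf{R}'(\bar{n})$ agrees with $\mathbf{R}$ on the first $\bar{n}$ columns, $\mathbf{A}-\mathbf{B}$ vanishes on the top-left $\bar{n}\times\bar{n}$ block, leaving at most $n^2-\bar{n}^2\le n^2$ nonzero entries; for each such entry $(\mathbf{R}^\top\mathbf{R})_{ij}=\sum_u r_{u,i}r_{u,j}$ at least one of $i,j$ exceeds $\bar{n}$, and since all entries of $\mathbf{R}$ lie in $[0,1]$ we get $(\mathbf{R}^\top\mathbf{R})_{ij}\le \min(\|\mathbf{R}_i\|_1,\|\mathbf{R}_j\|_1)\le \kappa_{(\bfR,\bar{n})}$. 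Hence $\|\mathbf{A}-\mathbf{B}\|_F \le n\,\kappa_{(\bfR,\bar{n})}$, and substituting gives $\|\mathbf{\Pi}^*_{\bar{n}} - \mathbf{I}_{n,\bar{n}}\|_F \le \tfrac{2\sqrt{2}\,n\,\kappa_{(\bfR,\bar{n})}}{\sigma_{\bar{n}}(\mathbf{R}'(\bar{n}))^2} = \tfrac{1}{2\sqrt{n}}\cdot\tfrac{2^{5/2}\kappa_{(\bfR,\bar{n})} n^{3/2}}{\sigma_{\bar{n}}(\mathbf{R}'(\bar{n}))^2} = \tfrac{\Delta(\bfR,\barn)}{2\sqrt{n}}$, which is the claim.

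I expect the main obstacle to be the bookkeeping that lands the constant exactly on $2^{5/2}$: one must (i) correctly observe that the $(\bar{n}+1)$-th eigenvalue of $\mathbf{B}$ is $0$, so the Davis--Kahan denominator is the full $\sigma_{\bar{n}}(\mathbf{R}'(\bar{n}))^2$ rather than a gap between two positive eigenvalues; (ii) carry the $\sqrt{2}$ from the projection-vs-$\sin\Theta$ identity and the factor $2$ from the Yu--Wang--Samworth form of the theorem; and (iii) bound the perturbation using \emph{both} that the unpopular columns of $\mathbf{R}$ have $\ell_1$-norm at most $\kappa_{(\bfR,\bar{n})}$ \emph{and} that all ratings lie in $[0,1]$ (the latter is what turns an inner product $\sum_u r_{u,i}r_{u,j}$ into something bounded by a single $\ell_1$-norm). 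Everything else is routine; the subspace-projection reformulation in the first paragraph is what makes the perturbation argument available in the first place.
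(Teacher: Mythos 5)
Your proposal is correct and follows essentially the same route as the paper's own proof: identify $\mathbf{I}_{n,\bar n}$ as the top-$\bar n$ eigenprojection of $\mathbf{R}'(\bar n)^\top\mathbf{R}'(\bar n)$, relate $\|\mathbf{\Pi}^*_{\bar n}-\mathbf{I}_{n,\bar n}\|_F$ to $\sqrt{2}\,\|\sin\Theta\|_F$, apply the Yu--Wang--Samworth Davis--Kahan bound with denominator $\sigma_{\bar n}(\mathbf{R}'(\bar n))^2$ (since $\lambda_{\bar n+1}=0$), and bound the Gram-matrix perturbation entrywise by $\kappa_{(\bfR,\bar n)}$ to get $\|\mathbf{C}-\mathbf{C}'\|_F\le n\kappa_{(\bfR,\bar n)}$. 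Your entrywise bound is in fact stated slightly more carefully than the paper's (noting the top-left block cancels and using $[0,1]$-boundedness explicitly), but the argument and constants are the same.
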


In order to show this, we invoke a well-known matrix theory result that we define as a lemma and prove for completeness below. As a preliminary, recall from matrix analysis that between two matrices, we may compare their subspaces using \emph{principal angles}. In particular, between two matrices, $\mathbf{U}, \mathbf{U}' \in \R^{m\times \barn}$ made up of orthonormal columns, the vector of principal angles is $\mathbf{d}:= (\cos^{-1}\sigma_1, \dots \cos^{-1}\sigma_{\barn})$ where $\sigma_i$ is the $i$th singular value of $\mathbf{U}^\top\mathbf{U}'$. We denote $\sin \Theta(\mathbf{U}, \mathbf{U}'):= \mathrm{diag}(\mathbf{d})$, i.e., a matrix with elements of $\mathbf{d}$ on the diagonal and zeros everywhere else.

\begin{lemma}\label{lemma:1.5} Let $\mathbf{U}, \mathbf{U}' \in \mathbb{R}^{m \times \bar{n}}$ be matrices with orthonormal columns. 
\[\|\sin \Theta(\mathbf{U},\mathbf{U}')\|_F = \frac{1}{\sqrt{2}}\|\mathbf{U}\mathbf{U}^\top - \mathbf{U}'\mathbf{U}'^\top\|_F.\]
\end{lemma}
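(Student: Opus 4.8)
The plan is to convert both sides into sums over the singular values $\sigma_1 \ge \dots \ge \sigma_{\bar{n}}$ of $\mathbf{U}^\top\mathbf{U}'$ by passing through the trace. Set $\mathbf{P} := \mathbf{U}\mathbf{U}^\top$ and $\mathbf{Q} := \mathbf{U}'\mathbf{U}'^\top$, the orthogonal projectors onto the two $\bar{n}$-dimensional column spaces. Since $\mathbf{U}$ and $\mathbf{U}'$ have orthonormal columns, $\mathbf{P}$ and $\mathbf{Q}$ are symmetric, idempotent ($\mathbf{P}^2=\mathbf{P}$, $\mathbf{Q}^2=\mathbf{Q}$), and of rank exactly $\bar{n}$, so $\mathrm{tr}(\mathbf{P})=\mathrm{tr}(\mathbf{Q})=\bar{n}$.

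First I would expand the right-hand side. Using $\|\mathbf{X}\|_F^2 = \mathrm{tr}(\mathbf{X}^\top\mathbf{X})$ and symmetry of $\mathbf{P}-\mathbf{Q}$, together with $\mathbf{P}^2=\mathbf{P}$ and $\mathbf{Q}^2=\mathbf{Q}$,
\[
\|\mathbf{P}-\mathbf{Q}\|_F^2 = \mathrm{tr}\!\left((\mathbf{P}-\mathbf{Q})^2\right) = \mathrm{tr}(\mathbf{P}) + \mathrm{tr}(\mathbf{Q}) - 2\,\mathrm{tr}(\mathbf{P}\mathbf{Q}) = 2\bar{n} - 2\,\mathrm{tr}(\mathbf{P}\mathbf{Q}).
\]
Then, by cyclicity of the trace,
\[
\mathrm{tr}(\mathbf{P}\mathbf{Q}) = \mathrm{tr}(\mathbf{U}\mathbf{U}^\top\mathbf{U}'\mathbf{U}'^\top) = \mathrm{tr}\!\left((\mathbf{U}^\top\mathbf{U}')(\mathbf{U}^\top\mathbf{U}')^\top\right) = \|\mathbf{U}^\top\mathbf{U}'\|_F^2 = \sum_{i=1}^{\bar{n}}\sigma_i^2 .
\]

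Next I would handle the left-hand side. Because $\mathbf{U}$ and $\mathbf{U}'$ each have orthonormal columns, $\mathbf{U}^\top\mathbf{U}'$ is a contraction, so every $\sigma_i \in [0,1]$ and the principal angles $\theta_i := \cos^{-1}\sigma_i$ are well-defined with $\sin^2\theta_i = 1-\sigma_i^2 \ge 0$. By the definition of $\sin\Theta(\mathbf{U},\mathbf{U}')$ as the diagonal matrix whose entries are the sines of the principal angles, $\|\sin\Theta(\mathbf{U},\mathbf{U}')\|_F^2 = \sum_{i=1}^{\bar{n}}\sin^2\theta_i = \sum_{i=1}^{\bar{n}}(1-\sigma_i^2) = \bar{n} - \sum_{i=1}^{\bar{n}}\sigma_i^2$. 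Combining with the previous display, $\|\mathbf{P}-\mathbf{Q}\|_F^2 = 2\bar{n} - 2\sum_i\sigma_i^2 = 2\|\sin\Theta(\mathbf{U},\mathbf{U}')\|_F^2$, and taking square roots yields the identity.

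I expect no substantive obstacle here: the whole argument is a short trace computation. The only points worth stating with care are that $\mathrm{tr}(\mathbf{P})=\mathrm{tr}(\mathbf{Q})=\bar{n}$ (this is precisely where orthonormality of the columns is used, i.e.\ that both projectors have rank exactly $\bar{n}$) and that $\sigma_i \le 1$, so the relation $\sin^2\theta_i = 1-\sigma_i^2$ is valid and the angles are well-defined. If one preferred to avoid principal-angle bookkeeping altogether, an equivalent route is to invoke the CS decomposition to put $\mathbf{U},\mathbf{U}'$ into canonical coordinates and read off both Frobenius norms directly; the trace argument above is cleaner and self-contained.
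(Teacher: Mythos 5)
Your proposal is correct and follows essentially the same route as the paper's proof: expand $\|\mathbf{U}\mathbf{U}^\top-\mathbf{U}'\mathbf{U}'^\top\|_F^2$ via the trace, use symmetry, idempotence, and cyclicity to reduce it to $2\bar{n}-2\|\mathbf{U}^\top\mathbf{U}'\|_F^2$, and identify the singular values of $\mathbf{U}^\top\mathbf{U}'$ with the cosines of the principal angles. Your added remarks (that the projectors have trace $\bar{n}$ and that $\sigma_i\le 1$ so the angles are well-defined) are sound but do not change the argument.
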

\begin{proof}
Let $\mathbf{\Pi}:=\mathbf{U}\mathbf{U}^\top$ and $\mathbf{\Pi}':= \mathbf{U}'\mathbf{U}'^\top$ notice that these are projection matrices. 
\begin{equation*}
\begin{aligned}
\|\mathbf{U}\mathbf{U}^\top - \mathbf{U}'\mathbf{U}'^\top\|_F^2 &= \| \mathbf{\Pi} - \mathbf{\Pi}'\|_F^2\\
&= \mathrm{Tr}\left((\mathbf{\Pi}-\mathbf{\Pi}')^\top (\mathbf{\Pi}-\mathbf{\Pi}')\right) &\quad \text{def of Frobenius norm}\\
&= \mathrm{Tr}\left(\mathbf{\Pi}^\top\mathbf{\Pi} + \mathbf{\Pi}'^\top\mathbf{\Pi}' - \mathbf{\Pi}^\top\mathbf{\Pi}' - \mathbf{\Pi}'^\top\mathbf{\Pi}\right)\\
&=\mathrm{Tr}(\mathbf{\Pi}^2) + \mathrm{Tr}(\mathbf{\Pi}'^2) - \mathrm{Tr}(\mathbf{\Pi}\mathbf{\Pi}') - \mathrm{Tr}(\mathbf{\Pi}'\mathbf{\Pi}) &\quad \text{projection symmetric, trace linear}\\
&=\mathrm{Tr}(\mathbf{\Pi}^2) + \mathrm{Tr}(\mathbf{\Pi}'^2) - 2\mathrm{Tr}(\mathbf{\Pi}\mathbf{\Pi}') &\quad \text{trace cyclic}\\
&=\mathrm{Tr}(\mathbf{\Pi}) + \mathrm{Tr}(\mathbf{\Pi}') - 2\mathrm{Tr}(\mathbf{\Pi}\mathbf{\Pi}') &\quad \text{projection idempotent}\\
&= 2\barn -2\mathrm{Tr}(\mathbf{\Pi}\mathbf{\Pi}') &\quad \text{trace of projection = rank}\\
&= 2\barn -2\mathrm{Tr}\left((\mathbf{U}^\top\mathbf{U}')^\top(\mathbf{U}^\top\mathbf{U}')\right)\\
&=2\barn - 2\sum_{i \in [\barn]}\cos^2(d_i)&\quad\sigma_i(\mathbf{U}^\top \mathbf{U}') = \cos(d_i)\\
&=2\left(\sum_{i \in [\barn]}1-\cos^2(d_i)\right)\\
&=2\left(\sum_{i \in [\barn]}\sin^2(d_i)\right)&\quad\text{trig identity}\\
&=2\|\sin \Theta(\mathbf{U},\mathbf{U}')\|_F^2
\end{aligned}
\end{equation*}
Taking square root and dividing by $\sqrt{2}$ on both sides gives the desired identity.
\end{proof}

Now we are ready for the proof of \Cref{projection_identity}.
\begin{proof}[Proof of Proposition \ref{projection_identity}]
Let $\mathbf{C} = \mathbf{R}^\top \bR$ and $\mathbf{C}' = \mathbf{R}'({\bar{n}})^\top\mathbf{R}'({\bar{n}})$, thus $\mathbf{C},\mathbf{C}'$ $\in \R^{n \times n}$. Let $\mathbf{U}, \mathbf{U}'\in \R^{n \times \bar{n}}$ be matrices whose columns correspond to the $\bar{n}$ normalized eigenvectors of the $\bar{n}$ largest eigenvalues of $\mathbf{C}, \mathbf{C}'$.\\
We complete this proof by going through the following claims:
\begin{enumerate}
    \item $\mathbf{U}'\mathbf{U}'^{\top} = \mathbf{I}_{n, \barn}$
    \item $\frac{1}{\sqrt{2}}\|\mathbf{U}\mathbf{U}^\top -\mathbf{I}_{n, \bar{n}}\|_F=\|\sin \Theta(\mathbf{U},\mathbf{U}')\|_F$
    \item $\|\mathbf{U}\mathbf{U}^\top -\mathbf{I}_{n, \bar{n}}\|_F \le \frac{2\sqrt{2}n\kappa}{[\sigma_{\barn}(\bR'(\barn))]^2}$
\end{enumerate}
For some notational cleanliness, for this proof, we refer to $\kappa_{(\bR, \barn)}$ as simply $\kappa$.\\
\begin{claim}\label{projection_identity_claim_1}
    \[
\mathbf{U}'\mathbf{U}'^{\top} = \mathbf{I}_{n, \barn}
\]
where $\mathbf{I}_{n,\bar{n}}$ is a $n\times n$ matrix where the first $\bar{n}$ diagonal entries are $1$ and all other entries are $0$.
\end{claim}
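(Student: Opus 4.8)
The plan is to exploit the block structure of $\mathbf{R}'(\barn)$. Writing $\mathbf{R}'(\barn) = \begin{pmatrix}\mathbf{A}(\barn) & \mathbf{0}^{m\times(n-\barn)}\end{pmatrix}$ as in the definition of the Popular Preferences Matrix, and abbreviating $\mathbf{A} := \mathbf{A}(\barn) \in [0,1]^{m\times\barn}$, one computes directly
\[
\mathbf{C}' \;=\; \mathbf{R}'(\barn)^\top \mathbf{R}'(\barn) \;=\; \begin{pmatrix}\mathbf{A}^\top\mathbf{A} & \mathbf{0}\\ \mathbf{0} & \mathbf{0}\end{pmatrix} \in \R^{n\times n},
\]
a block-diagonal matrix whose only nonzero block is $\mathbf{A}^\top\mathbf{A} \in \R^{\barn\times\barn}$ in the top-left corner. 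Appending zero columns does not change singular values, so the hypothesis $\sigma_{\barn}(\mathbf{R}'(\barn)) > 0$ says that $\mathbf{A}$ has $\barn$ strictly positive singular values, i.e. $\mathbf{A}^\top\mathbf{A}$ is positive definite. Hence $\mathbf{C}'$ has exactly $\barn$ strictly positive eigenvalues (those of $\mathbf{A}^\top\mathbf{A}$) and $n-\barn$ zero eigenvalues, and the two groups are strictly separated.

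The key step is to locate the top-$\barn$ eigenspace. If $\mathbf{C}'\mathbf{v} = \lambda\mathbf{v}$ with $\lambda > 0$, split $\mathbf{v} = (\mathbf{v}_1;\mathbf{v}_2)$ with $\mathbf{v}_1\in\R^{\barn}$ and $\mathbf{v}_2\in\R^{n-\barn}$; the block form of $\mathbf{C}'$ forces $\mathbf{A}^\top\mathbf{A}\,\mathbf{v}_1 = \lambda\mathbf{v}_1$ and $\mathbf{0} = \lambda\mathbf{v}_2$, so $\mathbf{v}_2 = \mathbf{0}$. Conversely each orthonormal eigenvector $\mathbf{w}_i$ of $\mathbf{A}^\top\mathbf{A}$ lifts to an eigenvector $(\mathbf{w}_i;\mathbf{0})$ of $\mathbf{C}'$ with the same eigenvalue. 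Therefore the matrix $\mathbf{U}'$ of top-$\barn$ eigenvectors of $\mathbf{C}'$ can be written $\mathbf{U}' = \begin{pmatrix}\mathbf{W}\\ \mathbf{0}\end{pmatrix}$ with $\mathbf{W}\in\R^{\barn\times\barn}$ orthogonal, whence
\[
\mathbf{U}'\mathbf{U}'^{\top} \;=\; \begin{pmatrix}\mathbf{W}\\ \mathbf{0}\end{pmatrix}\begin{pmatrix}\mathbf{W}^\top & \mathbf{0}\end{pmatrix} \;=\; \begin{pmatrix}\mathbf{W}\mathbf{W}^\top & \mathbf{0}\\ \mathbf{0} & \mathbf{0}\end{pmatrix} \;=\; \begin{pmatrix}\mathbf{I}_{\barn} & \mathbf{0}\\ \mathbf{0} & \mathbf{0}\end{pmatrix} \;=\; \mathbf{I}_{n,\barn},
\]
which is the claim.

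I do not expect a genuine obstacle here: the argument is a routine block computation. The only point requiring a word of care is that the eigenvalues of $\mathbf{A}^\top\mathbf{A}$ need not be distinct, so $\mathbf{U}'$ is not unique; but the conclusion is insensitive to the choice of orthonormal eigenbasis, since $\mathbf{W}\mathbf{W}^\top = \mathbf{I}_{\barn}$ for \emph{any} orthogonal $\mathbf{W}$, and $\mathbf{I}_{n,\barn}$ is likewise independent of that choice. This identity then feeds into Claim~\ref{projection_identity_claim_1}'s role inside the proof of Proposition~\ref{projection_identity}, where $\mathbf{U}'\mathbf{U}'^\top = \mathbf{I}_{n,\barn}$ is substituted for $\mathbf{\Pi}'$ before invoking Lemma~\ref{lemma:1.5}.
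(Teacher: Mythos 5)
Your proof is correct and follows essentially the same route as the paper: exploit the block-diagonal form of $\mathbf{C}' = \mathbf{R}'(\barn)^\top\mathbf{R}'(\barn)$, identify its top-$\barn$ eigenvectors as zero-padded eigenvectors of $\mathbf{A}(\barn)^\top\mathbf{A}(\barn)$, and conclude $\mathbf{U}'\mathbf{U}'^\top = \mathbf{I}_{n,\barn}$ from orthonormality of that block. If anything, you are slightly more careful than the paper in invoking $\sigma_{\barn}(\mathbf{R}'(\barn))>0$ to strictly separate the top-$\barn$ eigenvalues from the zero eigenvalues of the other block (ruling out a tie that could otherwise let $\mathbf{U}'$ pick up vectors with nonzero bottom component), a point the paper passes over with ``clearly.''
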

\begin{proof}[Proof of Claim~\ref{projection_identity_claim_1}]

Notice that because \begin{equation}
    \mathbf{R}'({\bar{n}}) = 
\begin{pmatrix} 
\mathbf{A}({\bar{n}}) & \mathbf{0}^{m \times (n-\bar{n})} 
\end{pmatrix}
\end{equation}
We have that:
\begin{equation}
    \mathbf{C}' = 
\begin{pmatrix} 
\tilde{\mathbf{C}} & 0 & \hdots & 0 \\
0 & 0 & 0 & 0\\
\vdots & \vdots & \vdots & \vdots\\
0 & 0 & \hdots & 0
\end{pmatrix}
\end{equation}
Where $\tilde{\mathbf{C}}\in \R^{\barn \times \barn}$. 

Notice that $\tilde{\mathbf{C}}$ is a symmetric $\barn \times \barn$ matrix, thus it has an orthonormal eigenbasis. Consider orthonormal matrix $\mathbf{V'}\in \R^{\barn \times \barn}$ s.t. that it's columns are eigenvectors of $\tilde{\mathbf{C}}$. By orthonormality, $\mathbf{V}' \mathbf{V}'^\top = \mathbf{I}_{\barn}$. Define $\tilde{\mathbf{V}}'\in \R^{n\times \barn}$ such that it is $\mathbf{V}'$ padded with zeros. Clearly, $\tilde{\mathbf{V}}'$ is a normalized matrix of the $\barn$ eigenvectors corresponding to the $\barn$ largest eigenvectors of $\mathbf{C}'$, therefore $\mathbf{U}'$ is simply $\tilde{\mathbf{V}}'$. Thus we can see that $\mathbf{U}'\mathbf{U}'^{\top} = \mathbf{I}_{n, \barn}$.
\end{proof}
\begin{claim}\label{projection_identity_claim_2}
   \[ \frac{1}{\sqrt{2}}\|\mathbf{U}\mathbf{U}^\top -\mathbf{I}_{n, \bar{n}}\|_F=\|\sin \Theta(\mathbf{U},\mathbf{U}')\|_F\]
\end{claim}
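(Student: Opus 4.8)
\textbf{Proof proposal for Claim~\ref{projection_identity_claim_2}.} The plan is to obtain this identity as an immediate consequence of Lemma~\ref{lemma:1.5} together with Claim~\ref{projection_identity_claim_1}. First I would record that both $\mathbf{U}$ and $\mathbf{U}'$ satisfy the hypotheses of Lemma~\ref{lemma:1.5}: $\mathbf{C}=\mathbf{R}^\top\mathbf{R}$ and $\mathbf{C}'=\mathbf{R}'(\bar n)^\top\mathbf{R}'(\bar n)$ are symmetric positive semidefinite, so each admits an orthonormal eigenbasis, and by construction the columns of $\mathbf{U}$ (resp.\ $\mathbf{U}'$) are chosen to be $\bar n$ orthonormal eigenvectors associated with the $\bar n$ largest eigenvalues of $\mathbf{C}$ (resp.\ $\mathbf{C}'$). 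In particular $\mathbf{U},\mathbf{U}'\in\mathbb{R}^{n\times\bar n}$ have orthonormal columns, which is exactly what Lemma~\ref{lemma:1.5} requires. (The assumption $\sigma_{\bar n}(\mathbf{R}'(\bar n))>0$, together with Corollary~\ref{relations_singular_vals}, guarantees that both matrices genuinely have $\bar n$ nonzero singular values, so the selection of the top-$\bar n$ eigenspace is meaningful.)

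With this in place I would simply apply Lemma~\ref{lemma:1.5} to the pair $(\mathbf{U},\mathbf{U}')$, giving
\[
\|\sin\Theta(\mathbf{U},\mathbf{U}')\|_F=\frac{1}{\sqrt{2}}\,\|\mathbf{U}\mathbf{U}^\top-\mathbf{U}'\mathbf{U}'^\top\|_F,
\]
and then substitute the identity $\mathbf{U}'\mathbf{U}'^\top=\mathbf{I}_{n,\bar n}$ established in Claim~\ref{projection_identity_claim_1}. This yields $\|\sin\Theta(\mathbf{U},\mathbf{U}')\|_F=\tfrac{1}{\sqrt 2}\|\mathbf{U}\mathbf{U}^\top-\mathbf{I}_{n,\bar n}\|_F$, which rearranges to the claimed equation.

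I do not anticipate a real obstacle here—this step is essentially bookkeeping. The only point that merits a sentence of care is the well-definedness of $\mathbf{U}$ when the $\bar n$-th and $(\bar n+1)$-st eigenvalues of $\mathbf{C}$ coincide, in which case $\mathbf{U}\mathbf{U}^\top$ (the projector onto the top-$\bar n$ eigenspace) is ambiguous; but in the regime of interest the membership $(\mathbf{R},\bar n)\in\mathcal{M}$ forces a strict gap between $\sigma_{\bar n}(\mathbf{R})$ and $\sigma_{\bar n+1}(\mathbf{R})$ (Proposition~\ref{next_singular_val_low}), so the top-$\bar n$ eigenspace, and hence $\mathbf{U}\mathbf{U}^\top$, is uniquely determined. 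Everything else is a direct substitution.
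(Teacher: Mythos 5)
Your argument is exactly the paper's proof: substitute $\mathbf{U}'\mathbf{U}'^\top=\mathbf{I}_{n,\bar n}$ from Claim~\ref{projection_identity_claim_1} and apply Lemma~\ref{lemma:1.5} to the orthonormal-column matrices $\mathbf{U},\mathbf{U}'$. Your extra remark on well-definedness of the top-$\bar n$ eigenspace is a harmless addition but not needed for the identity itself, which holds for any choice of orthonormal eigenvectors.
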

\begin{proof}[Proof of Claim~\ref{projection_identity_claim_2}]
By Claim~\ref{projection_identity_claim_1}, 
\[\mathbf{U}\mathbf{U}^\top -\mathbf{I}_{n, \bar{n}} = \mathbf{U}\mathbf{U}^\top -\mathbf{U}'\mathbf{U}'^\top.\]
Further, since $\bf{U}, \bf{U}'$ are composed of normalized orthogonal columns, by Lemma \ref{lemma:1.5}, we have that 
\[\frac{1}{\sqrt{2}}\|\mathbf{U}\mathbf{U}^\top-\mathbf{U}'\mathbf{U}'^\top\|_F=\|\sin \Theta(\mathbf{U},\mathbf{U}')\|_F.\]
Put together, we have the desired equality:
\[\frac{1}{\sqrt{2}}\|\mathbf{U}\mathbf{U}^\top -\mathbf{I}_{n, \bar{n}}\|_F = \frac{1}{\sqrt{2}}\|\mathbf{U}\mathbf{U}^\top-\mathbf{U}'\mathbf{U}'^\top\|_F=\|\sin \Theta(\mathbf{U},\mathbf{U}')\|_F.\]
\end{proof}

\begin{claim}\label{projection_identity_claim_3}
    \[ \|\mathbf{U}\mathbf{U}^\top -\mathbf{I}_{n, \bar{n}}\|_F \le \frac{2\sqrt{2}n\kappa}{(\sigma_{\barn}(\bR'(\barn)))^2}\]
\end{claim}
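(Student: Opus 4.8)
\textbf{Proof proposal for Claim~\ref{projection_identity_claim_3}.}
The plan is to read the quantity $\mathbf{U}\mathbf{U}^\top - \mathbf{I}_{n,\barn}$ as the distance between the top-$\barn$ eigenspaces of the two Gram matrices $\mathbf{C} = \bfR^\top\bfR$ and $\mathbf{C}' = \bR'(\barn)^\top\bR'(\barn)$, and to control that distance with a $\sin\Theta$ (Davis--Kahan) perturbation bound. By Claim~\ref{projection_identity_claim_1} we have $\mathbf{U}'\mathbf{U}'^\top = \mathbf{I}_{n,\barn}$, and by Claim~\ref{projection_identity_claim_2}, $\|\mathbf{U}\mathbf{U}^\top - \mathbf{I}_{n,\barn}\|_F = \sqrt{2}\,\|\sin\Theta(\mathbf{U},\mathbf{U}')\|_F$; since $\mathbf{U},\mathbf{U}'$ have orthonormal columns (they are eigenvectors of symmetric matrices), it therefore suffices to prove $\|\sin\Theta(\mathbf{U},\mathbf{U}')\|_F \le 2n\kappa / [\sigma_{\barn}(\bR'(\barn))]^2$, where $\kappa := \kappa_{(\bfR,\barn)}$, and then multiply by $\sqrt{2}$.

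The two inputs needed are the size of the perturbation $\mathbf{E} := \mathbf{C} - \mathbf{C}'$ and the relevant eigengap of $\mathbf{C}'$. Writing $\bfR = (\mathbf{A}(\barn) \mid \mathbf{B})$ with $\mathbf{B} \in [0,1]^{m\times(n-\barn)}$ the block of unpopular columns, a direct block computation gives $\mathbf{E} = \begin{pmatrix} \mathbf{0} & \mathbf{A}(\barn)^\top\mathbf{B} \\ \mathbf{B}^\top\mathbf{A}(\barn) & \mathbf{B}^\top\mathbf{B}\end{pmatrix}$, so $\mathbf{E}$ vanishes on its leading $\barn\times\barn$ block and every other entry is an inner product of two columns, at least one of which lies in $\mathbf{B}$. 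Since all entries of $\bfR$ are in $[0,1]$, each such entry is at most $\max_{i'>\barn}\|\bfR_{i'}\|_1 = \kappa$ — the same entrywise estimate already used in the proof of Proposition~\ref{next_singular_val_low} — and hence $\|\mathbf{E}\|_F \le n\kappa$ (there are at most $n^2$ nonzero entries). For the eigengap: $\mathrm{rank}(\bR'(\barn)) \le \barn$, so $\mathbf{C}'$ has at most $\barn$ nonzero eigenvalues, its $\barn$-th eigenvalue equals $\sigma_{\barn}(\bR'(\barn))^2$ and its $(\barn+1)$-th equals $0$; the gap separating the top-$\barn$ eigenspace is thus exactly $\sigma_{\barn}(\bR'(\barn))^2$, which is positive by the standing assumption $\sigma_{\barn}(\bR'(\barn)) > 0$. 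Feeding $\|\mathbf{E}\|_F$ and this gap into the Davis--Kahan bound (the statistician's variant, as in Theorem~1 of \cite{pca_fairness_liu}, which carries the leading factor $2$) yields $\|\sin\Theta(\mathbf{U},\mathbf{U}')\|_F \le 2\|\mathbf{E}\|_F/\sigma_{\barn}(\bR'(\barn))^2 \le 2n\kappa/\sigma_{\barn}(\bR'(\barn))^2$, and Claim~\ref{projection_identity_claim_2} then gives $\|\mathbf{U}\mathbf{U}^\top - \mathbf{I}_{n,\barn}\|_F \le \tfrac{2\sqrt{2}\,n\kappa}{\sigma_{\barn}(\bR'(\barn))^2}$, as required.

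The main obstacle is invoking the correct form of Davis--Kahan: one must use the version in which it is the eigengap of the \emph{unperturbed} matrix $\mathbf{C}'$ (whose spectrum is transparent because $\bR'(\barn)$ is block-structured) that controls the subspace distance, rather than a gap of $\mathbf{C}$, and one must keep the numerical constant so that it lands on the $2\sqrt{2}$ in the claim; the ``useful variant'' of the theorem is precisely what delivers both. A secondary, purely routine point is checking the entrywise bound $|\mathbf{E}_{ij}| \le \kappa$ uniformly across the cross block $\mathbf{A}(\barn)^\top\mathbf{B}$ and the block $\mathbf{B}^\top\mathbf{B}$ — both follow from $r_{u,i}\in[0,1]$ and the definition of $\kappa_{(\bfR,\barn)}$ — together with the bookkeeping $\sqrt{\#\{\text{nonzero entries}\}}\le n$.
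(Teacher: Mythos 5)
Your proposal is correct and follows essentially the same route as the paper: the same Gram matrices $\mathbf{C}=\bfR^\top\bfR$ and $\mathbf{C}'=\bR'(\barn)^\top\bR'(\barn)$, the same entrywise estimate giving $\|\mathbf{C}-\mathbf{C}'\|_F\le n\kappa$, the same eigengap $\sigma_{\barn}(\bR'(\barn))^2$ of the unperturbed matrix, and the same Davis--Kahan-type bound (the paper invokes Theorem~2 of Yu et al.\ \cite{yu2014} rather than \cite{pca_fairness_liu}, but it is the identical ``statistician's variant'' with the factor $2$), combined with Claims~\ref{projection_identity_claim_1} and~\ref{projection_identity_claim_2} to produce the $2\sqrt{2}$.
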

\begin{proof}[Proof of Claim~\ref{projection_identity_claim_3}]

From \citet{yu_useful_2015} Theorem 2 using $r=1$ and $s=\bar{n}$, we have:
\[
||\sin \Theta(\mathbf{U}, \mathbf{U}')||_F \leq \frac{2 \min(\sqrt{\bar{n}}||\mathbf{C} - \mathbf{C}'||_{op}, ||\mathbf{C} - \mathbf{C}' ||_F)}{\min(\lambda_0 - \lambda_1(\mathbf{C}'), \lambda_{\bar{n}}(\mathbf{C}')-\lambda_{\bar{n}+1})}
\]
where $\lambda_0 = \infty$ and $\lambda_{\bar{n}+1}=0$ by construction. Note that by assumption, $\sigma_{\barn}(\bR'(\barn))>0$, so the denominator is well-defined. 
 
Thus we have:
\begin{equation*}
\begin{aligned}
\|\mathbf{U}\mathbf{U}^\top - \mathbf{I}_{n,\bar{n}}\|_F &\leq \frac{2\sqrt{2} \min(\sqrt{\bar{n}}||\mathbf{C} - \mathbf{C}'||_{op}, ||\mathbf{C} - \mathbf{C}' ||_F)}{\min(\lambda_0 - \lambda_1(\mathbf{C}'), \lambda_{\bar{n}}(\mathbf{C}')-\lambda_{\bar{n}+1})} &\quad \text{Claim 2}\\
&= \frac{2\sqrt{2} \min(\sqrt{\bar{n}}||\mathbf{C} - \mathbf{C}'||_{op}, ||\mathbf{C} - \mathbf{C}' ||_F)}{\min(\infty - \lambda_1(\mathbf{C}'), \lambda_{\bar{n}}(\mathbf{C}'))} &\quad \text{by construction}\\
&\leq \frac{2\sqrt{2}\|\mathbf{C} - \mathbf{C}' \|_F}{\min(\infty - \lambda_1(\mathbf{C}'), \lambda_{\bar{n}}(\mathbf{C}'))}\\
&= \frac{2\sqrt{2}\|\mathbf{C} - \mathbf{C}' \|_F}{\lambda_{\bar{n}}(\mathbf{C}')}\\
&=\frac{2\sqrt{2}\|\mathbf{C} - \mathbf{C}' \|_F}{(\sigma_{\barn}(\bR'(\barn)))^2} &\quad \text{Def of singular value}\\
&\leq \frac{2\sqrt{2}n\kappa}{(\sigma_{\barn}(\bR'(\barn)))^2}
\end{aligned}
\end{equation*}
Where the last inequality is as follows. Notice that
\begin{align*}
    \|\mathbf{C}-\mathbf{C}'\|_F & = \|\bR^\top \bR-\bR'(\barn)^\top \bR'(\barn)\|_F\\
    & = \left\|\begin{pmatrix}\mathbf{A}(\barn)^\top \mathbf{A}(\barn) & \mathbf{A}(\barn)^\top \mathbf{B}(\barn)\\
    \mathbf{B}(\barn)^\top \mathbf{A}(\barn) & \mathbf{B}(\barn)^\top \mathbf{B}(\barn)\end{pmatrix}-\begin{pmatrix}\mathbf{A}(\barn)^\top \mathbf{A}(\barn) & 0\\
    0 & 0\end{pmatrix}\right\|_F \\
    & = \left\|\begin{pmatrix}0 & \mathbf{A}(\barn)^\top \mathbf{B}(\barn)\\
    \mathbf{B}(\barn)^\top \mathbf{A}(\barn) & \mathbf{B}(\barn)^\top \mathbf{B}(\barn)\end{pmatrix}\right\|_F
\end{align*}
Where $\mathbf{A}(\barn)\in [0,1]^{m\times \barn}$ are the popular columns of the preference matrix and $\mathbf{B}(\barn) \in [0,1]^{m\times (n-\barn)}$ are the unpopular columns of the matrix. 
Thus we can upper bound every element of $\mathbf{C} - \mathbf{C}'$:
\[(\mathbf{C}-\mathbf{C}')_{ij} \le \max_{i > \bar{n}}\mathbf{R_i}^\top \mathbf{1} = \max_{i > \bar{n}}\|\mathbf{R_i}\|_1 = \kappa.\]
There are $n^2$ elements in $(\mathbf{C}-\mathbf{C}')$
Therefore
\[\|\mathbf{C} - \mathbf{C}' \|_F \le \sqrt{n^2\kappa^2}\le n\kappa.\]
\end{proof}
To reach the final statement of the theorem, notice that $\mathbf{U}\mathbf{U}^\top$ creates the projection matrix, $\mathbf{\Pi}^\star$ that minimizes the optimization problem \ref{eqn:svd_opt} and 
recall that $\Delta(\mathbf{R}, \barn):= \frac{4\sqrt{2}\kappa_{(\bfR, \bar{n})} n\sqrt{n}}{[\sigma_{\bar{n}}(\mathbf{R}'(\bar{n}))]^2}$
\end{proof}
\begin{comment}
\begin{remark}
    The RHS of Proposition \ref{projection_identity} is \[\frac{2n\kappa_{(\bfR, \bar{n})}}{(\sigma_{\bar{n}}(\mathbf{R}'({\bar{n}})))^2}\]
    which indicates that if $n$ and $\kappa_{(\bfR, \bar{n})}$ are small compared to the $\bar{n}$th singular value of $\mathbf{R}'_{\bar{n}}$, then the projection operator is very close to a zero-padded identity matrix. Because $\hat{\mathbf{R}} = \mathbf{R}\mathbf{P}^\star_{\bar{n}}$, this means that for certain matrices, the $\bar{n}$-truncated SVD is nearly $\mathbf{R}'_{\bar{n}}$, where majority items' ratings are perfectly recovered and minority items' ratings are erased.
\end{remark}
\end{comment}

\subsubsection{Recommendations and social welfare bounds}
Now we have some idea of what $\widehat{\bR}$ will be from proposition \ref{projection_identity}. In particular, \Cref{projection_identity} explains (in terms of Frobenius error) how far $\widehat{\bR}$ is from $\bR'(\barn)$! Leveraging this knowledge, we can make statements about the recommendations and resulting social welfare particular $\alpha$ learners give when $(\tbR, \barn) \in \calM$ because if $\widehat{\bR}$ is sufficiently similar to $\bR'(\barn)$, then we know that minority users are not be served their actual top items, which are all indexed $i' > \barn$!. The following theorem (combined with \Cref{myo_si_ub_finite}) is the analog to \Cref{g_maj_b_min2}, but in this appendix setting.

\begin{theorem}[Recommendations are good for majority, bad for minority]\label{g_maj_b_min}
Let $\bR^\star$ be a personal preference matrix such that, for some $\barn \in [n]$, $(\bR^\star, \barn) \in \calM$. Assume that agents all report ratings according to their personal interest such that $\tbR = \bR^\star$. Let there also be an $\alpha$ loss tolerant learner s.t. $k^\star = \barn$, or sufficiently, $\alpha \in \calG(\barn, \tbR)= \calG(\barn, \bR^\star)$. 

Then, all majority users are accurately given one of their top popular items, while minority users are given a popular item. Formally, top-1 item recommendation on $\widehat{\bR}$ satisfies the following two properties:
    \begin{equation} \label{maj_user_sat}
        \mathrm{arg}\max_{i \in [n]}\hat{r}_{u,i} \subseteq \topitems(\tbR, u) \cap [\bar{n}] \quad \forall u \in \calUmaj
    \end{equation}
    and 
    \begin{equation}\label{min_user_unsat}
         \mathrm{arg}\max_{i \in [n]}\hat{r}_{u,i} \subseteq [\bar{n}] \quad \forall u \in \calUmin
    \end{equation}
\end{theorem}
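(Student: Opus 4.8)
The plan is to combine the three facts already established. Since $\alpha \in \calG(\barn,\tbR)$, Proposition~\ref{rhat_equals_nbar} gives $k^\star=\barn$, so $\hR$ is the $\barn$-truncated SVD of $\tbR$; writing this in the projection form of optimization problem~(\ref{eqn:svd_opt}), $\hR=\tbR\,\mathbf{\Pi}^\star_{\barn}$ where $\mathbf{\Pi}^\star_{\barn}$ is exactly the operator of Proposition~\ref{projection_identity}. Applying that proposition to $\tbR$ (its hypothesis $\sigma_{\barn}(\tbR'(\barn))>0$ being inherited from $(\tbR,\barn)\in\calM$) gives $\|\mathbf{\Pi}^\star_{\barn}-\mathbf{I}_{n,\barn}\|_F \le \Delta(\tbR,\barn)/(2\sqrt{n})$. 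From here I would extract an \emph{entrywise} comparison of $\hR$ with $\tbR'(\barn)$, the reported matrix with its unpopular columns zeroed. Since $\tbR'(\barn)=\tbR\,\mathbf{I}_{n,\barn}$, the $u$-th rows obey $\hat{\mathbf{r}}_u-\tilde{\mathbf{r}}'_u=\tilde{\mathbf{r}}_u(\mathbf{\Pi}^\star_{\barn}-\mathbf{I}_{n,\barn})$; bounding operator norm by Frobenius norm and using $\|\tilde{\mathbf{r}}_u\|_2\le\sqrt{n}$ (ratings lie in $[0,1]$) yields $\|\hat{\mathbf{r}}_u-\tilde{\mathbf{r}}'_u\|_2 \le \sqrt{n}\cdot\Delta(\tbR,\barn)/(2\sqrt{n}) = \Delta(\tbR,\barn)/2$, hence $|\hat r_{u,i}-\tilde r'_{u,i}|\le \Delta(\tbR,\barn)/2$ for every $i$. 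Write $\delta:=\Delta(\tbR,\barn)/2$.

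For property~(\ref{maj_user_sat}), fix $u\in\calUmaj$ and choose a top popular item $i^\star\in\topitems(\tbR,u)\cap[\barn]$ (nonempty by definition of majority user); let $M_u:=\max_i \tilde r_{u,i}$. Since $i^\star\le\barn$, $\tilde r'_{u,i^\star}=M_u$, so $\hat r_{u,i^\star}\ge M_u-\delta$. Take any $j\notin\topitems(\tbR,u)\cap[\barn]$. If $j\le\barn$ and $j\notin\topitems(\tbR,u)$, Assumption~\ref{maj_users_high} gives $\tilde r'_{u,j}=\tilde r_{u,j}<M_u-\Delta(\tbR,\barn)=M_u-2\delta$, so $\hat r_{u,j}<M_u-\delta$. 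If $j>\barn$, then $\tilde r'_{u,j}=0$, so $\hat r_{u,j}\le\delta$, and since Assumption~\ref{maj_users_high} (with $\max_{i\notin\topitems(\tbR,u)}\tilde r_{u,i}\ge 0$) forces $M_u>\Delta(\tbR,\barn)=2\delta$, again $\hat r_{u,j}<M_u-\delta$. In every case $\hat r_{u,j}<\hat r_{u,i^\star}$, so $j$ is not an $\mathrm{arg}\max$; hence $\mathrm{arg}\max_i\hat r_{u,i}\subseteq\topitems(\tbR,u)\cap[\barn]$.

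For property~(\ref{min_user_unsat}), fix $u\in\calUmin$. Assumption~\ref{min_users_high} supplies a popular item $i_0\le\barn$ with $\tilde r_{u,i_0}>\Delta(\tbR,\barn)=2\delta$, so $\hat r_{u,i_0}\ge\tilde r'_{u,i_0}-\delta>\delta$. Every unpopular $j>\barn$ has $\tilde r'_{u,j}=0$, hence $\hat r_{u,j}\le\delta<\hat r_{u,i_0}$, so $j$ is not an $\mathrm{arg}\max$; thus $\mathrm{arg}\max_i\hat r_{u,i}\subseteq[\barn]$.

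The step I expect to be the main obstacle is the perturbation-to-margin bookkeeping: the entrywise error one can guarantee ($\delta$) is exactly half the rating gap built into Assumptions~\ref{maj_users_high}--\ref{min_users_high} ($\Delta(\tbR,\barn)=2\delta$), so one must show the two $\delta$-balls — around the $\hR$ row and the $\tbR'(\barn)$ row — still keep the top popular coordinate strictly largest, which is tight and uses the identity $2\delta=\Delta(\tbR,\barn)$ at each comparison. A secondary point worth stating carefully is that the $\barn$-truncated SVD really equals right-multiplication by the optimal rank-$\barn$ projection $\mathbf{\Pi}^\star_{\barn}$ referenced in Proposition~\ref{projection_identity}, and that its hypothesis is inherited from $(\tbR,\barn)\in\calM$. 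Everything here is deterministic; no coherence or sampling assumptions enter.
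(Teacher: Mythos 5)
Your proof is correct and follows essentially the same route as the paper: Proposition~\ref{rhat_equals_nbar} to pin $k^\star=\barn$, Proposition~\ref{projection_identity} to control $\mathbf{\Pi}^\star_{\barn}-\mathbf{I}_{n,\barn}$, an entrywise deviation of $\hR$ from $\tbR'(\barn)$ of at most $\Delta(\tbR,\barn)/2$, and then Assumptions~\ref{maj_users_high}--\ref{min_users_high} to preserve the argmax. The only (cosmetic) difference is how that entrywise bound is extracted — you bound each row via $\|\tilde{\mathbf{r}}_u\|_2\le\sqrt{n}$ and the operator/Frobenius norm of the perturbation, while the paper bounds each perturbation column via an $\ell_1$--$\ell_2$ inequality — and both yield the identical $\Delta/2$ margin.
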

\begin{proof}
For notational cleanliness in this proof, we write $\kappa$ to refer to $\kappa_{(\tbR, \barn)}$ and $\tbR'$ to refer to $\tbR'(\barn)$.

Note that if we consider $\alpha \in \calG(\barn, \tbR)$, by proposition \ref{rhat_equals_nbar}, $k^\star = \barn$ thus, $\widehat{\bR}:= \tbR\mathbf{\Pi}_{\bar{n}}^\star$. By Proposition \ref{projection_identity}, if $\frac{2\sqrt{2}\kappa n}{\sigma_{\bar{n}}(\tbR')} = 0$ then $\mathbf{\Pi}^\star_{\bar{n}} = \mathbf{I}_{n,\bar{n}}$. When this is the case, $\widehat{\bR}= \tbR'$ and the optimal solution for our top-1 item selection problem is such that properties \ref{min_user_unsat} and \ref{maj_user_sat} hold. We want to show that when the Frobenius norm difference of Proposition \ref{projection_identity} is small, under assumptions \ref{maj_users_high} and \ref{min_users_high}, the top-1 item selection problem's solution is as if the Frobenius norm difference were $0$, so properties \ref{min_user_unsat} and \ref{maj_user_sat} still hold.
Let the rows of $\tbR$ be $\tbr_u^\top \in [0,1]^{n}$ for $u \in [m]$ and the columns of $\mathbf{\Pi}^\star_{\bar{n}}$ be $\mathbf{v}_i \in \R^{n}$ for $i \in [n]$. We the consider the problem as follows:
\begin{equation}
\begin{pmatrix} 
\tilde{r}_{1,1}, & \hdots & \tilde{r}_{1,n}\\
\vdots& \hdots & \vdots \\
\tilde{r}_{m,1}, & \hdots & \tilde{r}_{m,n}
\end{pmatrix}
\begin{pmatrix} 
1 + \eps_{1,1}& 0+ \eps_{1,2} &\hdots & 0+ \eps_{1,\bar{n}}& \hdots &0 + \eps_{1,n}\\
0+ \eps_{2,1}& 1+ \eps_{2,2} &\hdots & 0+ \eps_{2,\bar{n}}& \hdots &0+ \eps_{2,n}\\
\vdots& \vdots &\ddots & \vdots& \vdots &\vdots\\
0+ \eps_{i,1}& 0+ \eps_{i,2} &\hdots & 1+ \eps_{i,\bar{n}}& \hdots &0+ \eps_{i,n}\\
0+ \eps_{i+1,1}& 0+ \eps_{i+1,2} &\hdots & 0+ \eps_{i+1,\bar{n}}& \hdots &0+ \eps_{i+1,n}\\
\vdots& \vdots &\vdots & \vdots& \vdots &\vdots\\
0+ \eps_{n,1}& 0+ \eps_{n,2} &\hdots & 0+ \eps_{n,\bar{n}}& \hdots &0+ \eps_{n,n}
\end{pmatrix}
= 
\begin{pmatrix} 
\tbr_1^\top\mathbf{v}_1 & \hdots & \tbr_1^\top\mathbf{v}_n\\
\vdots & \hdots & \vdots\\
\tbr_m^\top\mathbf{v}_1 & \hdots & \tbr_m^\top\mathbf{v}_n
\end{pmatrix}
\end{equation}
Where $\mathbf{\Pi}^\star_{\bar{n}}$ is some perturbed $\mathbf{I}_{n,\bar{n}}$ matrix such that $\sqrt{\sum_{i\in [n]}\sum_{i' \in [n]}|\eps_{i,i'}|^2} \leq \frac{2\sqrt{2}n\kappa}{(\sigma_{\bar{n}}(\tbR'))^2}$. To show the properties \ref{min_user_unsat} and \ref{maj_user_sat} hold for an $\tbR$ that satisfies our $\calM$ assumptions, it is useful to define some bounds on how far off $\tbr_u^\top\mathbf{v}_i$ may be from $\tilde{r}_{u,i}$ with respect to a bound on the norm of the perturbation.\\

\begin{claim}\label{g_maj_b_min_claim_1}
    Fix any $u \in [m]$ and any $i\in [\bar{n}]$ and define an upper bound $x \geq ||\mathbf{\eps}_i||_2$ where $\mathbf{\eps}_{i} \in \R^n$ is the $i$th column of perturbations. The estimate of $\tilde{r}_{u,i}$, $\tbr_u^\top \mathbf{v}_i$, is lower bounded: $\tilde{r}_{u,i} - \sqrt{n}x \leq \tbr_u^\top \mathbf{v}_i$.
\end{claim}
\begin{proof}[Proof of Claim~\ref{g_maj_b_min_claim_1}]
    We can rewrite $\tbr_u^\top \mathbf{v}_i = \tilde{r}_{u,i} + \tilde{r}_{u,i}\eps_{i,i} + \sum_{i' \in [n]\setminus i}\tilde{r}_{u,i'}\eps_{i',i}$. Construct a vector $\mathbf{a}\in \R^n$ such that when $\eps_{i',i} \leq 0$, $a_{i'} = -\eps_{i',i}$ otherwise $a_{i'} = 0$. Because $\tbR \in [0,1]^{m\times n}$ we have that 
    $\tbr_u^\top \mathbf{v}_i \geq \tilde{r}_{u,i} - \tilde{r}_{u,i}a_{i,i} - \sum_{i' \in [n]\setminus i}\tilde{r}_{u,i'}a_{i',i}$. Invoking the upper bound of 1 on $\tilde{r}_{u',i'}$: $\tbr_u^\top \mathbf{v}_i \geq \tilde{r}_{u,i} - \sum_{i' \in [n]}a_{i',i}$. By construction, $a_{i'} \geq 0$ $\forall i' \in [n]$. Thus equivalently: $\tbr_u^\top \mathbf{v}_i \geq \tilde{r}_{u,i} - ||\mathbf{a}||_1$. Because $a_{i'}$ are equal to $-\eps_{i',i}$ or 0, $||\mathbf{a}||_{1} \leq ||\mathbf{\eps}_{i}||_1$ where $\mathbf{\eps}_{i} \in \R^n$ is the $i$th column of perturbations and we can replace $\mathbf{a}$: $\tbr_u^\top \mathbf{v}_i \geq \tilde{r}_{u,i} - ||\mathbf{\eps}_{i}||_1$. From the l1-l2 norm inequality and the l2 norm bound on the perturbation column: $||\mathbf{\eps}_{i}||_1 \leq \sqrt{n}||\mathbf{\eps}_{i}||_2 \leq \sqrt{n}x$. We have: $\tbr_u^\top \mathbf{v}_i \geq \tilde{r}_{u,i} - \sqrt{n}x$.
\end{proof}
Additionally, we use an analogous proof to show the following claim as well:
\begin{claim}\label{g_maj_b_min_claim_2}
    Fix any $u \in [m]$ and any $i\in [\bar{n}]$ and define an upper bound $x \geq ||\mathbf{\eps}_i||_2$ where $\mathbf{\eps}_{i} \in \R^n$ is the $i$th column of perturbations. The estimate of $\tilde{r}_{u,i}$, $\tbr_u^\top \mathbf{v}_i$, is upper bounded: $\tilde{r}_{u,i} + \sqrt{n}x \geq \tbr_u^\top \mathbf{v}_i$.
\end{claim}
\begin{claim}\label{g_maj_b_min_claim_3}
    Fix any $u \in [m]$ and any $i\in \{\bar{n}+1,\dots, n\}$ and define an upper bound $x \geq ||\mathbf{\eps}_i||_2$ where $\mathbf{\eps}_{i} \in \R^n$ is the $i$th column of perturbations. The estimate of $\tilde{r}_{u,i}$, $\tbr_u^\top \mathbf{v}_i$, is upper bounded: $\sqrt{n}x \geq \tbr_u^\top \mathbf{v}_i$.
\end{claim}
\begin{proof}[Proof of Claim~\ref{g_maj_b_min_claim_3}]
    We can rewrite $\tbr_u^\top \mathbf{v}_i = \sum_{i' \in [n]}\tilde{r}_{u,i'}\eps_{i',i}$.Construct a vector $\mathbf{a}\in \R^n$ such that when $\eps_{i',i} \geq 0$, $a_{i'} = \eps_{i',i}$ otherwise $a_{i'} = 0$. Because $\tbR \in [0,1]^{m\times n}$ we have that 
$\tbr_u^\top \mathbf{v}_i \leq \sum_{i' \in [n]}\tilde{r}_{u,i'}a_{i',i}$. Invoking the upper bound of 1 on $\tilde{r}_{u',i'}$: $\tbr_u^\top \mathbf{v}_i \leq \sum_{i' \in [n]}a_{i',i}$. By construction, $a_{i'} \geq 0$ $\forall i' \in [n]$. Thus equivalently: $\tbr_u^\top \mathbf{v}_i \leq ||\mathbf{a}||_1$. Because $a_{i'}$ are equal to $\eps_{i',i}$ or 0, $||\mathbf{a}||_{1} \leq ||\mathbf{\eps}_{i}||_1$ where $\mathbf{\eps}_{i} \in \R^n$ is the $i$th column of perturbations and we can replace $\mathbf{a}$: $\tbr_u^\top \mathbf{v}_i \leq ||\mathbf{\eps}_{i}||_1$. From the l1-l2 norm inequality and the l2 norm bound on the perturbation column: $||\mathbf{\eps}_{i}||_1 \leq \sqrt{n}||\mathbf{\eps}_{i}||_2 \leq \sqrt{n}x$. We have: $\tbr_u^\top \mathbf{v}_i \leq \sqrt{n}x$.
\end{proof}

Now using the above claims we show that given any reported preference matrix $\tbR$ that satisfies assumption \ref{maj_users_high} and \ref{min_users_high}, if learner does $\bar{n}$-truncated SVD, properties \ref{maj_user_sat} and \ref{min_user_unsat} hold.\\
We first show that property \ref{maj_user_sat} holds. By property \ref{projection_identity}, the Frobenius norm for the whole perturbation is upper bounded. Thus, the $L_2$ norm for any individual perturbation vector is also upper bounded by the same value. Thus we invoke Claims \ref{g_maj_b_min_claim_1}, \ref{g_maj_b_min_claim_2}, and \ref{g_maj_b_min_claim_3} with the Frobenius norm bound: $x:= \frac{\Delta(\tbR, \barn)}{2\sqrt{n}}$. For a given majority user $u \in \calUmaj$, we want lower bounds on estimates for the popular top items and upper bounds on estimates for other items:
\begin{enumerate}
    \item $\forall i \in (\topitems(\tbR, u) \cap [\bar{n}])$ , $\tilde{r}_{u,i} - \frac{\Delta(\tbR, \barn)}{2} \leq \tbr_u^\top \mathbf{v}_i$
    \item $\forall i' \in (\topitems(\tbR, u)^{C} \cap [\bar{n}])$, $\tilde{r}_{u,i'} + \frac{\Delta(\tbR, \barn)}{2} \geq \tbr_u^\top \mathbf{v}_{i'}$
    \item $\forall i'' \in \{\bar{n}+1, \dots n\},\quad \frac{\Delta(\tbR, \barn)}{2} \geq \tbr_u^\top \mathbf{v}_{i''}$
\end{enumerate}
By definition of a majority user, $(i_{(top)}^{(u)} \cap [\bar{n}])$ is not empty. But by assumption \ref{maj_users_high}, $\nexists i, i', i''$ such that $\tbr_u^\top \mathbf{v}_i \leq \tbr_u^\top \mathbf{v}_{i'}$ or $\tbr_u^\top \mathbf{v}_i \leq \tbr_u^\top \mathbf{v}_{i''}$ for any $u \in \calUmaj$. It must be the case that $\mathrm{arg}\max_{i \in [n]}\hat{r}_{u,i} \subseteq (i_{(top)}^{(u)} \cap [\bar{n}]) \quad \forall u \in \calUmaj$ and property \ref{maj_user_sat} holds.
\\
Now we show that property \ref{min_user_unsat} holds. Invoking Claims \ref{g_maj_b_min_claim_1} and \ref{g_maj_b_min_claim_3} using property \ref{projection_identity} for the $L_2$ bound, for a given $u \in \calUmin$ we want the lower bound on estimates of popular items to compare to the upper bound on estimates for unpopular items:
\begin{enumerate}
    \item $\forall i \in [\bar{n}],\quad \tilde{r}_{u,i} - \frac{\Delta(\tbR, \barn)}{2} \leq \tbr_u^\top \mathbf{v}_i $
    \item $\forall i' \in \{\bar{n}+1, \dots n\},\quad \frac{\Delta(\tbR, \barn)}{2} \geq \tbr_u^\top \mathbf{v}_{i'}$
\end{enumerate}
But by Assumption \ref{min_users_high}, there exists at least one $i$ such that $\tbr^{\top}_u\mathbf{v}_i > \tbr^{\top}_u\mathbf{v}_{i'} \quad \forall i'$. Therefore $\mathrm{arg}\max_{i \in [n]}\hat{r}_{u,i} \subseteq [\bar{n}] \quad \forall u \in \calUmin$ and property \ref{min_user_unsat} holds.
\end{proof}

Now we have the key idea: in this regime, majority users get an actual top (popular) item, while minority users also get a popular item. Recall though, that tuples $\in \calM$ do not by definition satisfy the assumption that majority/minority groups do not share users Assumption (\ref{maj_min_users_ass}). Therefore, adding this final additional assumption, we have the following corollary directly about social welfare:

\begin{corollary}[Upper Bound on Social Welfare with Truthful Users] \label{myo_si_ub_finite}
    Additionally, if assumption \ref{maj_min_users_ass} holds, %
    we have:
    \begin{equation}
        \SW(\bR^\star, \alpha) \leq |\calUmin| \Rlower + \sum_{u\in \calUmaj} \max_{i\in[n]}r^\star_{u,i} < \text{Max SW Possible}
    \end{equation}
    Where $\Rlower := \max_{u\in \calUmin}\max_{i' \in [\bar{n}]}r^\star_{u,i'}$
\end{corollary}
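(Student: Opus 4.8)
The plan is to read the realized recommendations off Theorem~\ref{g_maj_b_min} and then sum the resulting per-user utilities, handling majority and minority users separately. Since $(\tbR,\barn)\in\calM$ and $\alpha\in\calG(\barn,\tbR)$, Proposition~\ref{rhat_equals_nbar} gives $k^\star=\barn$, so Theorem~\ref{g_maj_b_min} applies: every $u\in\calUmaj$ has $\argmax_{i\in[n]}\hat{r}_{u,i}\subseteq\topitems(\tbR,u)\cap[\barn]$, and every $u\in\calUmin$ has $\argmax_{i\in[n]}\hat{r}_{u,i}\subseteq[\barn]$. First I would push these from the $\argmax$ sets to the item actually served: because $\topitems^{\pop}(u)\subseteq\topitems(u)=\argmax_{i\in[n]}\hat{r}_{u,i}$ and $\Top(u)\in\topitems^{\pop}(u)$, the same inclusions hold with $\Top(u)$ in place of the $\argmax$ set. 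Using truthfulness $\bR^\star=\tbR$, this yields $r^\star_{u,\Top(u)}=\max_{i\in[n]}r^\star_{u,i}$ for $u\in\calUmaj$, and $r^\star_{u,\Top(u)}\le\max_{i'\in[\barn]}r^\star_{u,i'}\le\Rlower$ for $u\in\calUmin$.

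Next, Assumption~\ref{maj_min_users_ass} gives $\calUmaj\cap\calUmin=\emptyset$; together with the fact that every user's top item is either popular or unpopular (so $[m]=\calUmaj\cup\calUmin$), the pair $(\calUmaj,\calUmin)$ partitions $[m]$. Hence
\[
\SW(\bR^\star,\alpha)=\sum_{u\in\calUmaj}r^\star_{u,\Top(u)}+\sum_{u\in\calUmin}r^\star_{u,\Top(u)}=\sum_{u\in\calUmaj}\max_{i\in[n]}r^\star_{u,i}+\sum_{u\in\calUmin}r^\star_{u,\Top(u)},
\]
and bounding each minority term by $\Rlower$ yields $\SW(\bR^\star,\alpha)\le|\calUmin|\Rlower+\sum_{u\in\calUmaj}\max_{i\in[n]}r^\star_{u,i}$.

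For the strict separation from the optimum $\sum_{u\in[m]}\max_{i\in[n]}r^\star_{u,i}$ (each user handed a favorite item), I would argue at the level of each minority user rather than through the $\Rlower$ bound: exclusivity in Assumption~\ref{maj_min_users_ass} forces $\topitems(\bR^\star,u)\cap[\barn]=\emptyset$ for $u\in\calUmin$, so the best popular item is strictly worse than $u$'s favorite, $\max_{i'\in[\barn]}r^\star_{u,i'}<\max_{i\in[n]}r^\star_{u,i}$, whence $r^\star_{u,\Top(u)}<\max_{i\in[n]}r^\star_{u,i}$. Since $|\calUmin|\ge 1$ (again Assumption~\ref{maj_min_users_ass}), comparing the displayed identity with $\sum_{u\in[m]}\max_{i\in[n]}r^\star_{u,i}$ term by term gives the strict inequality.

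The computation is routine; the only points that need care are (i) converting Theorem~\ref{g_maj_b_min}'s statements about $\argmax$ sets into statements about the tie-broken recommendation $\Top(u)$ via $\topitems^{\pop}$, and (ii) obtaining the strict gap to the optimum from the per-minority-user strict inequality rather than from the looser $|\calUmin|\Rlower$ bound (which is itself not guaranteed to lie below the optimum). I do not anticipate a substantive obstacle.
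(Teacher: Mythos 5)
Your proposal is correct and follows essentially the same route as the paper: decompose $\SW(\bR^\star,\alpha)$ over the partition $\calUmaj\cup\calUmin$, use the two properties of Theorem~\ref{g_maj_b_min} (majority users get a truthful top item, minority users get some item in $[\barn]$), bound each minority term by $\Rlower$, and invoke Assumption~\ref{maj_min_users_ass} for strictness. Your extra care in passing from the $\argmax$ sets to the tie-broken $\Top(u)$ and in deriving the strict gap per minority user (rather than through the chained $|\calUmin|\Rlower$ bound, which indeed need not lie below the optimum) is a legitimate tightening of the paper's terser argument, which simply cites Assumption~\ref{maj_min_users_ass} for the strict inequality.
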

\begin{proof}
Let $\Topst(u) \in \topitems(\bR^\star, u)$ be some (truthfully) top item for a user $u$. Recall from the notation in the main body of our paper that $\Topk(u)$ represents the recommended item to user $u$ (according to $\widehat{\bR}$).
\begin{align*}
        \SW(\bR^\star, \alpha) &= \sum_{u\in [m]}r^\star_{u,\Topk(u)}\\
        &= \sum_{u\in \calUmin}r^\star_{u,\Topk(u)} + \sum_{u\in \calUmaj}r^\star_{u,\Topk^\star(u)} \tag{property \ref{maj_user_sat}}\\
        &= \sum_{u\in \calUmin}r^\star_{u,\mathrm{arg}\max_{i\in[\bar{n}]}\hat{r}_{u,i}} + \sum_{u\in \calUmaj}r^\star_{u,\Topk^\star(u)} &\tag{\text{property \ref{min_user_unsat}}}\\
        & \leq |\calUmin| \Rlower + \sum_{u\in \calUmaj}r^\star_{u,\Topk^\star(u)}
\end{align*}
By Assumption \ref{maj_min_users_ass}, the strict inequality holds as well.
\end{proof}
Intuitively, Theorem \ref{g_maj_b_min} and Corollary \ref{myo_si_ub_finite} highlight something concerning: when users are interact according to their personal interests with this type of $\alpha$ learner, majority users get their best recommendations, while minority users do not, instead they get recommended some popular item which does not reflect their greatest preferences. Therefore, we see that minority users specifically are missing out on perfect recommendations and maximized social welfare just like we've seen in online MC with \Cref{thm:rank_min_bad} and the main body results \Cref{g_maj_b_min2}!
\subsection{Improving top-1 social welfare via collective action}
Much like in the main body, we are interested in whether and how agents who are in the majority defined by $(\bfR^\star, \barn)$ could improve top-1 social welfare given the learner is $\alpha$-loss tolerant such that $\alpha \in \calG(\bR^\star, \barn)$ by falsifying ratings on just one minority item.

We consider strategic collective misreporting. Much like the main body of the paper, this collective action transforms $\mathbf{R}^\star$ into $\tbR$, which is the same matrix except the $\bar{n}+1$th vector has been changed. However, to generalize the ``uprating" in that section, we now no longer limit collective agents to only \emph{increasing} their rating. Rather, the relevant vector $\br^\star$ is changed to any $\tilde{\mathbf{r}} \in [0,1]^m$, with the constraint that only the changed elements are those corresponding to majority users participating in the collective.
\begin{remark}[Minority item reordering]
    WLOG, we discuss item $\barn + 1$ when it comes to collective strategy. Minority items can be reordered with no consequence.
\end{remark}
\begin{definition}[General Collective Rating]\label{def:gen_alt} 
Consider a ground truth preference matrix $\bR^\star$ such that for some $\barn$, $(\bR^\star, \barn) \in \calM$. WLOG, we consider a [general] collective rating strategy to be one in which the $\barn + 1$th column vector, $\bR^\star_{\barn + 1}$, is replaced with $\tbr$ under the constraints:
\begin{equation*}
    \tilde{r}_u = r^\star_{u, \barn + 1}\quad \forall u \in \calUmin, \quad \tbr \in [0,1]^m
\end{equation*}

Such uprating results in the learner receiving a strategically manipulated preference matrix, $\tbR \in [0,1]^{m\times n}$ instead of the true matrix $\bfR^\star$.
\end{definition}

Naturally, because the goal of manipulating ratings of item $\barn + 1$ is to help minority users who like it, it is important to establish that there exists enough [true] preference for item $\barn + 1$ such that it is ``worthwhile" manipulating. Recall that in the main body we have an assumption that collective uprating goes toward a ``picky'' item and minority users because these groups are particularly in danger of being poorly served by the recommender. For this appendix, we generalize this concept as well and define the relevant targeted item formally:

\begin{assumption}[Manipulated item is sufficiently liked] \label{ass:ea_item_liked}
For a given $(\bR^\star, \barn) \in \calM$, define $\calUswitch:= \{u: u\in \calUmin, (\bar{n}+1) \in \topitems(\bR^\star, u)\} \subseteq \calUmin$ to be the set of minority users who have a top item which is item $\bar{n}+1$ and assume $|\calUswitch| \neq 0$.
Assume the following is true of ground truth preferences:

$\exists \delta \in \R_{>0}$ such that:
    \begin{enumerate}
        \item For minority users whose top item is not $\barn + 1$, the variation of popular item ratings is not too large: 
        \begin{equation}\label{delta_cond_1}
            \sum_{u \in \calUmin\setminus \calUswitch}\max_{i \in [\bar{n}+1]}r^\star_{u,i} \leq \sum_{u \in \calUmin\setminus \calUswitch}\min_{i \in [\bar{n}+1]}r^\star_{u,i}  + \delta
        \end{equation}
        \item The switch users like item $\barn + 1$ sufficiently more than popular items: 
        \begin{equation}\label{delta_cond_2}
            \sum_{u \in \calUswitch}r^\star_{u,(\barn +1)}> \sum_{u \in \calUswitch}\max_{i \in [\bar{n}]}r^\star_{u,i} + \delta
        \end{equation}
    \end{enumerate}
\end{assumption}
Intuitively, the idea of Assumption \ref{ass:ea_item_liked} making it ``worthwhile'' to do collective action on item $\barn + 1$ is that the existence of the positive value $\delta$ formally quantifies the following: (1) the remaining minority agents won't be significantly hurt by the change of $\barn +1$ becoming a popular item and (2) switch users' experience a sufficient increase in utility from receiving item $\barn+1$ because, note, that by definition item $\barn + 1$ is the switch users' top item, but it is not necessarily a top item by a significant margin.

Just like the main body of the paper, we now derive sufficient conditions on $\tbr$ in order to improve social welfare beyond the personal interest baseline. Intuitively, these sufficient conditions represent the following: generalized collective rating increases the $\barn + 1$ singular value and thus ensures that $(\tbR, \barn + 1) \in \calM$ now allowing some minority users to become a part of the majority, giving these ``switch" users all the benefits of being majority under Theorem \ref{g_maj_b_min}.

Because we want conditions for $(\tbR, \barn + 1) \in \calM$ while assuming that $(\bR^\star, \barn) \in \calM$ we need $\Delta(\tbR, \barn + 1)$ based on the given $\bR^\star, \barn$. By extension, we need $\sigma_{\barn + 1}(\tbR'(\barn + 1))$. Rather than use this singular value directly, we use a lower bound, which may be calculated without taking the SVD of $\tbR'(\barn+1)$.

\begin{definition}[$\hat{\sigma}_{\barn + 1}(\tbR'(\barn + 1)$] 
For a given $(\bR^\star, \barn) \in \calM$, we estimate the revealed popular preferences matrix's $\barn +1$th singular value:
\begin{equation*}
    \hat{\sigma}_{\barn + 1}(\tbR'(\barn + 1)):=\sqrt{\min(\tilde{\mathbf{r}}^{\top}\tilde{\mathbf{r}}, \left[\sigma_{\bar{n}}(\bR^{\star'}({\bar{n}}))\right]^2) - ||\tilde{\mathbf{r}}^{\top}\mathbf{A}(\barn)||_2}
\end{equation*}
Where $\mathbf{A}(\barn) \in [0,1]^{m \times \barn}$ are the first $\barn$ columns of $\bR^\star$
\end{definition}

Note that this looks familiar to an estimate used in the $(\calU_\COLL, \eta)$-sufficient singular value gap of the main body. This is because this estimate comes from the same underlying mathematical machinery. This is made evident in the following proofs, but it is not necessary to understand for the theorems conceptually.

With this estimate in hand, we can proceed with our estimate of $\Delta(\tbR, \barn + 1)$, which we call $\Delta(\tbr; \bR^\star, \barn)$. 
\begin{definition}[Collective Sufficient Ratings Gap, $\Delta(\tbr; \bR^\star, \barn)$]
For a given $(\bR^\star, \barn) \in \calM$, we define the sufficient ratings gap needed for a particular collective strategy, $\tbr$, to be:
\begin{equation*}
\Delta(\tbr; \bR^\star, \barn) := \frac{2^{\frac{5}{2}}n^{\frac{3}{2}}\kappa_{(\bfR^\star, \bar{n}+1)}}{\left[\hat{\sigma}_{\barn + 1}(\tbR'(\barn + 1))\right]^2}
\end{equation*}
Where $\kappa := \max_{i' \in \{(\bar{n}+2), \dots, n\}}\|\bR^\star_{i'}\|_1$
\end{definition}

Our sufficient conditions for $\tbr$ ensures that the collective strategy is such that $(\tbR, \barn + 1) \in \calM$ and that learner selects $k^\star = \barn + 1$. The sufficient conditions can be evaluated without actually calculating any resulting singular values of $\tbR$, which may be expensive to do over the entire space of feasible $\tbr$ 

\begin{proposition}[Sufficient Conditions for Effective Collective Strategy]\label{suff_ea_M}
Let there be some personal interest preference matrix, $\bR^\star$, such that for some $\barn \in [n]$, $(\bR^\star, \barn) \in \calM$ and assumptions \ref{maj_min_users_ass} and \ref{ass:ea_item_liked} hold. Also let there be an $\alpha$-loss tolerant learner such that $\alpha \in \calG(\bR^\star, \barn)$. 
    
The following are sufficient conditions on $\tbr$ (definition \ref{def:gen_alt}) to ensure $\SW(\tbR, \alpha)> \SW(\bfR^\star, \alpha)$:
\begin{enumerate}
    \item Collective strategy is significant enough for recommender to learn another dimension of info
    \begin{enumerate}
        \item $\alpha < \hat{\sigma}_{\barn + 1}(\tbR'(\barn + 1))$
    \end{enumerate}
    \item Collective strategy does not overhype the niche item s.t. it is served to majority users.
    \begin{enumerate}
        \item $\forall u \in \calUmaj :\quad \tilde{r}_u < \max_{i \in [n]}r^\star_{u,i} - \Delta(\tbr; \bR^\star, \barn)$
    \end{enumerate}
    \item Majority users remain majority
    \begin{enumerate}
        \item $\forall u \in \calUmaj :\quad \max_{i \in [n]\setminus \topitems(\bR^\star, u)}r^\star_{u,i} < \max_{i \in [n]}r^\star_{u,i} - \Delta(\tbr; \bR^\star, \barn)$
    \end{enumerate}
    \item Switch users become majority
    \begin{enumerate}
        \item $\forall u \in \calUswitch:\quad \max_{i \in [n]\setminus \topitems(\bR^\star, u)}r^\star_{u,i} < r^\star_{u,\bar{n}+1} - \Delta(\tbr; \bR^\star, \barn)$
    \end{enumerate}
    \item Remaining minority users remain minority
    \begin{enumerate}
        \item $\forall u \in \calUmin\setminus \calUswitch: \quad 0 < \max_{i \in [\bar{n}+1]}r^\star_{u,i} - \Delta(\tbr; \bR^\star, \barn)$
    \end{enumerate}
\end{enumerate}
\end{proposition}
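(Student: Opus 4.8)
The plan is to show that the altruistic column $\tbr$ turns item $\barn+1$ into a \emph{popular} item for $(\tbR,\barn+1)$, so that the $\alpha$-loss tolerant learner retains the subspace carrying $\tbr$; this promotes the switch users $\calUswitch$ into the majority of $(\tbR,\barn+1)$, and then Theorem~\ref{g_maj_b_min} does the rest. I would organize this in four steps.

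\emph{Step 1: the surrogate singular value $\hat\sigma_{\barn+1}(\tbR'(\barn+1))$ is a genuine lower bound.} The matrix $\tbR'(\barn+1)$ has the same nonzero singular values as the $m\times(\barn+1)$ matrix $[\,\tbr \mid \mathbf{A}(\barn)\,]$ obtained by appending $\tbr$ to the popular block $\mathbf{A}(\barn)$ of $\bR^\star$. Passing to its Gram matrix $\left(\begin{smallmatrix}\tbr^\top\tbr & \tbr^\top\mathbf{A}(\barn)\\ \mathbf{A}(\barn)^\top\tbr & \mathbf{A}(\barn)^\top\mathbf{A}(\barn)\end{smallmatrix}\right)$ and applying the column-append bound Lemma~\ref{thm:singular_val_LB1}, I get $\sigma_{\barn+1}(\tbR'(\barn+1))^2 \ge \min\bigl(\tbr^\top\tbr,\ \sigma_{\barn}(\bR^{\star\prime}(\barn))^2\bigr) - \|\mathbf{A}(\barn)^\top\tbr\|_2 = \hat\sigma_{\barn+1}(\tbR'(\barn+1))^2$ (the quantity under the root is positive by sufficient condition~1 and $\alpha>0$, and $\sigma_{\barn}(\bR^{\star\prime}(\barn))>0$ since $(\bR^\star,\barn)\in\calM$, exactly as in the proof of Proposition~\ref{next_singular_val_low}). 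Because altruism only edits column $\barn+1$, we have $\kappa_{(\tbR,\barn+1)}=\kappa_{(\bR^\star,\barn+1)}$, so this also yields the monotonicity $\Delta(\tbR,\barn+1)\le\Delta(\tbr;\bR^\star,\barn)$; hence each sufficient condition written with $\Delta(\tbr;\bR^\star,\barn)$ implies its counterpart with the ``true'' gap $\Delta(\tbR,\barn+1)$.

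\emph{Step 2: the learner chooses $k^\star=\barn+1$.} Deleting columns only decreases singular values (Lemma~\ref{thm:remove_cols}, Corollary~\ref{relations_singular_vals}), so $\sigma_{\barn+1}(\tbR)\ge\sigma_{\barn+1}(\tbR'(\barn+1))\ge\hat\sigma_{\barn+1}(\tbR'(\barn+1))>\alpha$ by condition~1, giving $k^\star\ge\barn+1$. For the reverse, delete the $\barn+1$ popular columns of $\tbR$: the remaining block $\mathbf{B}$ has $n-\barn-1$ columns, all agreeing with $\bR^\star$, and Corollary~\ref{relations_singular_vals} with the Frobenius-norm bound gives $\sigma_{\barn+2}(\tbR)\le\sigma_1(\mathbf{B})\le\|\mathbf{B}\|_F\le\sqrt{(n-\barn-1)\kappa_{(\tbR,\barn+1)}}\le\sqrt{(n-\barn)\kappa_{(\bR^\star,\barn)}}<\alpha$, the last inequality since $\alpha\in\calG(\barn,\bR^\star)$. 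So $k^\star=\barn+1$. Next I verify $(\tbR,\barn+1)\in\calM$. First reclassify users: by Assumption~\ref{maj_min_users_ass} an original majority (resp.\ minority) user's top set lies entirely in $[\barn]$ (resp.\ in $\{\barn+1,\dots,n\}$); condition~2 keeps $\tilde r_u<\max_i r^\star_{u,i}$ for majority users while minority rows are untouched, so the majority of $(\tbR,\barn+1)$ is exactly $\calUmaj\cup\calUswitch$ and its minority is exactly $\calUmin\setminus\calUswitch$. Then Assumption~\ref{maj_users_high} holds: for $u\in\calUmaj$ the top set is unchanged and the largest competing rating is $\max\{\max_{i\notin\topitems(\bR^\star,u)}r^\star_{u,i},\tilde r_u\}$, pushed below $\max_i r^\star_{u,i}-\Delta(\tbr;\bR^\star,\barn)\le\max_i\tilde r_{u,i}-\Delta(\tbR,\barn+1)$ by conditions~2 and~3; for $u\in\calUswitch$ all entries are truthful, $\barn+1$ is still a top item, and condition~4 is precisely the required gap. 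Assumption~\ref{min_users_high} holds by condition~5: for $u\in\calUmin\setminus\calUswitch$, $\max_{i\in[\barn+1]}r^\star_{u,i}>\Delta(\tbr;\bR^\star,\barn)\ge\Delta(\tbR,\barn+1)$. Thus $(\tbR,\barn+1)\in\calM$, and Theorem~\ref{g_maj_b_min} applies: each $u\in\calUmaj\cup\calUswitch$ is recommended a truly top-rated item (for switch users this is item $\barn+1$, with value $\max_i r^\star_{u,i}$), while each $u\in\calUmin\setminus\calUswitch$ is recommended some item in $[\barn+1]$.

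\emph{Step 3: welfare comparison.} Applying Theorem~\ref{g_maj_b_min} to $(\bR^\star,\barn)$ as well, write $\mu_u$, $\mu'_u$ for the items recommended to $u$ under $\bR^\star$ and $\tbR$. The majority terms cancel ($r^\star_{u,\mu_u}=r^\star_{u,\mu'_u}=\max_i r^\star_{u,i}$ for $u\in\calUmaj$), leaving
\[
\SW(\tbR,\alpha)-\SW(\bR^\star,\alpha)=\sum_{u\in\calUswitch}\!\bigl(r^\star_{u,\barn+1}-r^\star_{u,\mu_u}\bigr)+\!\!\sum_{u\in\calUmin\setminus\calUswitch}\!\!\bigl(r^\star_{u,\mu'_u}-r^\star_{u,\mu_u}\bigr),
\]
where $\mu_u\in[\barn]$ for all minority users and $\mu'_u\in[\barn+1]$ for the remaining minority users. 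Thus the first sum is at least $\sum_{u\in\calUswitch}\bigl(r^\star_{u,\barn+1}-\max_{i\in[\barn]}r^\star_{u,i}\bigr)>\delta$ by \eqref{delta_cond_2}, and the second sum is at least $-\sum_{u\in\calUmin\setminus\calUswitch}\bigl(\max_{i\in[\barn+1]}r^\star_{u,i}-\min_{i\in[\barn+1]}r^\star_{u,i}\bigr)\ge-\delta$ by \eqref{delta_cond_1}; adding, $\SW(\tbR,\alpha)-\SW(\bR^\star,\alpha)>0$, i.e.\ $\rho>1$, which is the claim. The delicate part is Step~2: tracking which users migrate between the majority and minority classes of $(\tbR,\barn+1)$ and, crucially, certifying the $\Delta$-gaps needed to place $(\tbR,\barn+1)$ in $\calM$ using only the computable surrogate $\hat\sigma_{\barn+1}(\tbR'(\barn+1))$ rather than the true post-manipulation singular value; the column-append estimate of Step~1 and the bound $\Delta(\tbR,\barn+1)\le\Delta(\tbr;\bR^\star,\barn)$ are exactly what make this go through.
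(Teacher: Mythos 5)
Your proposal is correct and follows essentially the same route as the paper's proof: the column-append bound (Lemma~\ref{thm:singular_val_LB1}) to certify $\hat{\sigma}_{\barn+1}(\tbR'(\barn+1))$ and hence $\Delta(\tbr;\bR^\star,\barn)\ge\Delta(\tbR,\barn+1)$, the column-deletion bounds to pin $k^\star=\barn+1$, membership of $(\tbR,\barn+1)$ in $\calM$ via conditions 2--5, and then Theorem~\ref{g_maj_b_min} plus Assumption~\ref{ass:ea_item_liked} for the welfare comparison (you phrase it as a difference, the paper as the ratio $\rho$). One small imprecision: the new minority of $(\tbR,\barn+1)$ need not be exactly $\calUmin\setminus\calUswitch$ (a switch user with another unpopular top item can lie in both classes), but condition 4 already supplies the Assumption~\ref{min_users_high} gap for such users, exactly as the paper notes, so this is immediately patched.
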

Before we present the proof, we discuss the meaning of the \Cref{suff_ea_M}. Because this setting is much mathematically richer, this version of the main body's \Cref{cor:suff_cond} is more complex, but we note that fundamentally they are conceptually similar. \Cref{suff_ea_M}'s 1st condition is this setting's equivalent to $\alpha < \sqrt{\min \{\sigma_{k_{\maj}}(\bR_{\maj}^\star)^2, \eta^2|\calU_\COLL|+\|\bR^\star_{i^\star}\|^2_2\}-\eta\sqrt{\bar{n}}\AV(\bR^\star, \calU_\COLL)}$ and serves the same functionality. The 2nd condition is the analog to $\eta < \kappa$. Finally, conditions 3 through 5 just ensure that the collective strategy is such that $(\bR^\star, \barn) \in \calM$, which is very likely not \emph{necessary}, but we impose it because we do not have guarantees about more general matrices that do not satisfy our mathematical conditions. In the main body, conditions like 3 through 4 are not needed for \Cref{cor:suff_cond} because the majority-minority matrix structure combined with the picky item is so simple that it continues to be very easy to mathematically reason about.

\begin{proof}[Proof of \Cref{suff_ea_M}]
We break this proof into the following claims:
\begin{claim}\label{suff_ea_M_claim_1}
    $\SW(\bfR^\star, \alpha)$ is upper bounded by: \[\sum_{u\in \calUmin}\max_{i\in[\bar{n}]}r^\star_{u,i} + \sum_{u\in \calUmaj}\max_{i\in[n]}r^\star_{u,i}\]
\end{claim}
\begin{claim}\label{suff_ea_M_claim_2}
    The collectively transformed matrix and $\barn+1$ index falls into the popularity gap class, $\calM$. Formally: \[(\widetilde{\mathbf{R}}, \bar{n}+ 1) \in \mathcal{M}\]
\end{claim}
\begin{claim}\label{suff_ea_M_claim_3}
    $\SW(\tbR, \alpha)$ is bounded from below by:
    \[\sum_{u\in \calUmin\setminus \calUswitch}\min_{i\in[\bar{n}+1]}r^\star_{u,i} + \sum_{u\in \calUmaj\cup \calUswitch}\max_{i\in[n]}r^\star_{u,i}\]
    and thus yields desired (strict) inequality.
\end{claim}
\begin{proof}[Proof of Claim~\ref{suff_ea_M_claim_1}]

By assumption, $(\mathbf{R}^\star, \bar{n}) \in \mathcal{M}$ and $\alpha \in \calG(\bR^\star, \barn)$ thus
\begin{comment}
$\sqrt{(n-\bar{n})\kappa_{(\bfR,\bar{n})}} < \alpha$. For any feasible $\tilde{\mathbf{r}}$:
\begin{equation*}
    \begin{aligned}
        \alpha^2 &< \min(\tilde{\mathbf{r}}^{\top}\tilde{\mathbf{r}}, 4n\sqrt{n}\kappa_{(\bfR,\bar{n})}) - ||\tilde{\mathbf{r}}^{\top}\mathbf{P}||_2\\
        &< 4n\sqrt{n}\kappa_{(\bfR,\bar{n})} - ||\tilde{\mathbf{r}}^{\top}\mathbf{P}||_2\\
        &< (\sigma_{\bar{n}}(\mathbf{R}))^2 \quad \text{proposition \ref{next_singular_val_low}}
    \end{aligned}
\end{equation*}
Thus we have that $\alpha \in (\sqrt{(n-\bar{n})\kappa_{(\bfR,\bar{n})}}, \sigma_{\bar{n}}(\mathbf{R}))$ and 
\end{comment}
by proposition \ref{rhat_equals_nbar}, if the users were to submit preferences truthfully according to personal interest, the learner reduces to rank $\bar{n}$. Thus, by proposition \ref{g_maj_b_min}:
\begin{equation*}
        \mathrm{arg}\max_{i \in [n]}\hat{r}_{u,i} \subseteq \topitems(\bR^\star, u) \cap [\bar{n}] \quad \forall u \in \calUmaj
    \end{equation*}
\begin{equation*}
         \mathrm{arg}\max_{i \in [n]}\hat{r}_{u,i} \subseteq [\bar{n}] \quad \forall u \in \calUmin
    \end{equation*}
Which directly yields the desired social welfare upper bound because majority users get their actual maximum value, while minority users cannot do any better than their maximum value amongst the popular items, which is strictly less than the actual maximum value over all items by assumption \ref{maj_min_users_ass}.
\end{proof}
\begin{proof}[Proof of Claim~\ref{suff_ea_M_claim_2}]
Recall that a preference matrix $\mathbf{R}$ such that $(\mathbf{R}, \bar{n})\in \mathcal{M}$ looks like this:
\begin{equation}
    \mathbf{R} = 
\begin{pmatrix} 
\mathbf{P} & \mathbf{U}
\end{pmatrix}
\end{equation}
Where $\mathbf{P} \in [0,1]^{m\times \bar{n}}$ and $\mathbf{U} \in [0,1]^{m\times (n-\bar{n})}$ are the matrices of popular and unpopular item ratings respectively.
Construct the following:
\begin{equation}
    \mathbf{X} = 
\begin{pmatrix} 
\tilde{\mathbf{r}} & \mathbf{P}
\end{pmatrix}
\end{equation}
Where $\tilde{\mathbf{r}} \in [0,1]^{m\times 1}$ is the $\bar{n}+1$ modified column vector of $\mathbf{R}^\star$ (ie. the 1st column of $\mathbf{U}$) to represent majority users' collective strategy. Thus $\mathbf{X} \in [0,1]^{m \times (\bar{n}+1)}$. 
Let $\mathbf{A}:= \mathbf{X}^{\top}\mathbf{X}$. Thus we clearly have 
\[\mathbf{A} = \begin{pmatrix}c & \mathbf{a}^\top\\
    \mathbf{a} & \mathbf{M}\end{pmatrix}\]
Where:
\begin{enumerate}
    \item $\mathbf{M} = \mathbf{P}^{\top}\mathbf{P} \in \R ^{\bar{n}\times \bar{n}}$
    \item $\mathbf{a}^\top := \tilde{\mathbf{r}}^{\top} \mathbf{P} \in [0,1]^{1 \times \bar{n}}$
    \item $c := \tilde{\mathbf{r}}^{\top}\tilde{\mathbf{r}} \in \R$
\end{enumerate}
Note that the eigenvalues of $\mathbf{A}$ would be the same as the squared nonzero singular values of $\tbR'({\bar{n}+1})$. Thus we can use the lower bound given by Lemma \ref{thm:singular_val_LB1} to get a lower bound on $\bar{n}+1$th singular value of $\tilde{\mathbf{R}}'({\bar{n}+1})$. From Lemma \ref{thm:singular_val_LB1}:
\begin{equation*}
\left[\sigma_{\bar{n}+1}(\tbR'({\bar{n}+1})\right]^2 \geq \min(\tilde{\mathbf{r}}^{\top}\tilde{\mathbf{r}}, \left[\sigma_{\bar{n}}(\bR^{\star'}({\bar{n}}))\right]^2) - ||\tilde{\mathbf{r}}^{\top}\mathbf{P}||_2 = \left[\hat{\sigma}_{\barn + 1}(\tbR'(\barn + 1))\right]^2
\end{equation*}
Note that this means that our estimate of delta:
\begin{equation*}
    \Delta(\tbr; \bR^\star, \barn) \geq \Delta(\tbR, \barn + 1)
\end{equation*}
So now that we've established that our estimate is an upper bound on the true $\Delta$ our sufficient conditions clearly ensure that assumptions \ref{maj_users_high}, \ref{min_users_high} would hold on $(\tilde{\mathbf{R}}, \bar{n}+1)$ using the real $\Delta(\tbR, \barn + 1)$. We write this out explicitly below:\\
Assumption \ref{maj_users_high}:
This holds because the users who are the new majority under $(\tilde{\mathbf{R}}, \bar{n}+1)$ are now $u \in \calUmaj \cup \calUswitch$. 
\begin{enumerate}
    \item $\forall u \in \calUmaj :\quad \tilde{r}_u < \max_{i \in [n]}r^\star_{u,i} - \Delta(\tbr; \bR^\star, \barn) \leq \max_{i \in [n]}r^\star_{u,i} - \Delta(\tbR, \barn + 1)$
    \item $\forall u \in \calUmaj :\quad \max_{i \in [n]\setminus \topitems(\bR^\star, u)}r^\star_{u,i} < \max_{i \in [n]}r^\star_{u,i} - \Delta(\tbr; \bR^\star, \barn) \leq \max_{i \in [n]}r^\star_{u,i} - \Delta(\tbR, \barn + 1)$
    \item $\forall u \in \calUswitch:\quad \max_{i \in [n]\setminus \topitems(\bR^\star, u)}r^\star_{u,i} < r^\star_{u,\bar{n}+1} - \Delta(\tbr; \bR^\star, \barn) \leq r^\star_{u,\bar{n}+1} - \Delta(\tbR, \barn + 1)$
\end{enumerate}

Assumption \ref{min_users_high}:
Minority users is slightly more subtle because on $(\tilde{\mathbf{R}}, \bar{n}+1)$ the new minority group is $\subseteq \calUmin \cup \calUswitch$ (recall that majority and minority groups are not necessarily exclusive unless stated). However, once again by construction, the properties $\tilde{\mathbf{r}}$ satisfies ensure that assumption \ref{min_users_high} is satisfied on all $u \in \calUmin \cup \calUswitch$. Because $\forall u \in \calUswitch:$
\begin{equation*}
        \max_{i \in [n]\setminus \topitems(\bR^\star, u)}r^\star_{u,i} < r^\star_{u,\bar{n}+1} - \Delta(\tbr; \bR^\star, \barn) \leq r^\star_{u,\bar{n}+1} - \Delta(\tbR, \barn + 1)
\end{equation*}
Which automatically implies
\begin{equation*}
0<r^\star_{u,\bar{n}+1} - \Delta(\tbr; \bR^\star, \barn) \leq r^\star_{u,\bar{n}+1} - \Delta(\tbR, \barn + 1)
\end{equation*}
And then satisfaction of assumption \ref{min_users_high} for $u \in \calUmin\setminus \calUswitch$ follows directly from the the $\tilde{\mathbf{r}}$ properties again because $\Delta(\tbR, \barn + 1)$ is upper bounded by $\Delta(\tbr; \bR^\star, \barn)$

Of course it is also the case that $\tbR\in [0,1]^{m\times n}$ Thus we have $(\tbR, \bar{n}+1) \in \mathcal{M}$ as desired.
\end{proof}

\begin{proof}[Proof of Claim~\ref{suff_ea_M_claim_3}]
We now leverage the fact that $(\tilde{\mathbf{R}}, \bar{n}+1) \in \mathcal{M}$ to get a lower bound on social welfare. First, we prove that if this $\alpha$ learner sees $\tbR$, he reduces to rank $\barn + 1$. We have from assumptions that $\alpha \in (\sqrt{(n-\bar{n})\kappa_{(\bfR^\star, \bar{n})}}, \sqrt{\min(\tilde{\mathbf{r}}^{\top}\tilde{\mathbf{r}}, \left[\sigma_{\bar{n}}(\bR^{\star'}({\bar{n}}))\right]^2) - ||\tilde{\mathbf{r}}^{\top}\mathbf{P}||_2})$. We need to prove that this guarantees we also have: 
\[\alpha \in (\sigma_{\bar{n}+2}(\tilde{\mathbf{R}}),\sigma_{\bar{n}+1}(\tilde{\mathbf{R}}))\]
We start with the LHS:
\begin{align*}
    \sqrt{\min(\tilde{\mathbf{r}}^{\top}\tilde{\mathbf{r}}, \left[\sigma_{\bar{n}}(\bR^{\star'}({\bar{n}}))\right]^2) - ||\tilde{\mathbf{r}}^{\top}\mathbf{P}||_2} &\leq \sigma_{\bar{n}+1}(\tilde{\mathbf{R}}'({\bar{n}+1})) \tag{Claim~\ref{suff_ea_M_claim_2}}\\
    &\leq \sigma_{\bar{n}+1}(\tilde{\mathbf{R}}) \tag{Corollary \ref{relations_singular_vals}}
\end{align*}
Now the RHS: \\
\begin{align*}
        \sqrt{(n-\bar{n})\kappa_{(\bfR^\star, \bar{n})}} &\geq \sqrt{(n-\bar{n})\kappa_{(\tilde{\bfR}, \bar{n}+1)}} \tag{\text{Definition of $\kappa$}}\\
        &\geq \sigma_{\bar{n}+2}(\tilde{\mathbf{R}}) \tag{\text{Proposition \ref{next_singular_val_low}}}
\end{align*}
Thus the learner reduces to rank $\bar{n}+1$. Because $(\tilde{\bfR}, \barn +1)\in \calM$ and the learner rank reduces to $\barn$, we can now invoke proposition \ref{g_maj_b_min}:
\begin{equation*}
        \mathrm{arg}\max_{i \in [n]}\hat{r}_{u,i} \subseteq \topitems(\tbR, u) \cap [\bar{n}+1] \quad \forall u \in \calUmaj \cup \calUswitch
    \end{equation*}
\begin{equation*}
         \mathrm{arg}\max_{i \in [n]}\hat{r}_{u,i} \subseteq [\bar{n}+1] \quad \forall u \in \calUmin
    \end{equation*}
From this we get the lower bound we want because users $\in \calUmaj \cup \calUswitch$ receive their top item since we ensure  $\calUmaj$ top items are unchanged by the sufficient conditions that guarantee $\forall u \in \calUmaj :\quad \tilde{r}_u < \max_{i \in [n]}r^\star_{u,i}$ and ratings for users $\in \calUswitch$ are unchanged.

Users $u \in \calUmin \setminus \calUswitch$ might receive something as bad as their worst $i \in [\bar{n}+1]$ item:
\[\SW(\tbR, \alpha) \geq \sum_{u\in \calUmin\setminus \calUswitch}\min_{i\in[\bar{n}+1]}r^\star_{u,i} + \sum_{u\in \calUmaj\cup \calUswitch}\max_{i\in[n]}r^\star_{u,i}\]
Thus we have the following $\rho$ bound by invoking the assumption that the switch users sufficiently like their top item (Assumption \ref{ass:ea_item_liked}) (colored for clarity):
\begin{equation*}
    \begin{aligned}
        \rho &\geq \frac{LB(\SW(\tbR, \alpha))}{UB(\SW(\bR^\star, \alpha)}\\
        &= \frac{\sum_{u\in \calUmin\setminus \calUswitch}\min_{i\in[\bar{n}+1]}r^\star_{u,i} + \sum_{u\in \calUmaj\cup \calUswitch}\max_{i\in[n]}r^\star_{u,i}}{\sum_{u\in \calUmin}\max_{i\in[\bar{n}]}r^\star_{u,i} + \sum_{u\in \calUmaj}\max_{i\in[n]}r^\star_{u,i}}\\
        &= \frac{\sum_{u\in \calUmin\setminus \calUswitch}\min_{i\in[\bar{n}+1]}r^\star_{u,i} + {\color{orange}\sum_{u\in \calUswitch}\max_{i\in[n]}r^\star_{u,i} }+ \sum_{u\in \calUmaj}\max_{i\in[n]}r^\star_{u,i}}{{\color{red}\sum_{u\in \calUmin\setminus \calUswitch}\max_{i\in[\bar{n}]}r^\star_{u,i}} + \sum_{u\in \calUswitch}\max_{i\in[\bar{n}]}r^\star_{u,i} +\sum_{u\in \calUmaj}\max_{i\in[n]}r^\star_{u,i}}\\
        &>\frac{\sum_{u\in \calUmin\setminus \calUswitch}\min_{i\in[\bar{n}+1]}r^\star_{u,i} +{\color{orange} \delta + \sum_{u\in \calUswitch}\max_{i\in[\bar{n}]}r^\star_{u,i}} + \sum_{u\in \calUmaj}\max_{i\in[n]}r^\star_{u,i}}{{\color{red}\sum_{u\in \calUmin\setminus \calUswitch}\max_{i\in[\bar{n}]}r^\star_{u,i}} + \sum_{u\in \calUswitch}\max_{i\in[\bar{n}]}r^\star_{u,i} +\sum_{u\in \calUmaj}\max_{i\in[n]}r^\star_{u,i}} \quad {\color{orange}\text{equation \ref{delta_cond_2}}}\\
        &\geq \frac{\sum_{u\in \calUmin\setminus \calUswitch}\min_{i\in[\bar{n}+1]}r^\star_{u,i} + {\color{orange}\delta + \sum_{u\in \calUswitch}\max_{i\in[\bar{n}]}r^\star_{u,i}} + \sum_{u\in \calUmaj}\max_{i\in[n]}r^\star_{u,i}}{{\color{red}\delta + \sum_{u\in \calUmin\setminus \calUswitch}\min_{i\in[\bar{n}+1]}r^\star_{u,i} }+ \sum_{u\in \calUswitch}\max_{i\in[\bar{n}]}r^\star_{u,i} +\sum_{u\in \calUmaj}\max_{i\in[n]}r^\star_{u,i}}\quad {\color{red}\text{equation \ref{delta_cond_1}}}\\
        &= 1
    \end{aligned}
\end{equation*}
Which yields the desired $\rho > 1$
\end{proof}
This concludes the full proof as well.
\end{proof}
\subsubsection{Algorithms to find effective strategies?}
Now, we have shown sufficient conditions under which, in this more complex setting, collective strategy improves the system! For the real-world phenomenon, we can see that under these conditions (and likely broader ones, because these are not strictly necessary), \emph{generalized} collective strategy is theoretically supported. This collective strategy does not require \emph{all} users to uprate by a single shared value $\eta$, but rather allows for broader levels of participation across users. On its face, this is a ``better'' result than that which is in the main body, but it comes at a sacrifice--we do not have natural and simple algorithms to find an effective $\tilde{\br}$. Of course, this does not mean that real users do not have a reasonable way of playing an effective collaborative rating strategy in the real-world, after all, our conditions are only sufficient, so simpler strategies may work, and, as we show in our experiment \ref{sec:experiment}, this is the case to a large extent. That said, we do not have provable guarantees that a computationally efficient algorithm does not exist, so we leave this as an open question.

\subsection{Learner welfare}
Finally,to conclude with all of our analogous results, we discuss the learner welfare. Because the sufficient conditions that we present in this case yield to better user social welfare, all the questions about the learner from the main body still stand. That is, what does potentially incentivizing collective strategization mean for platform design? We briefly cover the analogous results, leaving most of the interpretation to the main body because it is largely the same.

\subsubsection{Personalization accuracy learner}
Recall that the utility for the personalization accuracy learner is based on how well it provides recommendations based on personal interest. Because it is also equivalent to our measure of social welfare and we have obviously already proven that collective strategy improves social welfare under the sufficient conditions, the following corollary follows directly:
\begin{corollary}[Collective increases utility for personalization accuracy learner 2]\label{cor:ben_learner2}
    Under the sufficient conditions for \Cref{suff_ea_M}, an $\alpha$-loss tolerant learner with a personalization accuracy-based utility function would achieve improved utility, i.e., $U_{\BEN}^{\COLL} > U_{\BEN}^{\TRUE}$.
\end{corollary}

\begin{proof}
    This follows directly from the definition of the personalization accuracy learner.
\end{proof}

\subsubsection{Engagement learner}
For the engagement learner, we need to make a slightly more nuanced argument. Recall that here we consider \emph{generalized} collective action strategies (\Cref{def:gen_alt}), whose only constraint is that minority ratings for the chosen item stay the same. Thus, unlike our previous constraint that $\eta > 0$, we cannot guarantee that every strategy which satisfies the sufficient conditions of \Cref{suff_ea_M} increases total ratings (and thus engagement). However, we should acknowledge that this definition purposefully generalizes \emph{mathematically} as much as possible. That is, it is done in such as way to make the most general statement in \Cref{suff_ea_M} possible. Realistically, in the actual movements we study and cite in Section \ref{sec:intro}~\cite{tiktok_blm, mccall_booktoks_2022}, participating users seem to only \emph{increase} interactions with the target content (e.g., leave more comments, like more posts, etc), though to different degrees. Thus, considering the following realistic general collective strategy instead:
\begin{definition}[Realistic General Collective Rating]\label{def:gen_alt_real} 
Consider a ground truth preference matrix $\bR^\star$ such that for some $\barn$, $(\bR^\star, \barn) \in \calM$. WLOG, we consider a [general] collective rating strategy to be one in which the $\barn + 1$th column vector, $\bR^\star_{\barn + 1}$, is replaced with $\tbr$ under the constraints:
\begin{equation*}
    \tilde{r}_u = r^\star_{u, \barn + 1}\quad \forall u \in \calUmin, \quad \tbr \in [0,1]^m, and \quad \tilde{r}_u \geq r^\star_{u, \barn + 1}\quad \forall u \in [m]
\end{equation*}
\end{definition}
clearly \Cref{suff_ea_M} still goes through because any such realistic strategy is also feasible by \Cref{def:gen_alt}, but these realistic strategies are still much more general than those of the main body because users do not have to choose a single uprating value $\eta$! Using this version of general collective rating, we easily see the following:
\begin{proposition}[Collective increases utility for the engagement learner 2]\label{prop:en_learner2}
Under the sufficient conditions for \Cref{suff_ea_M} using \emph{realistic} general collective rating (\Cref{def:gen_alt_real}), an $\alpha$-loss tolerant learner with engagement-based utility would achieve improved utility: $U_{\EN}^{\COLL} \geq U_{\EN}^{\TRUE}$.
\end{proposition}
\begin{proof}
    When agents report truthfully according to personal engagement, $r_{ui}^\star = \tilde{r}_{ui}$. Under the collective strategy, $\tilde{r}_u \geq r^\star_{u, \barn + 1}$.
\end{proof}
\subsubsection{Implications}
As in the main body, mathematically, neither of these results are at all novel beyond what has already been presented in the earlier sections of this appendix. Rather, the point is simply to formalize the intuition that even though these collectives are fundamentally protests against the platform's algorithm, they can serve a positive role in the platform's agenda. On the personalization accuracy front, they improve platform accuracy for at no cost (to the platform) and in terms of engagement they, when implemented realistically, just increase the platform's ad space.

\subsection{Extra results for the popularity gap class}
In the previous section we concluded all of our analogous results using this setting. However, because, to our knowledge, our modeling of majority/minority users, popular/unpopular items, and the class $\calM$ is unique and potentially of future interest, we provide one additional result that we do not explicitly use for collective strategy narrative, but is potentially useful for any future work.

While we have explained how the $\calM$ class of tuples represents matrices with a popularity gap, it is not intuitively clear exactly what combinations of $\bR$ and $\barn$ may work. Can one $\bR$ have multiple values of $\barn$ such that $(\bR, \barn), (\bR, \barn') \in \calM$?
We can show conditions that, given $(\mathbf{R}, \bar{n}) \in \mathcal{M}$, for $\bar{n}' > \bar{n}$, $(\mathbf{R},\bar{n}') \notin \mathcal{M}$. There are perhaps other interesting propositions about this class we leave to future work.
\begin{proposition}[Greater $\bar{n}$ does not satisfy assumptions for $\mathcal{M}$] \label{greater_bar_n_ass}
Define 
    \[
    \underline{\kappa}_{(\bfR, \barn)}:=\min_{i' \in \{(\bar{n}+1), \dots, n\}}||\mathbf{R_{i'}}||_1
    \]
If a tuple $(\mathbf{R}, \bar{n})\in \calM$ and $\underline{\kappa}_{(\bfR, \barn)} > \frac{(n - \bar{n})\kappa_{(\bfR,\barn)}}{4\sqrt{2}n\sqrt{n}}$ then $\nexists \bar{n}' \in \{\bar{n} +1, \dots, n\}$ s.t. $(\mathbf{R}, \bar{n}')$ satisfies the assumptions of the previous subsection.
\end{proposition}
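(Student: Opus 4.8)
The plan is to reduce the claim to the necessary condition recorded in Remark~\ref{rem:pop_gap_class_why}: any tuple $(\bR,\barn')$ satisfying Assumptions~\ref{maj_users_high} and~\ref{min_users_high} must obey $2^{5/4}n^{3/4}\sqrt{\kappa_{(\bfR,\barn')}} < \sigma_{\barn'}(\bR'(\barn'))$ (equation~\ref{eqn:rem_model_2}). So I would fix an arbitrary candidate $\barn' \in \{\barn+1,\dots,n-1\}$ (the value $\barn'=n$ leaves no unpopular items, hence does not index a tuple of the kind the framework considers) and show that the stated hypothesis forces the \emph{reverse} strict inequality, thereby ruling out $(\bR,\barn')\in\calM$.

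Step one is to upper-bound $\sigma_{\barn'}(\bR'(\barn'))$. Writing $\mathbf{A}(\barn')\in[0,1]^{m\times\barn'}$ for the first $\barn'$ columns of $\bR$, the matrix $\bR'(\barn')$ is $\mathbf{A}(\barn')$ padded with zero columns, so its nonzero singular values coincide with those of $\mathbf{A}(\barn')$ and $\sigma_{\barn'}(\bR'(\barn'))=\min_{\|\mathbf{x}\|_2=1}\|\mathbf{A}(\barn')\mathbf{x}\|_2$. Evaluating at $\mathbf{x}=\mathbf{e}_{\barn+1}$ — legitimate because $\barn+1\le\barn'$ — gives $\sigma_{\barn'}(\bR'(\barn'))\le\|\bR_{\barn+1}\|_2$, and since all entries lie in $[0,1]$ we have $\|\bR_{\barn+1}\|_2^2\le\|\bR_{\barn+1}\|_1\le\kappa_{(\bfR,\barn)}$, the last inequality because $\barn+1$ is an unpopular index of $(\bR,\barn)$. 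Hence $\sigma_{\barn'}(\bR'(\barn'))^2\le\kappa_{(\bfR,\barn)}$.

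Step two is to lower-bound $\kappa_{(\bfR,\barn')}$. The index set $\{\barn'+1,\dots,n\}$ over which $\kappa_{(\bfR,\barn')}$ is a maximum is nonempty (as $\barn'<n$) and contained in $\{\barn+1,\dots,n\}$, so $\kappa_{(\bfR,\barn')}=\max_{i>\barn'}\|\bR_i\|_1\ge\min_{i>\barn}\|\bR_i\|_1=\underline{\kappa}_{(\bfR,\barn)}$. Now combine: the hypothesis rearranges to $\kappa_{(\bfR,\barn)}<\frac{4\sqrt{2}\,n\sqrt{n}}{n-\barn}\,\underline{\kappa}_{(\bfR,\barn)}$, and with $n-\barn\ge1$ together with the two bounds above,
\[
\sigma_{\barn'}(\bR'(\barn'))^2\;\le\;\kappa_{(\bfR,\barn)}\;<\;4\sqrt{2}\,n\sqrt{n}\,\underline{\kappa}_{(\bfR,\barn)}\;\le\;4\sqrt{2}\,n\sqrt{n}\,\kappa_{(\bfR,\barn')}\;=\;\bigl(2^{5/4}n^{3/4}\bigr)^2\kappa_{(\bfR,\barn')},
\]
using $4\sqrt{2}\,n\sqrt{n}=2^{5/2}n^{3/2}$. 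Taking square roots contradicts equation~\ref{eqn:rem_model_2} for $\barn'$ (Remark~\ref{rem:pop_gap_class_why}), so no $\barn'\in\{\barn+1,\dots,n-1\}$ can satisfy Assumptions~\ref{maj_users_high} and~\ref{min_users_high}, which is the claim.

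The argument is mostly bookkeeping; the one place to be careful is the upper bound on $\sigma_{\barn'}(\bR'(\barn'))$ — one must use the min-characterization of the $\barn'$-th singular value of the $m\times\barn'$ block $\mathbf{A}(\barn')$, observe that appending a column of small $\ell_1$ norm forces a correspondingly small trailing singular value, and keep track that the witness index $\barn+1$ lies in $\{1,\dots,\barn'\}\cap\{\barn+1,\dots,n\}$ so that both ``it is a column of $\mathbf{A}(\barn')$'' and ``its norm is bounded by $\kappa_{(\bfR,\barn)}$'' apply simultaneously. The degenerate case where $\barn'$ exceeds the rank of the popular block, forcing $\sigma_{\barn'}(\bR'(\barn'))=0$, needs no separate treatment, since equation~\ref{eqn:rem_model_2} then fails outright.
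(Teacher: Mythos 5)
Your proof is correct, and it reaches the paper's conclusion by the same overall frame (derive a violation of the necessary condition $2^{5/4}n^{3/4}\sqrt{\kappa_{(\bfR,\barn')}} < \sigma_{\barn'}(\bR'(\barn'))$ from Remark~\ref{rem:pop_gap_class_why}) but with a different key estimate. The paper bounds $\sigma_{\barn'}(\bR'(\barn'))$ from above by chaining the interlacing result (Corollary~\ref{relations_singular_vals}) with Proposition~\ref{next_singular_val_low}, i.e.\ $\sigma_{\barn'}(\bR'(\barn'))\le\sigma_{\barn+1}(\bR)\le\sqrt{(n-\barn)\kappa_{(\bfR,\barn)}}$, which is a Frobenius-norm bound on the whole unpopular block and is exactly matched to the $(n-\barn)$ factor appearing in the hypothesis. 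You instead use a single-column variational bound: the $\barn'$-th singular value of the popular block $\mathbf{A}(\barn')$ is at most $\|\bR_{\barn+1}\|_2$, and $\|\bR_{\barn+1}\|_2^2\le\|\bR_{\barn+1}\|_1\le\kappa_{(\bfR,\barn)}$ because entries lie in $[0,1]$ and column $\barn+1$ is unpopular for $(\bR,\barn)$. This is more elementary (no interlacing lemma needed), and in fact slightly stronger: your chain only requires $\kappa_{(\bfR,\barn)} < 2^{5/2}n^{3/2}\,\underline{\kappa}_{(\bfR,\barn)}$, so the stated hypothesis with its extra factor of $(n-\barn)$ is more than enough (you discard that factor via $n-\barn\ge1$). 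You also handle two points the paper leaves implicit: the case $\barn'=n$, where $\kappa_{(\bfR,\barn')}$ is a maximum over an empty set and the tuple falls outside the class definition ($0<\barn'<n$), and the degenerate case where $\mathbf{A}(\barn')$ is rank-deficient, where the necessary condition fails trivially. Both proofs are valid; the paper's buys direct reuse of its earlier singular-value lemmas, yours buys a shorter, self-contained argument under a weaker numerical hypothesis.
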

\begin{proof}
Define $\bar{n}' \in \{\bar{n} +1, \dots, n\}$, $\mathbf{R}'(\bar{n}') \in [0,1]^{m \times \bar{n}'}$ to be the matrix $\mathbf{R}$, but with all columns $i > \bar{n}'$ set to be 0 vectors, and $\kappa_{(\bfR, \bar{n}')}:=\max_{i' \in \{(\bar{n}'+1), \dots, n\}}||\mathbf{R_{i'}}||_1$. We want to show that $\sigma_{\bar{n}'}(\mathbf{R}'(\bar{n}')) < \sqrt{4\sqrt{2}n\sqrt{n}\kappa_{(\bfR, \bar{n}')}}$. If this is the case, it definitely cannot be true that $(\mathbf{R}, \bar{n}')$ satisfies the assumptions of the previous subsection because it would require the difference between top and next rating to be greater than 1.

By corollary \ref{relations_singular_vals} and proposition \ref{next_singular_val_low}, $\sigma_{\bar{n}+1}(\mathbf{R}'_{\bar{n}'}) \leq \sqrt{\kappa_{(\bfR,\barn)}(n-\bar{n})}$. Note that this is in terms of $\kappa_{(\bfR,\barn)}$ and not in terms of $\kappa_{(\bfR, \bar{n}')}$. Using the asssumption, we have that: $\frac{\underline{\kappa}_{(\bfR, \barn)}(4\sqrt{2}n\sqrt{n})}{n-\bar{n}}>\kappa_{(\bfR,\barn)}$. Thus we can write:
\[\sigma_{\bar{n}+1}(\mathbf{R}'_{\bar{n}'}) \leq \sqrt{\kappa_{(\bfR,\barn)}(n-\bar{n})} <\sqrt{4\sqrt{2}n\sqrt{n}\underline{\kappa}_{(\bfR, \barn)}} \leq \sqrt{4\sqrt{2}n\sqrt{n}\kappa_{(\bfR, \bar{n}')}}\]
Where the last inequality follows because $\underline{\kappa}_{(\bfR, \barn)}$ is minimum.
\end{proof}
\section{Generalization to top-k}\label{app:topk}
In the main body of the paper, we present all of our results specifically using top-$1$, but as we show in this section, this is primarily to simplify the mathematical formalization needed. Conceptually our primary results go through even for more general top-$k$ regimes. Here we provide some formal statements to support this claim.

\subsection{Top-k regime}
Instead of the recommendation phase as detailed in Protocol \cref{prot:learner}, let the learner recommend the top $k$ items to each user from $\widehat{\bR}$ rather than only the top 1. Assume the learner breaks ties analogously to the top-$1$ setting. 

Formally, the learner serves each user her (approximately) most preferred set of $k$ items: $\mathcal{J}_\texttt{topk}(u):= \text{arg}\max_{\mathcal{I}_k\subseteq [n], |\mathcal{I}_k| = k}\sum_{i \in \mathcal{I}_{k}} \hat{r}_{u,i}$. If there is only one top $k$ set, $\mathcal{J}_\texttt{topk}(u)$, then $\Topk(u)= \mathcal{J}_\texttt{topk}(u)$. If $|\mathcal{J}_\texttt{topk}(u)|>1$, i.e., there are multiple sets of $k$ items most preferred, the learner recommends uniformly randomly from the most popular of the top-$k$ sets: 
\begin{center}
        $\Topk(u) \sim \text{Unif}(\mathcal{J}^\texttt{pop}_\texttt{topk}(u)), \quad \text{where} \quad \mathcal{J}^\texttt{pop}_\texttt{topk}(u):= \arg \max_{\mathcal{I} \in \mathcal{J}_\texttt{topk}(u)} \sum_{i \in \mathcal{I}}\|\hat{\mathbf{R}}_i\|_1$
\end{center}
Using this analogous top-k version of the top-1 setting, we present a version of of existing theorems when $k \leq rank(\mathbf{R}^\star_{\maj})$

\subsection{Results extended to top-k}
\begin{corollary}[Extension of \Cref{g_maj_b_min2}]\label{cor:topk_g_maj_b_min}
    Assume $k \leq rank(\mathbf{R}^\star_{\maj})$. Let the ground truth preference matrix $\bR^\star$ be a majority-minority matrix satisfying Assumption~\ref{assumption:singular_value_gap} and $\bR^{\star} = \tbR$ (i.e., simple personal interest rating). If $\alpha$ is in the singular value gap (i.e. $\alpha \in \mathcal{G}(\bR^\star)$), then all majority users are accurately given their top $k$ items, while minority users are given popular items they do not like. Formally:
   \begin{center} 
        $\Topk(u) \in \arg\max_{\mathcal{I}_k\subseteq [n], |\mathcal{I}_k| = k}\sum_{i \in \mathcal{I}_{k}} r^*_{u,i} \quad \forall u \in \mathcal{U}_{\maj}
        \quad and \quad
        \Topk(u) \subseteq [\bar{n}] \quad \forall u \in \mathcal{U}_{\minor}$
    \end{center}
    Thus,
    \begin{center}
        $\SW(\mathbf{R}^\star, \alpha) = \sum_{u\in \mathcal{U}_{\maj}}\max_{\mathcal{I} \subseteq [n], |\mathcal{I}| = k}\sum_{i \in \mathcal{I}}r^*_{u,i}$
    \end{center}
\end{corollary}
\begin{proof}
In the proof of \Cref{g_maj_b_min2} we show that 
$\widehat{\mathbf{R}} = \begin{pmatrix}U_{\maj}\Sigma_{\maj}V_{\maj}^\top & \mathbf{0} \\
    \mathbf{0} & \mathbf{0} \end{pmatrix} = \begin{pmatrix} \mathbf{R}_{\maj}^\star & \mathbf{0} \\
\mathbf{0} & \mathbf{0}\end{pmatrix}$
Clearly because the rows of the of the matrix representing majority users are preserved, for all $u \in \mathcal{U}_{\maj}$, $\Topk(u)$ reflects the the true $k$ most-preferred items for each user. Note that, due to ties, it may be the case that there are multiple (truly) top $k$ sets and the learner will recommend one of these. 

For minority users, all items are tied to an estimated value of 0. Thus, like in the main theorem, we consider which $k$ sets are most popular. Recall from linear algebra, that the row vectors of $\mathbf{R}^\star_{\maj}$ span $rank(\mathbf{R}^\star_{\maj})$ dimensions. Thus, there must be at least $rank(\mathbf{R}^\star_{\maj})$ different items with some nonzero rating. Thus, at least $rank(\mathbf{R}^\star_{\maj})$ of the first $\bar{n}$ columns of $\widehat{\mathbf{R}}$ will have a positive l-1 norm. Because $k \leq rank(\mathbf{R}^\star_{\maj})$, this means that $\mathcal{J}^{\texttt{pop}}_{\texttt{topk}}(u)$ will consist of $k$ sets made up only of majority items.
\end{proof}

In the main body of the paper we frequently reference $\kappa:= \min_{u \in \mathcal{U}_{maj}}\max_{i \in [n]}r_{u,i}^\star$, this ensures that the collective agents do not uprate so much that they accidentally get recommended the minority item instead of their true favorite. For general top-k recommendation, we define a similar constant, 
\begin{center}
    $\kappa(k):= \min_{u \in \mathcal{U}_{maj}}r_{u,i}^{\star(k)}$ 
\end{center}
where $r_{u,i}^{\star(k)}$ uses the standard ordered statistic notation. Recall that for a given $u \in [m]$, there are $n$ ratings, so this can be stated as a vector $(r^\star_{u,1}, r^\star_{u,2}, \dots, r^\star_{u,n})$, the ordered statistic of this vector is $(r^{\star(1)}_{u,i}, r^{\star(2)}_{u,i'}, \dots, r^{\star(n)}_{u,i''})$

\begin{corollary}[Extension of \Cref{thm:SW_EA_Model1}]\label{thm:topk_collective}
 Assume $k \leq rank(\mathbf{R}^\star_{\maj})$. Let $\bR^\star$ be a majority-minority matrix with a picky item $i^\star>\bar{n}$ and suppose there is some collective uprating to value $\eta$ with collective $\calU_\COLL$ such that $\bR^\star$ gets a $(\eta, \calU_\COLL)$-sufficient singular value gap. If $\eta < \kappa(k)$ and $\alpha \in \mathcal{G}(\bR^\star, \calU_\COLL, \eta)$, then we have that
    \begin{center}$
    \Topk(u) \in \arg\max_{\mathcal{I} \subseteq [n], |\mathcal{I}| = k} \sum_{i \in \mathcal{I}}r_{u,i}^\star\quad \forall u \in \mathcal{U}_{\maj} \cup \mathcal{U}_{i^\star}$
    \end{center}
    \begin{center}
    and
    \end{center}
    \begin{center}$
    \Topk(u) \subseteq [\bar{n} + 1]\quad \forall u \in \mathcal{U}_{\minor}\setminus \mathcal{U}_{i^\star}$
    \end{center}
    Thus,
    \begin{center}$
    \SW(\widetilde{\mathbf{R}}, \alpha) = \sum_{u \in (\mathcal{U}_{\maj} \cup \mathcal{U}_{i^\star})}\max_{\mathcal{I} \subseteq [n], |\mathcal{I}| = k} \sum_{i \in \mathcal{I}}r_{u,i}^\star$
    \end{center}
\end{corollary}
\begin{proof}
From Claim \ref{claim:SW_EA_Model1_1} and \ref{claim:SW_EA_Model1_2} in the proof of \Cref{thm:SW_EA_Model1} we see that the approximation matrix, $\widehat{\mathbf{R}}$ will be as follows:
$\hat{r}_{u,i}=\begin{cases} r_{u,i}^\star & u \in \mathcal{U}_{\maj}, i \in [\bar{n}]\\
 \tilde{r}_{u,i} & u \in (\mathcal{U}_{\maj}\cup \mathcal{U}_{i^\star}), i=i^\star \\
 0 & ow
 \end{cases}$
equivalently: 
$\widehat{\mathbf{R}} = \begin{pmatrix}\widetilde{\mathbf{R}}_{\maj'} & \mathbf{0} \\ \mathbf{0} & \mathbf{0} \end{pmatrix}$

Because $\eta < \kappa(k)$, for all majority users, the ordering of the top $k$ elements are preserved, so clearly their recommended set reflects a set a that is in the true arg max. 

For the $\mathcal{U}_{i^\star}$ users, the ordering of their preferences is similarly preserved recall that the definition of picky users and items indicates that picky users only have a rating for the picky item and zeros elsewhere, so their recommended set must also be accurate to true preferences.

Finally, for users in the $\mathcal{U}_{\minor}\setminus \mathcal{U}_{i^\star}$, all items are tied to an estimated value of 0. Thus, like in the main theorem, we consider which $k$ sets are most popular. Notice from Claim \ref{claim:SW_EA_Model1_2} that $\widehat{R}$ still contains $\mathbf{R}^\star_{\maj}$ in the top left corner. Recall from linear algebra, that the row vectors of $\mathbf{R}^\star_{\maj}$ span $rank(\mathbf{R}^\star_{\maj})$ dimensions. Thus, there must be at least $rank(\mathbf{R}^\star_{\maj})$ different items with some nonzero rating. Thus, at least $rank(\mathbf{R}^\star_{\maj})$ of the first $\bar{n}$ columns of $\hat{\mathbf{R}}$ will have a positive l-1 norm. The $\bar{n} + 1$ column will also clearly have a postive l-1 norm. Because $k \leq rank(\mathbf{R}^\star_{maj})$, this means that $\mathcal{J}^\texttt{pop}_\texttt{topk}(u)$ will consist of $k$ sets made up only of the $\bar{n} + 1$th items. 
\end{proof}

\subsubsection{Conclusions}
We elaborate slightly further on what these results mean and what they imply for all our other statements about algorithmic computation of $\eta$ and learner welfare.

\xhdr{Why limit $k$?}
Of course, these results still do not apply to all possible $k \leq n$. First of all, its easy to see that $k$ being approximately $n$ would be silly, so it is not necessarily a reasonable goal to strive for in our theoretical results anyways. Yes, there are $n$ items, but if the recommender simply presents every user with $n$ items, then this task is trivial. If $k$ is slightly less than $n$ then the the task is no longer \emph{trivial}, but it does reflect a recommender system that is barely narrowing down items and $n$ may be very very large meaning that users are functionally getting no use out of interacting with the platform at all. Intuitively, we consider $k \leq rank(\mathbf{R}^\star_{\maj})$ to be a limitation on the number of recommended items that is more reasonable than both $1$ and $n$. Mathematically, this constraint ensures clean statements about minority users \emph{definitely} getting only popular items. More generally, we could get even looser with $k$, but it would require probabilistic statements because then, the number of recommended items would be so large that while minority users' recommendations would be \emph{dominated} by majority items, they would, by chance, contain certain minority items. Then we could still make similar statements about collective action, though in terms of \emph{expected} social welfare and recommendation, but we avoid doing so in favor of mathematical simplicity.

\xhdr{Implications for other results}
We have results on algorithms to find effective $\eta$s and the learner welfare. Because these extended theorems are so similar to the original statements, there is actually very little change to these results, so rather than providing formal statements we only allude in words. For the algorithm, notice that the main relevant change in sufficient conditions is that $\kappa\rightarrow \kappa(k)$. In terms of $\eta$ and $\calU_\COLL$ this is just a different constant, so all that would be necessary is to switch the parameter to be this new $\kappa$ as a function of $k$. For the learner welfare, also nothing would change. The first learner utility function is equivalent to a social welfare and under these top-$k$ collectives, we still see increased social welfare, so that result still holds. The second learner utility function comes as a result of $\eta > 0$ and for the top-$k$ regime, we have not changed what it means to do collective rating, so this result also holds.
\section{Survey}\label{app:survey}
Our survey supplements existing evidence of altruistic behavior in RecSys from grassroots movements and HCI research. Given small sample size and our focus on theoretical modeling, we do not aim to conduct a robust statistical analysis, but rather present examples of strategies and preliminary results on the scale of altruism. All methodological details and an overview of results are below.
\subsection{Methodology}
Our survey was IRB-exempt as an online survey to adults in the US.

We ran our survey to 100 US-based Prolific users on May 7th, 2025. Prior to the finalized version, we ran two pilot studies each of 5 users (all of whom were excluded from the final study) to ensure questions and format was understandable. Each participant was compensated \$$2.70$. Participants were pre-screened to ensure residence in the United States, a Prolific approval rate $\geq 95$, and a Prolific join date no later than Sept. 1st 2024.

Survey questions were divided into 5 sections: 1. Demographics, 2. Recommender System Use, 3. Self-interested Strategization, 4. Altruistic Strategization, and 5. Fairness Beliefs about Recommender Systems. The order in which participants received sections 3 and 4 were randomized. A full list of questions can be found in Appendix \ref{app:empirical_survey_questions}. To understand users' knowledge and theories about concepts such as collaborative filtering, we asked participants whether they believed their interactions with content affect their own and others' recommendations and how much. To understand whether users use this knowledge in order to interact strategically, we asked participants whether they ever \emph{intentionally} interact(avoid) content with the purpose of increasing(decreasing) its recommendation to themselves/others. To understand whether any strategic behavior may be driven by altruistic beliefs, we ask participants whether they believe accuracy of recommendations and promotion of content is fair across different user groups. We manually add theme/topic codes to textual responses. Authors manually added $\{0,1\}$ codes to indicate textual responses that mention boosting/promoting specific creators, charity, politics, harmful/misinformative content, and privacy.

\subsection{Results}
Survey time took participants an average of 8 minutes and 33 seconds and of 100 responders, all documented using some streaming, social networking, and/or music platform(s) that use recommendation algorithms. The vast majority of participants believed their own interactions would affect their future recommendations while a majority (though smaller) also believed their interactions affected others (Table \ref{tab:recommendation_effect}). 
\begin{table}[ht]
\caption{Number of participants (out of 100) and response to algorithmic impact questions.}\label{tab:recommendation_effect}
\centering
\begin{tabular}{lccc}
\toprule
Do your interactions affect... & Yes & No & Unsure \\
\midrule
your own future recommendations?&  92&  6&  2\\
other people's recommendations?&  57&  15& 28\\
\bottomrule
\end{tabular}
\end{table}

We asked about participants' underlying reasoning (if any) when interacting with content on platforms. Each participant was asked whether they have \emph{intentionally} interacted or avoided interacting with platform content with the explicit purpose of affecting their own or others' recommended feeds (Table \ref{tab:strategic_actions}).\footnote{See Appendix \ref{app:empirical_survey_questions} for definitions and wording as they appeared in the survey.}

\begin{table}[ht]
\caption{Number of participants (out of 100) and response to strategic interaction questions.}\label{tab:strategic_actions}
\centering
\begin{tabular}{lccc}
\toprule
Have you intentionally...  & Yes & No & Unsure \\
\midrule
interacted to affect \textbf{your} recommendations? &  68&  27&  5\\
avoided interacting to affect \textbf{your} recommendations? &  62&  30&  8\\
interacted to affect \textbf{others'} recommendations? &  20&  79&  1\\
avoided interacting to affect \textbf{others'} recommendations? &  20&  75&  5\\
\bottomrule
\end{tabular}
\end{table}

It is unsurprising that our results indicate a majority of surveyed users are strategic for their own recommendations as previous large-scale experiments already indicate this is a prominent behavior~\cite{cen2024measuringstrategizationrecommendationusers}. However, while HCI and real-world evidence also indicates the existence of \emph{interpersonal} strategic interactions~\cite{melo_booktok, karizat_algo_resistance, tiktok_blm, mccall_booktoks_2022}, it is heretofore unclear how common these are across all users. In our study, 32 users had intentionally interacted or avoided to influence other feeds (\Cref{tab:strategic_actions}). This is a surprisingly large group. Those who have intentionally interacted to affect others were fairly consistent in reasoning; 16/20 discussed promoting content from specific sources they liked or morally supported and 6/20 mentioned some form of charity. E.g., one user wanted to increase the reach of small creators:

\textit{``I just do this to support creators and help them grow"}

and another wanted to increase the reach of other users in need:

\textit{``I remember seeing a woman on TikTok who was raising money for a personal cause through GoFundMe, and she asked for support in getting her video on more people’s For You pages. I intentionally interacted with her post by liking, sharing, and leaving supportive comments. I also saved the video to my collection to boost its visibility."}

Of users who have intentionally avoided interactions to affect others' feeds, underlying reasoning is more varied. We evaluated textual responses for mentions of relevance to politics, the spread harmful information, or personal privacy. 7/20 textual responses mention political motivations. 

\textit{``Anything political, I 100\% REFUSE to interact with anything political other than to ad it to my filters or block because engaging in politics is just too dangerous."}

8/20 responses describe wanting to act as a filter for various types of content the user explicitly deems harmful

\textit{``I once avoided liking or commenting on a sensational news post on Facebook because I didn’t want to boost its visibility or contribute to spreading misinformation. I knew that interacting with it would make it more likely to appear in others’ feeds. By ignoring it, I hoped the platform’s algorithm would deprioritize it for others as well."}

2/20 users describe avoiding content so that others will not know that they are recommended this:

\textit{``I do not want people to see what I am interested in in my mental health feeds"}

\subsection{Prolific details and survey questions}
\textbf{Study label}: Survey\\
\textbf{Study name}: Users' Interactions with Content Recommendation Algorithms\\
\textbf{Study Description}: This INSTITUTION\_NAME research survey is a part of an experiment to understand people's interactions with algorithms on social media, streaming, and music platforms. You will be asked about your behavior and underlying reasoning when engaging with content recommended to you by these platforms' algorithms.
\subsubsection{Online survey questions}\label{app:empirical_survey_questions}
The survey consists of five blocks of questions: 
\begin{enumerate}
    \item The first block elicited demographic characteristics: age, employment status, education level, race, state of residence, ethnicity, and gender. 
    \item The second block asks questions about recommendation system use. We broke down recommender systems into three types: social media (e.g., Twitter, Instagram), streaming (e.g., Netflix, Hulu), and music platforms (e.g., Spotify, Sound Cloud). For each type, we asked (1) what specific platforms respondents used in the last week, and (2) how often they currently use those platforms. 
    \item The third block pertains to self-interested strategization on recommender systems. First, we elicit information about \emph{awareness}: do respondents believe that their interactions with platform contents impact their future recommendations both in general and for specific types of interactions (e.g., likes, comments, or subscriptions). Second, we ask about \emph{strategic behavior}: have respondents have  ever intentionally interacted (or avoided interacting) with content in order to impact their future recommendations. If they answer yes, we ask for frequency of this behavior and for them to provide an example. 
    \item The fourth block mirrors the previous block but asks about \emph{altruistic} strategization. We first ask about awareness: do respondents believe that their interactions with platform contents impact \emph{other people's} future recommendations both in general and for specific types of interactions. Second, we ask about strategic behavior: have respondents have  ever intentionally interacted (or avoided interacting) with content in order to impact future recommendations for \emph{other people}. If they answer yes, we ask for frequency of this behavior and for them to provide an example. 
    \item The last block asks about people's beliefs on the accuracy and fairness of recommender systems. We ask if they think (a) the accuracy of content recommendations and (b) the amount of content promotion is fair or not fair for different types of users, and if they think that companies should undertake efforts to increase fairness across users.  
\end{enumerate}
The following pages are the exact questions on the survey including survey logic. We have redacted information that may de-anonymize authors for the reviewing process
\includepdf[pages=-, scale=0.75]{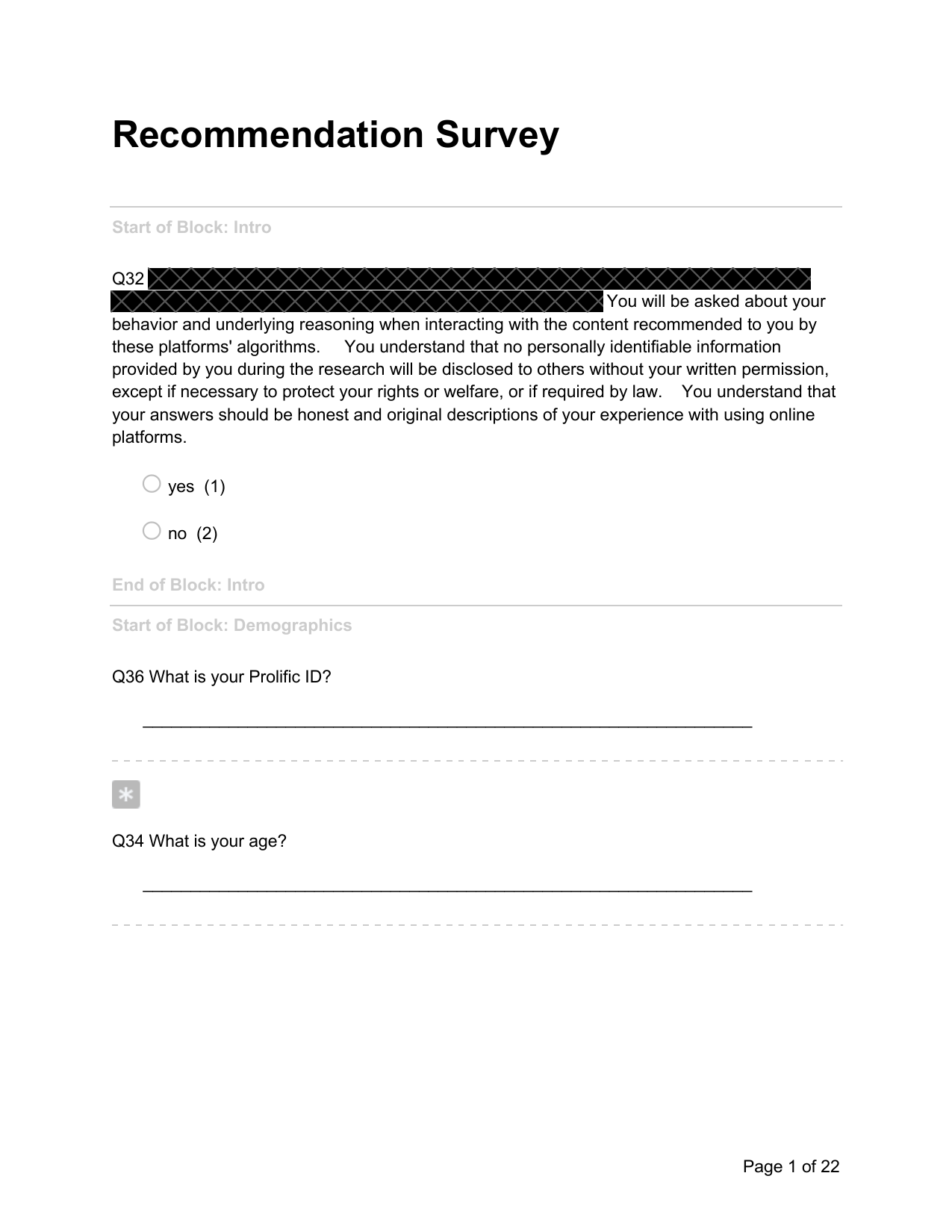}

\end{document}